\newtheorem{theorem}{Theorem}
\newtheorem{lemma}{Lemma}[section]
\newtheorem{claim}[lemma]{Claim}
\newtheorem{corollary}{Corollary}
\newtheorem{proposition}{Proposition}
\newtheorem{definition}{Definition}
\newcommand{\state}{\theta}
\newcommand{\statespace}{\Theta}
\newcommand{\scorebound}{B}
\newcommand{\themean}{\mu}
\newcommand{\mean}[1]{\themean_{#1}}
\newcommand{\statedist}{D}
\newcommand{\report}{r}
\newcommand{\reportspace}{R}
\newcommand{\marginalReport}{F}
\newcommand{\alloc}{x}
\newcommand{\pay}{p}
\newcommand{\priorMeanState}{\mean{\statedist}}
\newcommand{\Pmeani}{\mean{\statedist_i}}
\newcommand{\priorMeanMean}{\mean{\marginalReport}}
\newcommand{\priorMean}{\priorMeanState}
\newcommand{\utility}{u}
\newcommand{\util}{u}
\newcommand{\reals}{\mathbb{R}}
\newcommand{\expect}[2]{{\mathbf{E}}_{#1}\left[#2\right]}
\newcommand{\prob}[2]{{\mathbf{Pr}}_{#1}\left[#2\right]}
\newcommand{\indicate}[1]{\mathbf{1}\!\left[#1\right]}
\newcommand{\numasm}{n}
\newcommand{\mos}{{\rm max-over-separate }}
\newcommand{\argmax}{\arg\max}
\newcommand{\score}{S}
\newcommand{\objval}{c}
\newcommand{\distset}{\mathcal{F}_{\objval}}
\newcommand{\quadra}{S_q}
\newcommand{\OPT}{{\rm OPT}}
\newcommand{\de}{\mathrm{d}}
\newcommand{\objfunc}{{\rm Obj}}
\newcommand{\distoverposterior}{F}
\newcommand{\posterior}{G}
\newcommand{\utilquad}{\util_q}
\newcommand{\conv}{\mathrm{conv}}
\newcommand{\centerpoint}{C}
\newcommand{\relint}{\mathrm{relint}}
\newcommand{\sg}{\xi}
\newcommand{\norm}[1]{\left\lVert#1\right\rVert}
\newcommand{\abs}[1]{\left|#1\right|}
\newcommand{\boundaryReport}{\partial \reportspace}
\newcommand{\interiorReport}{\relint(\reportspace)}
\newcommand{\subgradient}{\nabla}
\newcommand{\reportFull}{\posterior}
\newcommand{\car}{{\rm choose-and-report }}
\newcommand{\plane}{P}
\newcommand{\sgset}{\mathcal{G}}
\newcommand{\statej}{\state^{(j)}}
\newcommand{\OPTdenom}[1][\priorMean]{\max(#1,1-#1)}
\newcommand{\maxzero}[1]{\max(#1,0)}
\newcommand{\sdscore}{{\hat{\score}}}
\newcommand{\sdutil}{{\hat{\util}}}
\newcommand{\sdutili}[1][i]{\sdutil_{#1}}
\newcommand{\sdkappa}{{\hat{\kappa}}}
\newcommand{\sdsg}{{\hat{\sg}}}
\newcommand{\extMarginalReport}{\tilde{\marginalReport}}
\newcommand{\extreportspace}{\widetilde{\reportspace}}
\newcommand{\extstatespace}{\widetilde{\statespace}}
\newcommand{\extutil}{\tilde{\util}}
\newcommand{\extreputil}{\hat{\util}}
\newcommand{\extsymstatespace}{\extstatespace'}
\newcommand{\extconvstatespace}{\extstatespace''}
\newcommand{\OPTUB}{\widetilde{\OPT}}
\newcommand{\ubc}{\tilde{\objval}}
\newcommand{\distsetUB}{\mathcal{F}_{\ubc}}
\newcommand{\inftynorm}[1]{\lVert#1\rVert_{\infty}}
\newcommand{\prior}{\statedist}
\newcommand{\dd}{\,{\rm d}}
\newcommand{\rbr}[1]{\left(#1\right)}
\title{Optimization of Scoring Rules\thanks{This work was supported by NSF CCF-1733860. 
Liren Shan was supported by NSF CCF-1955351 and CCF-1934931. 
Yingkai Li also thanks NSF SES-1947021 for financial support. 
A two-page abstract of this paper has appeared in the 23rd ACM Conference on Economics and Computation (EC'22). The order of the authors are certified random. The records are available in \url{https://www.aeaweb.org/journals/policies/random-author-order/search?RandomAuthorsSearch\%5Bsearch\%5D=4FJdnUr4sE80}.
The authors thank Tan Gan, Yingni Guo, Ryota Iijima, Nicolas Lambert, Jonathan A. Libgober, Xiaoyun Qiu, Asher Wolinsky, Boli Xu, Kai Hao Yang and the audience at the 33th Stony Brook International Conference on Game Theory, Yale Micro Theory Lunch and Seminars in Economic Theory for helpful suggestions. 
}}
\author{Yingkai Li\thanks{Cowles Foundation for Research in Economics and Department of Computer Science, Yale University, New Haven, CT, United States, 06511.
Email: \texttt{yingkai.li@yale.edu}.}
\and Jason D. Hartline\thanks{Computer Science Department, Northwestern University, Evanston, IL, United States, 60208. 
Email: \texttt{hartline@northwestern.edu}.}
\and Liren Shan\thanks{Toyota Technological Institute at Chicago, Chicago, IL, United States, 60637.
Email: \texttt{lirenshan@ttic.edu}.}
\and Yifan Wu\thanks{Computer Science Department, Northwestern University, Evanston, IL, United States, 60208.
Email: \texttt{yifan.wu@u.northwestern.edu}.}}
\date{}
\begin{document}

\maketitle
\begin{abstract}
We characterize the optimal reward functions (scoring rules) that incentivize an agent to acquire information and report it truthfully to the principal. The optimal scoring rules let the agent make a simple binary bet in single-dimensional problems, and choose the dimension with the most surprising signal to be scored on in symmetric multi-dimensional problems. This scoring rule format remains approximately optimal for asymmetric distributions. These results demonstrate the importance of linking incentives to obtain high-quality information in multi-dimensional problems. In contrast, standard scoring rules like the quadratic scoring rule, or averages of single-dimensional scoring rules can be far from optimal.
\end{abstract}

\textbf{Keywords}: scoring rules, information elicitation, value of information, incentivizing effort, mechanism design

\noindent\textbf{JEL}: D82, D83

\section{Introduction}
\label{sec:intro}
Information elicitation plays a pivotal role across many fields. In decision-making contexts, it mitigates uncertainty and enhances decision quality by enabling a more informative assessment of potential outcomes and associated risks. 
In education, it incentivizes active knowledge acquisition. For instance, well-designed rewarding mechanisms for assignments and tests can stimulate students to actively seek out information. 
In machine learning, it ensures the availability of high-quality training data, which is critical to the quality of the training outcome. 
Effective reward functions (a.k.a.\ loss functions) can help extract valuable information from machine learning algorithms and guide algorithms to converge towards informative points. 
In medicine, it helps the evaluation of diverse diagnostic tests and treatment options, leading to better-informed healthcare decisions and improved patient outcomes. 
Overall, information elicitation incentivize agents to achieve better learning outcomes
and empowers decision-makers to make more informed choices
in their respective domains.

We study the optimization of scoring rules with the application of information elicitation in a canonical principal-agent setting, where an agent holds a subjective belief about an unknown state that can be verified later. For instance, an investor may seek advice (subjective beliefs) from a forecaster regarding the future stock market (verifable state) to make informed investment decisions. The principal’s goal is to design a state-contingent reward function (a.k.a., scoring rule) to elicit the forecaster’s subjective belief. By revelation principal, it is without loss of generality to focus on proper scoring rules where the forecaster is incentivized to truthfully report his belief to the principal. As pointed out by \citet{GR-07} and \citet{FK-19}, any decision problem of the agent can be viewed as a scoring rule, and hence the characterization of optimal scoring rules can be viewed as quantifying the maximum value of the agent's information source. Therefore, the optimization of scoring rule problems introduced in our paper transcend the application of information elicitation, and can also be used as a versatile tool for investigating the importance and value of information in a wide range of decision environments.


A novel feature in our paper is to introduce moral hazard into the model of information elicitation. 
Specifically, given a proper scoring rule, the forecaster can privately decide whether to exert costly effort to improve the quality of information before disclosing his belief. 
The quality of information is crucial for the principal in various applications, including decision making, education and etc.
Consequently, the principal is faced with a dual design challenge: incentivizing the forecaster to acquire high-quality information and subsequently encouraging truthful reporting of that information to the principal.

Not all proper scoring rules are equally effective in incentivizing effort. 
For example, consider a stylized application to exam grading. 
In the exam, there are $n$ different questions, and the final score must be between 0 and 1. 
Each question is a binary choice with uniformly drawn correct answers of ``True'' or ``False''. 
The student can choose whether to study the course materials before the exam or not. 
If the student chooses to study, 
he will receive an independent signal for each question, indicating the correct answer with a probability of~$\epsilon>0$. 
Alternatively, if the student chooses not to study, he will not receive any informative signals.
The principal aims to incentivize students to study the course materials, as it leads to better learning outcomes. 
In this scenario, a naive approach is to score each question separately and then sum the scores. 
However, the maximum expected score difference between studying and not studying is at most~$\epsilon$, 
which upper bounds the incentive for the student to exert effort.
In contrast, as we will demonstrate in \cref{sec:peer grading}, by only scoring the question that the student feels most confident about,\footnote{More formally, the principal assigns a scoring rule to each question individually. The student then selects the question with the highest expected score based on his belief, and the principal only scores the student based on the chosen question.
} 
the principal can improve the expected score difference to $\frac{e-1}{2e}\approx 0.3$ when the number of questions $n\geq\frac{1}{\epsilon}$.
Therefore, by optimally designing the scoring rules, the principal can significantly enhance the student's incentive for effort, especially when the number of questions is large ($n$ is large) and all the questions are difficult ($\epsilon$ is small).
In particular, this example highlights the importance of linking decisions across different questions to incentivize effort \citep[c.f.,][]{JS-07};\footnote{Our result indicates the importance of linking decisions in an environment with moral hazard, where \citet{JS-07} showed its importance in an environment with pure adverse selection.} 
later we will demonstrate that this insight applies in a broad range of environments.

This paper provides a framework for the principal to optimize proper scoring rules for incentivizing effort.
We focus
on a simple binary effort model where the forecaster does or
does not exert effort,
and with this effort the forecaster obtains a
refined posterior distribution updated from the prior distribution on the
unknown state (e.g., by obtaining a signal that is correlated with the state). 
At the moment of making the effort decision, 
the forecaster is aware of 
both the distributions of posteriors that is obtained by exerting effort
and the prior when not exerting effort. 
We want a scoring rule that maximizes the difference in
expected scores for the posterior distribution and prior distribution.
While we do not explicitly model costs of effort, if the scoring rule maximizes the incentive for effort,
then it is adopted by forecasters with largest possible cost of effort. 
In this optimization program, 
we impose the ex post constraint that the score is in a
bounded range, i.e., without loss, between zero and one.\footnote{This constraint makes sense, e.g., for education applications. 
Subsequent studies of \citet{PW-22} show there is no qualitative difference in the solution with a bound in expected score.
}  
Notice that
this program would be meaningless without a bounded constraint on the score 
since otherwise the score could be scaled arbitrarily.


\paragraph{Main Results.}

For problems with large state space (e.g., continuous state space), reporting full distributions requires unbounded communication between the principal and the forecaster, 
which is too costly to be implemented in practice
\citep[e.g.,][]{arrow1974limits,mookherjee2014mechanism}. 
Instead, the principal may only wish to elicit statistics of the posteriors in those scenarios. 
In this paper, we will focus on scoring rules that elicit the expectation of a
multi-dimensional state, 
where the forecaster is simultaneously reporting
the marginal expectations of the state in all dimensions.
The elicitation of marginal expectations is particularly valuable in numerous applications, such as 
policy analysis aimed at understanding the average effects of different variables on various economic outcomes.


We show that the optimal scoring rule for eliciting the mean can be implemented in the form of a betting mechanism. 
Specifically, the forecaster can make any bets that are linear in the realized state, 
subject to the constraint that the expected score given prior belief does not exceed a pre-determined threshold. 
For example, if the state space is $[0,1]$, 
the principal can specify a threshold $c$ such that
the forecaster can make bets with parameters $a,b$, and receives
a score of $a\state+b$ when the realized state is $\state\in[0,1]$, 
under the constraint that the expected score of $a\state+b$ given the prior belief over $\state$ is at most $c$. 
These betting mechanisms have simple interpretations in both single-dimensional space with arbitrary distributions 
and multi-dimensional space with symmetric distributions.
Moreover, we show that the implementation of the optimal betting mechanisms only requires that the principal knows the prior mean, not the distribution over posterior means, 
in these two ideal models. 


For eliciting the mean in single-dimensional space, the optimal betting mechanism consists of two optional bets, 
and which bet is preferred depends on which side of the prior mean the posterior mean lies,
and thus the optimal mechanism induces a V-shaped interim utility with its lower tip at the expectation of the prior belief.



For multi-dimensional forecasting of the mean with symmetric distributions, 
our result highlights the importance of linking incentives in the design of multi-dimensional scoring rules, even when true distributions over states are independent \citep[c.f.,][]{JS-07}.
Specifically, the optimal scoring rule in rectangular state spaces
can be interpreted as a \emph{max-over-separate scoring rule}: 
the forecaster is scored on the dimension for which the forecaster's posterior in the optimal
single-dimensional scoring rule gives the highest expected utility.
This dimension is also the one with the most surprising posterior belief compared to the prior.
Equivalently, it can be implemented by letting the forecaster choose which dimension to report and be scored on.
(Recall, in our binary effort model, the agent either learns posteriors in all dimensions or none of them.)

For multi-dimensional forecasting without
a symmetry assumption, the optimal scoring rule may not exhibit simple representations.
Thus we focus on analyzing simple scoring rules and verifying whether those scoring rules are approximately optimal. 
The method of approximation \citep{Hartline-12} allows economic conclusions to be drawn from simple mechanisms that are near optimal when optimal mechanisms are complex. 

Our main treatment of multi-dimensional scoring rules shows that \mos scoring rules that are optimal in symmetric environments -- where the forecaster is scored by the single dimensional scoring rule corresponding to the dimension for which the posterior update is most surprising -- are within a constant factor of optimal in asymmetric environments.
Moreover, to select an approximately optimal \mos scoring rule, it is sufficient for the principal to know the prior mean instead of the details of the distribution over posteriors.
This is consistent with the idea of ``ideal approximations" in \citet{Hartline-12}, 
where the goal is to prove that the optimal solution in ideal models is approximately optimal in general environments. 
In contrast, the common-in-practice separate scoring rule -- where the score is the sum of separate scoring rules on each dimension -- can be a linear factor from optimal in some cases. 
This finding reinforces the intuition that linking decisions is crucial for incentivizing effort in multi-dimensional problems.


Our framework and results extend to settings with finite number of states where the principal wants to elicit the full distribution (instead of just the marginal means).
We show that the main insights we derived for eliciting the mean extend to eliciting the full distribution.
For example, optimal scoring rule for eliciting the full distribution can also be indirectly implemented as a betting mechanism, 
with the difference that the forecaster can make arbitrary bets for any state without the linearity constraint.\footnote{In fact, we show that the betting mechanism for eliciting the full distribution can be implemented as asking the agent to select a set of good states, a set of bad states and possibly a boundary state, where the good states are scored~$1$,  bad states are scored~$0$, and the boundary state if it exists is scored in the interior of $(0,1)$. }
Moreover, we show that by viewing each state as a separate dimension, it is also crucial to linking decisions across different states to incentivize effort
when eliciting the full distribution.

Finally, we show that 
the optimal incentives for the forecaster can be unboundedly smaller
for eliciting the mean compared to eliciting the full distribution. 
Therefore, even in applications where the only valuable information for the principal is the marginal means and communication is costly, 
the principal may still wish to elicit information beyond the mean (e.g., full information) and suffer from the additional communication cost 
in order to better incentivize the forecaster to exert effort.

\paragraph{Related Work.}

Characterizations of scoring rules for eliciting the mean and for
eliciting a finite-state distribution play a prominent role in our
analysis.  Previous works show, in various contexts, that scoring
rules are proper if and only if their induced utility functions are
convex.  \citet*{Mcc-56} and \citet{sav-71} characterized proper scoring rules for
eliciting the full distribution on a finite set of
states. \citet*{Osb-85} characterized continuously differentiable
scoring rules that elicit multiple statistics of a probability
distribution. \citet*{lam-11} characterized the statistics that admit
proper scoring rules and characterized the
uniformly-Lipschitz-continuous scoring rules for the mean of a
single-dimensional state. \citet*{AF-12} characterized the proper
scoring rules for the marginal means of multi-dimensional random
states in the interior of the report space.  We augment this
characterization by showing that the induced utility function
converges to a limit on the boundary of the report space.  This
augmentation enables us to write the mathematical program that
optimizes over the whole report space.


Most of the prior work looking at incentives of eliciting information
considers a fundamentally different model from ours.  This prior work
typically focuses on the incentives of the forecaster to exert effort
to obtain a signal (a.k.a., a data point), but then assumes that this
data point is reported directly (and cannot itself be misreported).
In this space, \citet*{CDP-15} consider the learning problem where the
principal aims to acquire data to train a classifier to minimize
squared error less the cost of eliciting the data points from
individual agents.  The mechanism for soliciting the data from the
agents trades off cost (in incentivizing effort) for accuracy of each
individual point.  \citet*{CILSZ-18} and \citet*{CZ-19} consider the
estimation of the mean of a population data.  Their objective is to
minimize the variance of the resulting estimator subject to a budget
constraint on the cost of procuring the data (from incentivizing
effort).

A few papers have considered incentivizing effort under a proper
scoring rule for a single-dimensional state.  \citet*{Osb-89} considers
incentivizing the forecaster to reduce variance under constraints that
result in the optimal scoring rule being quadratic.  \citet*{Z-11}
considers a slightly different single-dimensional model and derives that the optimal
scoring rule has V-shaped utility; our work begins with such a result
for our model.
\citet*{NNW-20} consider a
forecaster with access to costly samples of a Bernoulli distribution
and characterizes optimal scoring rules in the limit as the sample
cost approaches zero.  Detailed discussion of these results is
deferred to \Cref{app:related-work}.  Our main contrasting result is
the approximate optimality of the V-shaped scoring rule for binary
effort and forecasts over multi-dimensional state spaces.

Our paper also relates to the literature on evaluating forecasters \citep{deb2018evaluating,Das-23}. The main difference is that the agent in their models cannot exert costly effort to acquire additional information. The information of the agent is exogenous in \citet{Das-23} and is controlled by the principal in \citet{deb2018evaluating}. In both papers, they consider the optimization of scoring rules for optimally separating good types from bad types and show that the optimal solution for a single-dimensional binary state is similarly V-shaped or V-shaped with ironing.


There are several papers on optimizing scoring rules following the
model proposed in our paper.  \citet{HLSW-21a} extend the framework
to the setting where the agent's effort is multi-dimensional (e.g.,
corresponding to independent tasks) and the agent can independently
exert effort in each dimension.  The main result of this extension is
that our intuition for the benefits of linking incentives across different dimensions generalizes. 
The authors propose a generalization of
the V-shaped scoring rule that is approximately optimal, which
requires the agent to predict $k$ dimensions correctly instead of one 
(where $k$ is a constant depending on the primitives).
\citet{PW-22} design optimal scoring rules to minimize the expected payments to the agent 
under the constraint that certain social choice functions are implemented. 
This can be viewed as a dual problem to our objective of incentivizing effort subject to the budget constraints, 
and they provide polynomial time algorithms for computing the optimal scoring rules in their settings. 
\citet{CY-21} consider our objective of maximizing the incentives
of binary effort in a max-min design framework.  For example, they
show that the quadratic scoring rule is max-min optimal over a large
family of distributional settings.
\citet{Kong-21} generalizes the framework from single-agent scoring
rules to multi-agent peer prediction, i.e., without ground truth.  In
peer prediction, the designer needs to cross reference the reports of
different agents to verify the informativeness of the report.


Our paper is relevant to the literature on information acquisition in the absence of a principal-agent relationship,
which provides decision theoretic foundations for various information cost functions \citep{sims2003implications, caplin2022rationally, pomatto2023cost, bloedel2021cost}
and information values \citep{blackwell1953equivalent,FK-19}.
Specifically, we focus on the design of scoring rules that maximize the agent's information value, thereby providing stronger incentives for the agent to exert effort given his exogenous information cost.

With broad strokes, our work connects the studies of optimal
mechanisms and optimal scoring rules.  A few points of connection
are especially pertinent.  Characterizations of incentives in scoring
rules and multi-dimensional mechanisms are similar.  The
multi-dimensional characterization for mechanism design is given by
\citet*{roc-85}, 
and is similar to the analogous results for scoring rules by \citet{Mcc-56}. 
One of our main results shows that a good scoring
rule for a multi-dimensional state is the \mos scoring rule, while
averaging over separate scoring rules is far from optimal.  This
result parallels the main contribution of \citet*{JS-07}, that linking
independent decisions improves incentives in mechanism design.  This
result also connects the study of simple scoring rules to the study of simple mechanisms like
the bundling-or-selling-separately mechanism of \citet*{BILW-14}.
Finally, the polynomial time algorithms we give in online appendix for computing optimal
scoring rules (in the cases where we do not provide simple analytic
characterizations) are based on a similar result of \citet*{BCKW-15} for computing revenue optimal pricing of randomized allocations.

\section{Application to Exam Grading}
\label{sec:peer grading}
In this section, we consider a stylized application to exam grading. 
This stylized example illustrates that optimal scoring rules are better than standard scoring rules
when the signals are not very informative, 
or the probabilities of acquiring informative signals are low. 

In this application, there are $n$ questions in the exam with binary outcomes $\{0,1\}$. 
For simplicity, we assume the prior is $\frac{1}{2}$ for all questions. 
Each student will submit a prediction $\mu_i\in [0,1]$ for each question $i \in [n]$ in the exam.\footnote{Note that for binary states, reporting the mean is equivalent to reporting the full distribution.} 
A natural interpretation is that all $n$ questions are True/False questions, 
and the students report their belief for each question being True.
These predictions are then graded by a scoring rule with ground
truth provided by the course staff.
Naturally such a scoring rule needs to satisfy the boundedness
constraint, 
and we normalize the bound of the score to $[0,1]$. 
We assume that the effort of the student is binary in this setting, 
i.e., either she can choose not to study before the exam and submit her prior belief for all questions, 
or she can study before the exam to have informative signals for all the questions in the exam.
The goal of the principal is to design the grading scheme to incentivize the student to study before the exam. 

To simplify the exposition, we consider a specific information structure with parameters $\epsilon,\delta\in[0,\frac{1}{2}]$ in this section.
Specifically, if the student chooses to study before the exam, 
independently for each question $i$,
she will receive an informative signal for question $i$ that leads her belief to either $\mu_i=\frac{1}{2}-\delta$ or $\mu_i=\frac{1}{2}+\delta$ 
with probability $\epsilon$ each. 
With the remaining probability $1-2\epsilon$, the posterior belief remains the same as the prior, i.e., $\mu_i = \frac{1}{2}$.
In this example, parameter~$\epsilon$ measures how likely the agent can get an informative signal for each question (with~$\frac{1}{2}$ being the most likely)
and parameter~$\delta$ measures the informativeness of the signal (with~$\frac{1}{2}$ being the most informative). 

\paragraph{Single Question}
We first consider the case when there is only a single question, 
i.e., $n=1$. 
We omit the subscript $i$ in notations in this single question application. 
One of the commonly used scoring rules in practice is quadratic scoring rule.
Specifically, when the student reports belief $\mu$ and $\state\in\{0,1\}$ is the correct answer of the question, 
the final score of the student is: 
\begin{align*}
S_q(\mu,\state) = 1-(\mu-\state)^2.
\end{align*}
To compute the incentives for the student to study before the exam given quadratic scoring rule, 
the expected score for not studying and reporting the prior $\frac{1}{2}$ is 
$1-(\frac{1}{2})^2=\frac{3}{4}$, 
while the expected score for studying before the exam is 
\begin{align*}
(1-2\epsilon)\cdot \frac{3}{4}
+ 2\epsilon\cdot \rbr{\rbr{\sfrac{1}{2}-\delta}\cdot\rbr{1-\rbr{\delta+\sfrac{1}{2}}^2} + \rbr{\sfrac{1}{2}+\delta}\cdot\rbr{1-\rbr{\delta-\sfrac{1}{2}}^2}}
= \frac{3}{4} + 2\epsilon\delta^2.
\end{align*}
Therefore, the expected score difference is $2\epsilon\delta^2$.
This implies that the student is more likely to study if she is more likely to acquire an informative signal (when $\epsilon$ is larger), and the signal is more informative (when $\delta$ is larger). 

However, we will show that quadratic scoring rule can be far from the optimal for providing incentives for the student to study before the exam 
when the signals are not very informative. 
Specifically, consider the scoring rule where if the student's report $\mu\geq\frac{1}{2}$, 
she receives score $1$ when the correct answer is $1$ and $0$ otherwise. 
Moreover, if the student's report $\mu<\frac{1}{2}$, 
she receives score $1$ when the correct answer is $0$ and $0$ otherwise. 
Formally, when the report is $\mu$ and the correct answer is $\state$, 
\begin{align*}
\hat{S}(\mu,\state) = 
\begin{cases}
1 & \mu < \frac{1}{2} \ \&\  \state = 0\\
1 & \mu \geq \frac{1}{2} \ \&\  \state = 1\\
0 & \text{otherwise}.
\end{cases}
\end{align*}
It is easy to verify that the student has incentives to truthfully report his belief in the exam given this scoring rule. 
Moreover, 
the expected score for not studying and reporting the prior is $\frac{1}{2}$, 
while the expected score for studying before the exam is 
\begin{align*}
(1-2\epsilon)\cdot \frac{1}{2}
+ 2\epsilon\cdot \rbr{\frac{1}{2}+\delta}
= \frac{1}{2} + 2\epsilon\delta.
\end{align*}
Therefore, the expected score difference given scoring rule $\hat{S}$ is $2\epsilon\delta$.
Note that the multiplicative ratio between expected score difference given scoring rule $\hat{S}$ and quadratic scoring rule $S_q$
is $\frac{1}{\delta}$. 
When the signals are not very informative, 
i.e., when $\delta$ is small, 
the multiplicative ratio is large 
and the incentives provided by the quadratic scoring rule are poor compared to the optimal.
Therefore, implementing optimal scoring rules are necessary in practice when signals are not very informative. 

An interesting observation for the scoring rule $\hat{S}$ we constructed in this example
is that it can be implemented by only requiring the student to make a binary report. 
That is, in the application of exam grading, it is sufficient to let the student answer True or False for the question, 
and the answer True indicates that 
the student believes that the correct answer is more likely to be True than the prior. 
The final score of the student is $1$ if and only if her report matches the correct answer. 
In \cref{sec:single assignment}, we will show that simple scoring rules with binary reports are optimal for all single-question settings
even when there are multiple states with arbitrary prior distribution and arbitrary distribution over posterior beliefs.

\paragraph{Multiple Questions}
We consider the case of multiple questions where $n\geq 2$. A natural approach for multiple questions in the exam is to grade the questions separately. 
Moreover, since all questions are i.i.d., it is without loss to grade them equally. 
That is, each question has a score budget of $\frac{1}{n}$, and the final score of the student is the sum of scores from all $n$ questions.
We will show that grading questions separately will have poor incentives for the student to study if signals are unlikely to be informative, i.e., when~$\epsilon$ is small. 

To grade the questions separately, it is optimal to adopt the optimal single-question scoring rule $\hat{\score}$ separately for each question and then take the sum. 
Based on our discussion in the single-question example, 
the optimal scoring rule that grades the questions separately is 
\begin{align*}
\hat{S}_n(\mu,\state) = \sum_{i\in[n]} \frac{1}{n}\cdot\hat{S}(\mu_i,\state_i).
\end{align*}
It is easy to compute that the expected score difference between studying or not is $2\epsilon\delta$ given scoring rule $\hat{S}_n$. 

Now consider another scoring rule $\bar{S}_n$
where instead of grading the questions separately and taking the sum, 
the instructor lets the student choose which question she would prefer to be graded on in the exam. 
Specifically, when the student chooses question $i^*$ in the exam, 
her score is 
\begin{align*}
\bar{S}_n(\mu,\state) = \hat{S}(\mu_{i^*},\state_{i^*}).
\end{align*}
Given scoring rule $\bar{S}_n$, the student always chooses the question $i^*$ that maximizes her expected score based on her posterior belief $\mu$. 
In this example, the student will choose the question for which she receives an informative signal, i.e., such that her posterior belief for that question is updated to either $\frac{1}{2}-\delta$ or $\frac{1}{2}+\delta$. 
If she does not receive any informative signals, she will randomly guess one.
This scoring rule can also be implemented by having the student report her posteriors for all questions. The instructor, acting on the student's behalf, will then choose a question $i^*$ based on the reports to maximize the expected score (with randomness according to the reports over the realized state).

It is easy to show that the expected score difference given scoring rule $\bar{S}_n$ is 
$(1-(1-2\epsilon)^n)\cdot\delta$. 
Note that this score difference is strictly larger than $2\epsilon\delta$ 
for any $n\geq 2$ and $\epsilon,\delta > 0$. 
That is, scoring rule $\bar{S}_n$ outperforms scoring separately for any number of questions in the exam. 
Moreover, the multiplicative gap between $\bar{S}_n$ and $\hat{S}_n$
is large when the probability of an informative signal $\epsilon$ is small and $n$ is large. 
In particular, when $n\geq \frac{1}{2\epsilon}$, 
the multiplicative gap is 
\begin{align*}
\frac{1}{2\epsilon} \cdot (1-(1-2\epsilon)^n) 
\geq \frac{1-\sfrac{1}{e}}{2\epsilon}
\approx \frac{0.3}{\epsilon}.
\end{align*}
Therefore, compared to scoring rule $\bar{\score}_n$, the incentive for the student to study before the exam is poor given scoring rules that grade the questions separately 
when there are many questions in the exam 
and the probability of receiving an informative signal for each questions is small, 
i.e., the questions in the exam are hard. 
Therefore, implementing optimal scoring rules are necessary in practice when the probabilities of acquiring informative signals are low. 

The scoring rule $\bar{S}_n$ proposed in this example exhibits a notable feature wherein the student is graded only on a single question. This question corresponds to the one on which she feels most confident based on her beliefs,
and is also the question with the most surprising posterior. 
In \cref{sec:multi}, we show that this scoring rule is optimal in symmetric environments and approximately optimal in asymmetric environments even when there are multiple states and multiple signals for each question.

\section{Preliminaries}\label{sec:model}

This paper considers the problem of optimizing scoring rules. 
There is an unknown state $\state \in \statespace$ where 
$\statespace \subseteq \reals^{\numasm}$ is a compact set in $\numasm$-dimensional Euclidean space. 
The agent has a private belief $\posterior\in \Delta(\statespace)$ about the unknown state. 
Let $\mean{\posterior}\in\reals^n$ be the marginal means of belief $\posterior$. 
The principal can commit to a scoring rule $\score:\reportspace\times\statespace \to\reals$, 
which is a mapping from the agent's reports and
the realized state to a score for the agent, 
to elicit information from the agent regarding his subjective belief. 
Here $\reportspace$ is the report space of the agent. 
In this paper, we mainly focus on the case when the report space is the marginal means of all dimensions, i.e., $\reportspace= \conv(\statespace) \subseteq \reals^n$ 
where $\conv(\statespace)$ is the convex hull of the state space. 
In \cref{sec:full}, we characterize the optimal scoring rules for eliciting the full distribution 
where the report space $\reportspace = \Delta(\statespace)$ is the set of all possible posterior distributions. 

A scoring rule $\score$ is proper for eliciting the mean if the agent has incentives to truthful report the marginal means of his belief
given scoring rule $\score$.
\begin{definition}[Proper]\label{def:proper_mean}
A scoring rule $\score(\report, \theta)$ is \emph{proper for eliciting the mean}\footnote{Our notion of proper scoring rule is weakly proper rather than strictly proper.  Most of the literature on scoring rules does not have an objective and, to obtain non-trivial results, requires scoring rules to be strictly proper.  When optimizing scoring rules there is no meaningful difference between strictly proper and proper as the strictness can be arbitrarily small and therefore provide insignificant additional benefit.  Note that any weakly proper scoring rule can also be made strictly proper by taking an arbitrarily small convex combination with a strictly proper scoring rule.} if 
for any distribution~$\posterior$
and report $\report \in \reportspace$, 
we have 
\begin{align*}
\expect{\state\sim\posterior}{\score(\mean{\posterior}, \theta)}
\geq \expect{\state\sim\posterior}{\score(\report, \theta)}.
\end{align*}
\end{definition}

In addition to the proper constraint, we also require the scoring rule to be bounded. 

\begin{definition}[Boundedness]\label{def:bounded}
A scoring rule $\score(\report, \theta)$ is bounded by $\scorebound$
in space $\reportspace \times \statespace$
if 
$\score(\report, \theta) \in [0, \scorebound]$ for any 
report $\report \in \reportspace$ and state $\state \in \statespace$. 
\end{definition}

\paragraph{Binary Effort Model}
We consider a canonical binary effort model for the optimization of scoring rules. 
In this model, there is a prior distribution $\statedist \in \Delta(\statespace)$
over the unknown state $\state \in \statespace$. 
The prior distribution $\statedist$ is publicly known by both the agent and the principal. 
In addition, if the agent chooses to exert effort, 
the agent can privately observe an additional 
signal about the true state, which induces a posterior~$\posterior$.
We denote $\distoverposterior$ as the distribution over posteriors.

The goal for the principal is to design a bounded proper scoring rule that maximizes the difference in expected score between agents who exert effort and those who do not. 
Formally, given the maximum score of~$\scorebound$, the state space~$\statespace$ 
and the report space $\reportspace = \conv(\statespace)$,
the optimization program for maximizing the difference in expected score is\footnote{In the online appendix, we provide similar characterizations for a similar model where the ex post bounded score constraint is replaced with the bounded in expectation constraint.}
\begin{align}
\max_\score \qquad 
&\expect{\posterior \sim \distoverposterior, \state \sim \posterior}{\score(\mean{\posterior}, \state) - \score(\priorMean, \state)}\label[scoringruleprogram]{eq:mainprogram}\\
\text{s.t.}\qquad &\score \text{ is a proper scoring rule for eliciting the mean},\nonumber\\
&\score \text{ is bounded by $\scorebound$
in space $\reportspace \times \statespace$}.\nonumber
\end{align}

The above program aims to optimize the incentive for the agent to
exert effort.  Consider the situation where the agent has a private
stochastic cost for exerting effort and obtaining a signal of the true state. 
The agent will only choose to pay the cost if her expected
gain from obtaining the signal, i.e., the objective value in
\Cref{eq:mainprogram}, is higher than her cost.  By designing the
optimal scoring rule for \Cref{eq:mainprogram}, we also maximize the
probability that the agent chooses to pay the cost.  This paper will
not formally model such costs.
Next we simplify the program for eliciting the mean 
using known characterizations for proper scoring rules.

\subsection{Proper Scoring Rules for Eliciting the Mean}
\label{subsec:formal_program}\label{subsec:proper}

There is a canonical approach for constructing proper scoring rules for eliciting the mean.
In this section we focus on simplifying \Cref{eq:mainprogram} by restricting attention to canonical proper
scoring rules,
and then in \cref{apx:necessity_canonical} we show that this restriction is
without loss for the program under a mild technical condition. 
The following definition and
proposition are straightforward from first-order conditions and can be found, e.g., in \citet{AF-12}. 
We provide a geometric proof of Lemma~\ref{prop:canonical-proper} in Appendix~\ref{apx:prelim} for completeness.

\begin{definition}
  \label{def:canonical}
  A canonical scoring rule $\score$ for eliciting the mean is defined by convex
  utility function $\util : \reportspace \to \reals$ on report space
  $\reportspace$, subgradient $\sg : \reportspace \to
  \reals^{\numasm}$ of $\util$, and normalization function $\kappa : \statespace \to
  \reals$ on state space $\statespace$ as
  \begin{align}
    \label{eq:scoring-rule-construction}
    \score(\report, \state)= \util(\report) + \sg(\report) \cdot
    (\state-\report)+\kappa(\state).
  \end{align}
\end{definition}

\begin{lemma}[\citealp{AF-12}]
  \label{prop:canonical-proper}
  Canonical scoring rules are proper for eliciting the mean.
\end{lemma}

In any canonical scoring rule $\score$, normalization function $\kappa(\state)$ shifts the score based on the state 
and does not depend on the report, 
which does not affect the agent's incentive. 
By ignoring normalization function $\kappa$ for a moment, 
an interpretation of the canonical scoring rule for eliciting the mean 
is that the agent is essentially 
betting on being scored by hyperplanes. 
Specifically, the agent takes the bet on hyperplane
$\util(\report) + \sg(\report) \cdot (\state-\report)$
by making report~$\report$, 
and the score of the realized state is evaluated on this hyperplane.
The convexity of utility function $\util$ ensures that the expected score of the bet is maximized by reporting the posterior mean truthfully. 

\cref{prop:canonical-proper} implies that the scoring rule is essentially determined by the convex utility function $\util$. 
We say a scoring rule $\score$ is induced by utility function $\util$ if 
there exist subgradient $\sg(\report)$ and normalization function $\kappa(\state)$ such that 
\Cref{eq:scoring-rule-construction} holds. 


Given the format of canonical scoring rules, the following lemma allows the objective and the boundedness
constraint of \Cref{eq:mainprogram} to be simplified. 
In particular, this lemma
justifies referring to $\util$ as the agent's utility function
and allows us to reformulate the optimization problem in terms of the utility function $\util$ and its subgradient $\sg$ instead of the scoring rule~$\score$.
The idea of \cref{lem:canonical-bound} is illustrated in \Cref{f:single}(a) for single-dimensional states, 
and the proof of the general result is deferred to \cref{apx:prelim}.

\begin{lemma}
  \label{lem:canonical-expectation}\label{lem:canonical-bound}
  For any canonical scoring rule for the mean $\score$ (defined by
  $\util$, $\sg$, and $\kappa$), the expected utility from belief
  $\posterior$ and truthfully report of $\mean{\posterior}$ is
  \begin{align}
    \label{eq:canonical-expected-utility}
    \expect{\state \sim \posterior}{\score(\mean{\posterior}, \state)}
    = \util(\mean{\posterior}) + \expect{\state \sim
      \posterior}{\kappa(\state)}.
  \end{align}
Moreover, for any utility function $\util$ and
subgradient $\sg$, 
there exists a normalization function $\kappa$ such that canonical scoring rule $\score$ (defined by $\util$, $\sg$ and $\kappa$) 
satisfies the score bound $\scorebound$
if and only if for any report $\report\in\reportspace$
and state $\state\in\statespace$,
  \begin{align}
    \label{eq:canonical-bound}
    \util(\state)-\util(\report) - \sg(\report)\cdot(\state - \report) &\leq \scorebound. 
  \end{align}
\end{lemma}

We now derive the simplified program for canonical scoring rules.  The
following notation is sufficient to describe this simplified program
and is adopted throughout the paper.  
Recall that $\distoverposterior$ is the distribution over posteriors if the agent exerts effort. 
Let~$\marginalReport_{\mu}$ be the distribution over posterior means
where $\marginalReport_{\mu}(z) \triangleq
\distoverposterior(\{\posterior: \mean{\posterior} \in z\})$ for any $z\subseteq\reportspace$. 
By slightly abusing notation, we drop the subscript in~$\marginalReport_{\mu}$ and denote both distributions by $\marginalReport$.
Note that the prior mean $\priorMeanState$ equals the expected posterior mean~$\priorMeanMean$, 
i.e., $\priorMeanState = \expect{\state \sim \statedist}{\state} 
= \expect{\report \sim \marginalReport}{\report}
= \priorMeanMean$.

By \Cref{lem:canonical-expectation}, the objective function
in \Cref{eq:mainprogram} for canonical
scoring rules can be simplified as
$\objfunc(\util,\marginalReport) = \int_{\reportspace} \left[ \util(\report)-\util(\priorMeanMean)\right]\dd\marginalReport(\report)$.
Combining it with \Cref{lem:canonical-bound}, and 
shifting the utility function by a constant
such that
$\util(\priorMeanMean)=0$, we get the following optimization program
for optimizing over canonical scoring rules with report space $\reportspace = \conv(\statespace)$.  
\begin{align}
\OPT(\marginalReport, \scorebound, \statespace) 
= \max_\util \qquad &\int_{\reportspace} \util(\report)\dd\marginalReport(\report) \label[scoringruleprogram]{eq:program}\\
\text{s.t.} \qquad 
&\util\text{ is a convex function, and } \util(\priorMeanMean)=0,\nonumber\\ 
&\sg(\report) \in \subgradient \util(\report), \quad \forall \report \in \reportspace, \nonumber \\
&\util(\state)-\util(\report)
-\sg(\report) \cdot (\state-\report) \leq \scorebound, 
\quad \forall \report \in \reportspace, \state\in \statespace.\nonumber
\end{align}


Note that for any distribution $\marginalReport$ and state space
$\statespace$, the objective $\OPT(\marginalReport,
\scorebound, \statespace)$ is a linear function of the maximum score
$\scorebound$.  In most of the paper, we normalize $\scorebound = 1$
and mainly consider the state space $\statespace = [0,1]^\numasm$.  To
simplify the notation, we let $\OPT(\marginalReport) =
\OPT(\marginalReport, 1, [0,1]^\numasm)$.  We will write
$\OPT(\marginalReport, \scorebound, \statespace)$ explicitly in
\Cref{subsec:max over separate} when we discuss general state spaces with bound
$\scorebound \neq 1$.

\section{Single-dimensional Scoring Rules}\label{sec:single assignment}

In this section, we focus on the special case of single-dimensional
state spaces.  We characterize the optimal single-dimensional scoring
rules for eliciting the mean and show that the optimal scoring rules
are simple and only depend on the prior mean of the distribution.

We normalize the state space $\statespace$ so that its
convex hull, i.e., the report space $\reportspace$, is $[0,1]$ and the
boundedness constraint is given by $\scorebound = 1$.
First note that for single dimensional
scoring rules, the boundedness constraint of \Cref{eq:program}
can be further simplified.
Intuitively, as illustrated in \Cref{f:single}(a), the following lemma suggests that 
since the utility function is convex, 
the boundedness constraints only bind with both reports and states in the boundary. 
Specifically, for boundary state $\state = 1$, 
the maximum difference in score is attained between boundary reports $\report=0$ and $\report = 1$, 
with the difference being 
\begin{align*}
\score(1,1) - \score(1,0) = \util(1)-\util(0)-\sg(0) \leq 1
\end{align*}
due to the boundedness constraint. 
The equality holds due to the format of canonical scoring rules in \cref{def:canonical}.
Similarly, for boundary state $\state = 0$,
the maximum difference in score is $\util(0)-\util(1)-\sg(1) \leq 1$.
We show that these two inequalities are sufficient to capture all boundedness constraints. 


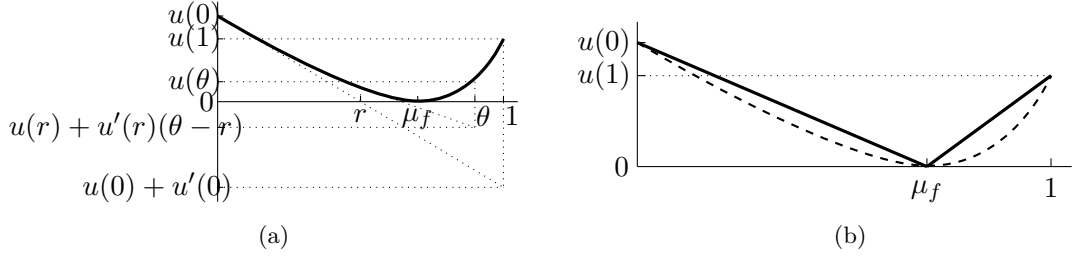
\begin{figure}[t]
\centering
\subfloat[]{
\begin{tikzpicture}[scale = 0.48]

\draw (0,0) -- (10.5, 0);
\draw (0, -3.5) -- (0, 3.5);

\draw [very thick] plot [smooth, tension=0.8] coordinates {
(0, 3) (6.8,0.03) (10, 2.2)
};

\draw [dotted] (0, 3) -- (10, -3);

\draw (-0.38, 0) node {$0$};
\draw (10.3, -0.5) node {$1$};
\draw (7, -0.6) node {$\priorMean$};
\draw (7, 0) -- (7, 0.2);
\draw (10, 0) -- (10, 0.2);

\draw (0, 3) -- (0.2, 3);
\draw (0, 2.2) -- (0.2, 2.2);

\draw (-1, 3) node {$\util(0)$};
\draw (-1, 2.2) node {$\util(1)$};
\draw (-2.3, -3) node {$\util(0)+\util'(0)$};
\draw (0, -3) -- (0.2, -3);

\draw [dotted] (10, 2.2) -- (10, -3);
\draw [dotted] (0, 2.2) -- (10, 2.2);
\draw [dotted] (0, -3) -- (10, -3);

\draw (5, -0.5) node {$\report$};
\draw (5, 0) -- (5, 0.2);
\draw [dotted] (5, 0) -- (5, 0.5);

\draw (9.3, -0.5) node {$\state$};
\draw (9, 0) -- (9, 0.2);
\draw [dotted] (9, 0.7) -- (9, -0.9);

\draw [dotted] (5, 0.5) -- (9, -0.9);

\draw (-3.5, -0.9) node {$\util(\report) + \util'(\report)(\state - \report)$};
\draw (0, -0.9) -- (0.2, -0.9);
\draw [dotted] (0, -0.9) -- (9, -0.9);

\draw (-1, 0.7) node {$\util(\theta)$};
\draw (0, 0.7) -- (0.2, 0.7);
\draw [dotted] (0, 0.7) -- (9, 0.7);

\end{tikzpicture}
\label{fig:bounded single}
}
\subfloat[]{
\begin{tikzpicture}[scale = 0.45]

\draw [white] (0, 0) -- (11.5, 0);
\draw (0,0) -- (10.5, 0);
\draw (0, 0) -- (0, 3.5);

\draw [dashed, thick] plot [smooth, tension=0.8] coordinates {
(0, 3) (6.8,0.03) (10, 2.2)
};

\draw [very thick] plot (0, 3) -- (7, 0);
\draw [very thick] plot (7, 0) -- (10, 2.2);

\draw (-0.38, 0) node {$0$};
\draw (10, -0.5) node {$1$};
\draw (7, -0.6) node {$\priorMean$};
\draw (7, 0) -- (7, 0.2);
\draw (10, 0) -- (10, 0.2);

\draw (0, 3) -- (0.2, 3);
\draw (0, 2.2) -- (0.2, 2.2);

\draw (-0.9, 3) node {$\util(0)$};
\draw (-0.9, 2.2) node {$\util(1)$};

\draw [dotted] (0, 2.2) -- (10, 2.2);






\end{tikzpicture}
\label{fig:opt single}
}
\caption{\label{f:single}
The figure on the left hand side illustrates the bounded constraint for proper scoring rule for single dimensional states.
The figure on the right hand side characterizes the optimal scoring rule (solid line) for single dimensional states.
In this figure, for any convex function $\util$ (dotted line) that induces a bounded scoring rule, 
there exists another convex function $\tilde{\util}$ (solid line) which also induces a bounded scoring rule
and weakly improves the objective. }
\end{figure}

\begin{lemma}\label{lem:bounded for single}
  For state space $\statespace$ with convex hull $[0,1]$ and any convex utility function~$\util$, there exists a
  proper scoring rule induced by function $\util$
  which is bounded by $\scorebound = 1$ if and only if there exists a set of
  subgradients $\sg(\report) \in \subgradient \util(\report)$ such
  that
$$
\util(1)-\util(0)-\sg(0) \leq 1 \text{ and } 
\util(0)-\util(1)+\sg(1) \leq 1.
$$
\end{lemma}
The proof of \cref{lem:bounded for single} is provided in \cref{sub:bound_single}. 
With \Cref{lem:bounded for single}, 
\Cref{eq:program} can be simplified as
\begin{align}\label[scoringruleprogram]{eq:1d_program_u}
\max_u \quad&
\int_0^1 \util(\report) \dd
\marginalReport(\report) \\
\text{s.t.} \quad& 
\util(\report) \text{ is convex and $\util(\priorMean) = 0$, } \nonumber\\
&\sg(\report) \in \subgradient \util(\report), \forall \report \in [0,1], \nonumber \\
& \util(1)-\util(0)-\sg(0) \leq 1, \nonumber\\
& \util(0)-\util(1)+\sg(1) \leq 1. \nonumber
\end{align}
The main result of this section is the following characterization of
the optimal solutions to \Cref{eq:1d_program_u}.

\begin{definition}\label{def:v-shaped scoring rule}
A function $\util$ is \emph{V-shaped} at $\themean$ if there exists
parameters $a\leq b$ such that 
\begin{align*}
\util(\report) = 
\begin{cases}
a\cdot(\report - \themean) & \report < \themean\\
b\cdot(\report - \themean) & \report \geq \themean.
\end{cases}
\end{align*}
\end{definition}

\begin{theorem}\label{thm:1d_opt}
For prior $\prior$ with prior mean $\priorMean$, 
any function $\util$ that is V-shaped at $\priorMean$ with parameters $a\leq b$
such that 
\begin{align*}
(b-a)\cdot\max\{\priorMean,1-\priorMean\}=1
\end{align*}
is optimal for \Cref{eq:1d_program_u} 
for any distribution $\marginalReport$ over the posterior means.
Moreover, the optimal objective value is 
\begin{align*}
\frac{1}{\OPTdenom} \int_{\priorMean}^{1} (\report - \priorMean) 
\dd \marginalReport(\report).
\end{align*}
\end{theorem}
The optimality of V-shaped utility functions is illustrated in \Cref{f:single}(b). 
Intuitively, by increasing the convex function $\util$ on both left side and right side of the prior mean 
while keeping the utilities at both the prior mean and the boundary points unchanged, 
the objective value weakly increases in \Cref{eq:1d_program_u}. 
The convexity of function $\util$ ensures that the maximum utility difference is attained when function $\util$ is linear on both left side and right side of the prior mean. 
Moreover, this operation relaxes the boundedness constraint at the boundary, 
and hence is feasible for \Cref{eq:1d_program_u} according to \cref{lem:bounded for single}. 
The proof is deferred to \cref{apx:thm:1d_opt}. 


Note that there is a kink in all optimal utility functions at prior mean $\priorMean$ for \Cref{eq:1d_program_u}. 
This kink is crucial for the optimality.
Suppose instead consider an alternative utility function $\hat{\util}$ that is continuously differentiable at prior mean~$\priorMean$. 
Then there exists a sufficiently small interval around the prior mean such that the utility function~$\hat{\util}$
evaluated within that interval is similar to a straight line. 
This implies that utility function $\hat{\util}$ provides almost zero incentives for effort
for any distribution over posterior means concentrated in that small interval, 
which is far from the optimal in terms of multiplicative approximation. 

The utility function that is optimal for \Cref{eq:1d_program_u} is not unique given \cref{thm:1d_opt}. 
However, all the optimal utility functions illustrated in \cref{thm:1d_opt} are essentially equivalent through rotations. 
That is, for any V-shaped functions $\util$ and $\util'$ at $\priorMean$ with parameters $a\leq b$ and $a'\leq b'$ respectively
that are optimal for \Cref{eq:1d_program_u}, 
function $\util'$ can be obtained by~$\util$ by adding a linear function as rotation. 
That is, there exists $c_1,c_0$ such that 
\begin{align*}
\util'(\themean) = \util(\themean) + c_1\themean+c_0, \quad\forall \themean\in[0,1].
\end{align*}
The main reason why such rotation does not affect the optimality for \Cref{eq:1d_program_u} is because 
both the prior mean and the distribution over posterior means have the same expectation when evaluated on linear functions
due to Bayesian consistency. 


Since there are multiple utility functions that are optimal for \Cref{eq:1d_program_u}, 
there are also multiple scoring rules that can implement the optimal solution for incentivizing effort. 
As discussed above for the multiplicity of utility functions, 
all optimal scoring rules are essentially the same albeit rotations. 
Next we present one scoring rule $\score^*$ that is induced by one of the optimal utilities functions in \cref{thm:1d_opt}. 
Let $z_0, z_1\in[0,1]$ be the constant such that 
$z_0 = \min\{\frac{\priorMean}{1-\priorMean}, 1\}$
and $z_1 = \min\{\frac{1-\priorMean}{\priorMean}, 1\}$. 
We have 
\begin{align*}
\score^*(\report,\state) = 
\begin{cases}
z_0\cdot (1-\state) & \report < \priorMean\\
z_1\cdot \state & \report \geq \priorMean. 
\end{cases}
\end{align*}

\begin{figure}[t]
\centering
\begin{tikzpicture}[scale = 0.6]

\draw (0,0) -- (6.5, 0);
\draw (0, 0) -- (0, 6.5);

\draw [thick, dashed] plot (0, 4) -- (6, 0);
\draw [thick] plot (0, 0) -- (6, 6);

\draw [dotted] plot (0, 6) -- (6, 6);
\draw [dotted] plot (6, 0) -- (6, 6);

\draw (-0.5, 6) node {$z_1$};
\draw (-0.5, 4) node {$z_0$};
\draw (2.4, -0.6) node {$\priorMean$};
\draw [dotted] plot (2.4, 0) -- (2.4, 2.5);

\draw (-0.38, 0) node {$0$};
\draw (6, -0.5) node {$1$};
\draw (6, 0) -- (6, 0.2);

\end{tikzpicture}
\caption{\label{f:score_expost}
This figure illustrates the scoring rule $\score^*$ as a function of the realized state~$\state$.
The dashed line is the score function when the report $\report < \priorMean$, 
and the solid line is the score function when the report $\report \geq \priorMean$.}
\end{figure}

In scoring rule $\score^*$, parameter $z_0$ measures how large the score the agent gets when the realized state $\state$ is closer to $0$ 
when his report is smaller than the prior mean, 
and parameter $z_1$ measures how large the score the agent gets when the realized state $\state$ is closer to $1$ 
when his report is larger than the prior mean. 
Moreover, to implement scoring rule $\score^*$, it is sufficient to require the agent to make a binary report, 
i.e., whether the posterior mean is smaller or larger than the prior mean. 
This scoring rule is illustrated in \Cref{f:score_expost}.

A simple interpretation of scoring rules $\score^*$, when the state is binary (i.e., $\statespace=\{0,1\}$) is that the agent receives a score of zero if the prediction does not match the state. 
That is, the agent gets a score of zero when reporting $\report < \priorMean$ when the state is~$1$, or reporting $\report \geq \priorMean$ when the state is~$0$.
Moreover, in scoring rule $\score^*$, parameter~$z_0$ is increasing in prior mean $\priorMean$ 
while~$z_1$ is decreasing in $\priorMean$. 
Since $\priorMean$ is also the probability of state $1$ in binary state model, 
the agent receives a higher score if the prediction matches the state for a less likely state. 
That is, the agent receives a higher score for predicting a more surprising outcome correctly. 
When the state is not binary, the same intuition extends 
and the agent receives a higher score if the realized state locates further in the direction of the report, 
i.e., the received score is higher when the realized state is larger if $\report \geq \priorMean$, 
or when the realized state is smaller if $\report < \priorMean$.

A crucial practical feature of the optimal scoring rule for eliciting the mean is that 
it can be implemented with only the knowledge about the prior mean $\priorMean$, not the details about the
distribution~$\marginalReport$ over posterior means.
Therefore, such scoring rule can be implemented robustly in various situations even if the principal has uncertainty over the technology of the agent for acquiring additional information 
as long as the principal has good estimates about the prior distribution. 
In the online appendix, we show that the incentive loss in the optimal scoring rule with an estimate of the prior mean  
is small when the estimation loss for prior mean is small.

\section{Multi-dimensional Scoring Rules}
\label{sec:multi}

In this section, we consider the multi-dimensional scoring rule for eliciting the mean. 
That is, the state space is $\statespace \subseteq \reals^n$. 
We show that the optimal scoring rule takes the form of a betting mechanism. 
In the special cases of single-dimensional state space, 
the betting mechanism simplifies to the scoring rule with V-shaped utilities characterized in \cref{sec:single assignment}.
Moreover, we provide simplifications and interpretations of the optimal betting mechanisms in symmetric environments with multi-dimensional state space. 
In the general non-symmetric environments, 
as illustrated in \cref{sec:peer grading}, 
the standard approach in both theory and practice of scoring the agents separately in each dimension is not a good approximation to the optimal multi-dimensional scoring rule, 
and we show that a simple max-over-separate scoring rule is approximately optimal.

\subsection{Betting Mechanism for Eliciting the Mean}
\label{subsec:betting for mean}
As interpreted in \cref{subsec:formal_program}, 
the canonical scoring rule (\cref{def:canonical}) 
can be viewed as letting the agent take bets on hyperplanes for scores. 
Based on this interpretation, we introduce the betting mechanism (\cref{def:betting}), 
and show that betting mechanisms can be viewed as an indirect implementation of the optimal scoring rules for eliciting the mean. 


\begin{definition}\label{def:betting}
A mechanism is a \emph{betting mechanism for eliciting the mean}
with parameter $c\in [0,1]$ and a normalization function $\kappa: \statespace \to \reals$, 
if when the prior is $\prior$, 
the agent chooses an $n$-dimensional coefficients $\sg\in \reals^n$ and a shift parameter $b\in\reals$ such that 
\begin{itemize}
    \item fixed expected score at prior mean: $\sg\cdot\priorMean+b\leq c$;
    \item bounded score in the space: $\sg\cdot\state+b+\kappa(\state)\in[0, 1], \forall \state\in\statespace$.
\end{itemize}
The agent receives score $\sg\cdot\state+b+\kappa(\state)$ when the realized state is $\state$.
\end{definition}
The betting mechanism can be viewed as the agent taking bets on hyperplanes with parameters $\sg$ and $b$. 
Note that given any choice of $\sg$ and $b$, 
the expected score of the agent given posterior $\posterior$
is $\sg\cdot\mean{\posterior} +b+\expect{\state\sim\posterior}{\kappa(\state)}$. 
It is easy to verify that the optimal choice of $\sg$ and $b$
only depends on the posterior mean, 
and hence the betting mechanism can be converted to a proper scoring rule for mean, 
where the principal elicits the mean from the agent 
and optimize the score for the agent. 
Therefore the following claim holds with proof omitted. 
\begin{claim}
The betting mechanism for eliciting the mean can be converted to a proper scoring rule for eliciting the mean. 
\end{claim}
\begin{theorem}
The optimal scoring rule can be implemented as a betting mechanism for eliciting the mean.
\end{theorem}

\begin{proof}
For any proper scoring rule for eliciting the mean with convex utility function~$\util$ and normalization function $\kappa$,
consider the betting mechanism with parameter $c=\util(\priorMean)$ and the same $\kappa$. 
Let~$\hat{\util}$ be the utility function induced by the betting mechanism. 
It is easy to verify that $\hat{\util}(\priorMean) = \util(\priorMean)$.
Moreover, for any posterior mean $\mean{\posterior}$, one feasible choice for the agent in the betting mechanism is to select the subgradient of the utility function $\util$, 
and obtain expected score at least $\util(\mean{\posterior})+\expect{\state\sim\posterior}{\kappa(\state)}$. 
Therefore, $\hat{\util}(\mean{\posterior})$ is weakly higher than $\util(\mean{\posterior})$ for any posterior mean $\mean{\posterior}$,
and the objective value of the betting mechanism is weakly higher. 
\end{proof}

In the special case of single-dimensional state, 
the bet the agent takes is binary:
either reports posterior mean that is smaller than the prior mean 
to bet on the hyperplane that maximizes the score for states that are closer to 0, 
or reports posterior mean that is larger than the prior mean 
to bet on the hyperplane that maximizes the score for states that are closer to 1. 
This is consistent with the characterization in \cref{thm:1d_opt}. 

In order to design the optimal betting mechanism for multi-dimensional state space, 
the principal need to compute the optimal choice of $\kappa$ and $c$ based on the distribution over posteriors. 
In general there is no simple characterization for the optimal choice of $\kappa$ or $c$. 
In the next proposition, we show that a simple choice of $\kappa(\cdot)=0$ and $c=\frac{1}{2}$ is approximately optimal. 
The proof of \cref{thm:betting mean approx} is provided in \cref{apx:thm:betting mean approx}

\begin{proposition}\label{thm:betting mean approx}
The betting mechanism for eliciting the mean with $c=\frac{1}{2}$ and $\kappa(\cdot)=0$ obtains at least half of the optimal.
\end{proposition}

In \cref{subsec:optimal symmetric multi d}, we further show that in symmetric environments, 
the optimal betting mechanisms can be viewed as scoring rules with generalized V-shaped utility functions in multi-dimensional space.

\subsection{Optimal Scoring Rules for Symmetric Distributions}\label{subsec:optimal symmetric multi d}

This section characterizes the optimal multi-dimensional scoring rule
when the distribution over posteriors is symmetric about its center.
The result obtained in the single-dimensional setting extends to multi-dimensional state spaces by extending the definition of the V-shaped utility function to multi-dimensional environments,
i.e., \Cref{eq:program} is optimized by a symmetric V-shaped utility function (\cref{def:symmetric v-shaped scoring rule}).  
This characterization affords a simple interpretation for
rectangular report and state spaces.
Specifically, the optimal scoring
rule can be calculated by taking the maximum score over optimal
single-dimensional scoring rules for each dimension, i.e., it is a
\mos scoring rule.  As these single-dimensional scoring rules depend
only on the prior mean, so does the optimal multi-dimensional scoring
rule.  We first give the characterization and then give the
interpretation.

\begin{definition}\label{def:symmetric distribution}
A $n$-dimensional distribution $\marginalReport$ is \emph{center symmetric} if
there exists a center in the report space, 
i.e., $\centerpoint \in \reportspace$ 
such that for any $\report \in \reportspace$, 
$\marginalReport(\centerpoint - \report) = \marginalReport(\centerpoint + \report)$. 
\end{definition}

Note that for any center symmetric distribution $\marginalReport$ over
posterior means, the mean of the prior coincides with the center of
the space, i.e., $\priorMean = \centerpoint$.  The following
definition generalizes symmetric V-shaped functions to
multi-dimensional state and report spaces.
Let $\boundaryReport$ be the boundary of the report space $\reportspace$. 

\begin{definition}\label{def:symmetric v-shaped scoring rule}
A function $\util$ is \emph{symmetric V-shaped} in report
space $\reportspace = \conv(\statespace)$ with non-empty interior and center $\centerpoint$ if
\begin{itemize}
\item utility is zero at the center, i.e., $\util(\centerpoint) = 0$;

\item utility is $\sfrac 1 2$ on the boundary, i.e., 
$\util(\report) = \sfrac 1 2$ for $\report \in \boundaryReport$; and

\item all other points
linearly interpolate between the center and the boundary, i.e.,
$\util(\alpha \cdot \report + (1-\alpha) \cdot \centerpoint) =
\frac{\alpha}{2}$ for any $\alpha \in [0,1]$ and $\report \in
\boundaryReport$.
\end{itemize}
\end{definition}

V-shaped utility functions on convex and center symmetric spaces are
bounded and convex, i.e., they are feasible solutions to
\Cref{eq:program}. The proof of \Cref{lem:symm_nd_bounded} is deferred to \Cref{sec:appendix-proof-symm_nd_bounded}.

\begin{lemma}\label{lem:symm_nd_bounded}
  For any convex and center symmetric report and state space
  $\reportspace = \statespace$ with non-empty interior, the center
  symmetric utility function is convex and bounded for $\scorebound =
  1$.
\end{lemma}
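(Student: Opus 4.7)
The plan is to identify the center-symmetric V-shaped utility function with (half of) the Minkowski gauge of $\reportspace$, and then deduce both properties from standard facts about norms. After translating so that $\centerpoint = 0$, \Cref{def:symmetric v-shaped scoring rule} says precisely that $\util(\report) = \tfrac{1}{2} g(\report)$, where $g(\report) = \inf\{t \geq 0 : \report \in t\reportspace\}$ is the Minkowski gauge of $\reportspace$. Because $\reportspace$ is convex and symmetric about the origin with non-empty interior, $g$ is a norm on $\reals^\numasm$: it is convex, positively $1$-homogeneous, continuous, satisfies $g(-\report) = g(\report)$, and its closed unit ball is exactly $\reportspace$. Convexity and continuity of $\util$ on $\reportspace$, along with the normalization $\util(\centerpoint) = 0$, are therefore immediate.

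For the boundedness constraint of \Cref{eq:program}, it suffices, for each $\report \in \reportspace$, to exhibit a subgradient $\sg(\report) \in \subgradient \util(\report)$ satisfying $\util(\state) - \util(\report) - \sg(\report) \cdot (\state - \report) \leq 1$ for every $\state \in \statespace$. I will pick $\sg(\report) = \tfrac{1}{2} \sg_g(\report)$ for any chosen $\sg_g(\report) \in \subgradient g(\report)$, and $\sg(0) = 0$ at the center. The argument rests on two standard consequences of $1$-homogeneity and symmetry of $g$: \textbf{(i)} $\sg_g(\report) \cdot \report = g(\report)$, obtained by applying the subgradient inequality at $\report$ to the two points $2\report$ and $0$ to get matching upper and lower bounds; and \textbf{(ii)} $\sg_g(\report) \cdot \state \geq -g(\state)$, obtained by applying the subgradient inequality to the point $-\state$ and using the symmetry $g(-\state) = g(\state)$.

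Substituting (i) to cancel the $-\tfrac{1}{2} g(\report)$ term and then invoking (ii) collapses the left-hand side of the boundedness inequality to at most $g(\state)$, which is in turn at most $1$ precisely because $\state \in \statespace = \reportspace$ lies in the unit ball of $g$. The case $\report = 0$ is immediate since the expression becomes $\tfrac{1}{2} g(\state) \leq \tfrac{1}{2}$. The only step that calls for genuine care is deriving the subgradient identities (i) and (ii) from first principles for a general norm; once those are in hand the bound is a one-line computation, and no further structure of $\reportspace$ beyond convexity, symmetry, and non-empty interior is used.
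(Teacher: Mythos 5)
Your proof is correct, and it is in essence an analytic rendering of the same two-part bound that the paper uses, but the key step is derived by a different mechanism. Both arguments split the boundedness expression into $\util(\state)\leq \sfrac12$ plus the claim that the supporting hyperplane of $\util$ at $\report$, evaluated at $\state$, is at least $-\util(\state)\geq -\sfrac12$. The paper establishes that second claim geometrically: it views $\util$ as a truncated convex cone with vertex $(\priorMean,0)$, takes the point reflection of the cone about its vertex, and argues that the reflected cone shares the original cone's supporting hyperplanes and, by symmetry of $\reportspace$, is the mirror image $-\util$. You instead identify $\util$ with half the Minkowski gauge $g$ of $\reportspace$ (after centering) and prove the same inequality by two subgradient computations: the Euler-type identity $\sg_g(\report)\cdot\report = g(\report)$ from positive homogeneity, and $\sg_g(\report)\cdot\state \geq -g(\state)$ from applying the subgradient inequality at $-\state$ and using $g(-\state)=g(\state)$. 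What your route buys is rigor and self-containment: the paper's ``easy to verify'' geometric assertions about reflected cones become one-line algebraic facts, and convexity/continuity of $\util$ come for free from $g$ being a (semi)norm. What the paper's cone-reflection picture buys is reusability: the same geometric machinery is invoked again in the later boundedness lemmas for the approximation argument (\Cref{lem:extsymstatespace}, \Cref{lem:convHullBounded}, \Cref{lem:extension_bound_RH}), which your gauge formulation would have to be adapted to cover. One small caveat: the gauge is a genuine norm only because $\reportspace$ is bounded with the center in its interior; this is implicit in \Cref{def:symmetric v-shaped scoring rule} (every point lies on a segment from the center to the boundary), so it is the same implicit assumption the paper makes, but it is worth stating.
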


We show that the expected utility function of the optimal betting mechanism corresponds to a symmetric V-shaped function in center symmetric environments.
The following theorem is proved by following a standard approach in
multi-dimensional mechanism design, e.g., \citet{arm-96} and
\citet{HH-15}.  The problem is relaxed onto single-dimensional paths,
solved optimally on paths, and it is proven that the solution on paths
combine to be a feasible solution on the whole space.  Note that in
relaxing the problem onto paths, constraints on pairs of reports that
are not on the same path are ignored. The full proof of \Cref{thm:opt_symm_nd} is deferred to \Cref{sec:appendix-proof-opt_symm_nd}.
Similar to the single dimensional V-shaped scoring rule, the implementation of multi-dimensional V-shaped scoring rule only requires the knowledge of the prior mean $\priorMean$ for the principal. 

\begin{theorem}\label{thm:opt_symm_nd}
For any center symmetric distribution $\marginalReport$ over posterior
means in convex report and state space $\reportspace = \statespace$,
the optimal solution for \Cref{eq:program} is symmetric V-shaped.
\end{theorem}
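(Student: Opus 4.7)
The plan is to relax \Cref{eq:program} onto radial paths through the center $\centerpoint$, apply the single-dimensional characterization of \Cref{c:symmetric-single-dimensional} on each path, and then verify that the pathwise optima glue into a globally feasible symmetric V-shape. This is the path-decomposition strategy of \citet{arm-96} and \citet{HH-15} referenced above.

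First, I would set up the radial decomposition. Since $\marginalReport$ is center symmetric, its prior mean coincides with $\centerpoint$, so any feasible $\util$ satisfies $\util(\centerpoint)=0$. For each unit direction $\unvec$, let $\ell_\unvec \subseteq \reportspace$ be the maximal chord through $\centerpoint$ in direction $\unvec$; convexity and center symmetry of $\reportspace$ make $\centerpoint$ its midpoint. Using polar-style coordinates around $\centerpoint$, I would decompose $\marginalReport$ as a distribution over directions $\unvec$ together with conditional radial densities $f_\unvec$ on each $\ell_\unvec$; center symmetry of $\marginalReport$ forces each $f_\unvec$ to be symmetric about $\centerpoint$. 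The global objective then decomposes as the expectation over $\unvec$ of the pathwise objective $\int_{\ell_\unvec} \util(\report)\, f_\unvec(\report)\,\de\report$.

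Next, I would relax \Cref{eq:program} by retaining the convexity requirement on $\util$ only along each chord $\ell_\unvec$, and retaining the bounded-utility-difference constraint only for pairs $(\report,\state)$ lying on a common chord. Along each $\ell_\unvec$, after an affine rescaling of the chord to $[0,1]$ that sends $\centerpoint$ to $\sfrac 1 2$, the pathwise relaxed program is exactly \Cref{eq:1d_program_u} with symmetric posterior of mean $\sfrac 1 2$. By \Cref{c:symmetric-single-dimensional}, the pathwise optimum is attained by the V-shape with $\util(\centerpoint)=0$ and $\util=\sfrac 1 2$ at both endpoints of the chord; integrating over $\unvec$ yields an upper bound on $\OPT(\marginalReport,\scorebound,\statespace)$ equal to the value achieved by this selection.

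Finally, these pathwise optima agree at the center and at every boundary point (value $\sfrac 1 2$ on all of $\boundaryReport$) and are linear on each radial segment, so they assemble into a single function on $\reportspace$ that is precisely the symmetric V-shape of \Cref{def:symmetric v-shaped scoring rule}. By \Cref{lem:symm_nd_bounded} this function is convex and bounded on the whole space, hence feasible for the original unrelaxed program. Since the relaxation is an upper bound and is attained by this feasible solution, the symmetric V-shape is optimal. The main obstacle is the pathwise step: \Cref{c:symmetric-single-dimensional} admits a one-parameter family of optimal V-shapes on each chord (differing by a common shift of both slopes), and one must select the symmetric endpoint-$\sfrac 1 2$ representative consistently across all directions so that the pieces glue into a single convex function on $\reportspace$; once this canonical selection is made, global convexity and boundedness follow from the Minkowski-functional structure underlying \Cref{lem:symm_nd_bounded}.
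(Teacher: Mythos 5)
Your proposal is correct and follows essentially the same route as the paper's proof: relax \Cref{eq:program} onto chords through the center (dropping constraints across different chords), invoke \Cref{c:symmetric-single-dimensional} on each symmetric single-dimensional restriction, and glue the symmetric V-shaped pathwise optima into the global symmetric V-shape, whose feasibility is supplied by \Cref{lem:symm_nd_bounded}. Your extra care about selecting the endpoint-$\sfrac 1 2$ representative consistently across directions is a valid refinement of the same argument, not a different approach.
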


In the remainder of this section we give an interpretation of scoring
rules that correspond to V-shaped utility functions on rectangular
report and state spaces.  On such spaces, these optimal scoring rules
can be implemented as the maximum over separate scoring rules (for
each dimension). 
Intuitively, the \mos scoring rule rewards the agent only on the dimension that the agent will receive highest expected score according to his posterior belief.

The definition of \mos scoring rule is formally introduced in \cref{def:max over separate},
and it is easy to verify that a \mos
scoring rule is proper and bounded if is based on single dimensional
scoring rules that are proper and bounded.
In particular, in \mos scoring rules, we only consider the case where the normalization function $\kappa(\state)$ is set to be a constant for all $\state\in\statespace$. 

\begin{definition}[max-over-separate]\label{def:max over separate}
A scoring rule $\score$ is \mos if there exists single dimensional
scoring rules $(\sdscore_1, \dots, \sdscore_{\numasm})$ 
and constant $\sdkappa\in\reals$ 
such that
\begin{enumerate}
\item for any dimension $i$, $\sdscore_i(\report_i, \state_i)=
  \sdutil_i(\report_i) +\sdsg_i(\report_i) \cdot
  (\state_i-\report_i)+\sdkappa$
  where $\sdsg_i(\report_i)$ is a subgradient of convex function $\sdutil_i(\report_i)$;

\item \label{item:mos-score} the score is $\score(\report, \state) = \sdscore_i(\report_i,
  \state_i)$ where $i = \argmax_{j} \sdutil_i(\report_i)$.
\end{enumerate}
\end{definition}

Condition 1 in \cref{def:max over separate} ensures that each single-dimensional scoring rule chosen in \mos scoring rule is proper for eliciting the mean. 
The overall incentives of \mos is ensured by condition 2
and the fact that 
$\expect{\state_j \sim \posterior_j}{\score_j(\report_j, \state_j)}
= \sdutil_i(\report_i) + \sdkappa$ 
for any marginal posterior distribution $\posterior_j$ on dimension $j$ with mean~$\report_j$.  

We show that in rectangular report and state spaces, 
scoring rules with symmetric V-shaped utility functions, which are shown to
be optimal by \Cref{thm:opt_symm_nd}, can be interpreted as \mos scoring rules.
This can be shown by algebraically calculating the expected utility functions for \mos scoring rules, 
and hence the proof of the following proposition is deferred to \cref{apx:lem_vshape}.

\begin{proposition}\label{lem:v shape is max over separate}
Symmetric V-shaped function $\util$ in $\numasm$-dimensional rectangle
report and state space $\reportspace = \statespace =
\bigtimes_{i=1}^{\numasm}[a_i,b_i]$ with function $\kappa(\state) =
\sfrac{1}{2}$ can be implemented as \mos scoring rule with single
dimensional bounded proper scoring rules $\{\sdscore_i\}_{i=1}^n$ where
\begin{align*}
    \sdscore_i(\report_i,\state_i) &= 
    \begin{cases}
    -\frac{1}{b_i-a_i}(\state_i-\Pmeani) + \frac{1}{2} & \text{for $\report_i \leq \Pmeani$,}\\
    \frac{1}{b_i-a_i}(\state_i-\Pmeani) + \frac{1}{2} & \text{for $\report_i \geq \Pmeani$,}
    \end{cases}
\end{align*}
where $\Pmeani=\sfrac{(a_i+b_i)}{2}$ is the $i^{th}$ coordinate of the prior mean $\priorMean$.
\end{proposition}

\begin{corollary}\label{cor:opt_symm_nd}
For any center symmetric distribution $\marginalReport$ over posterior
means in rectangular report and state space $\reportspace
= \statespace$, a \mos scoring rule is optimal.
\end{corollary}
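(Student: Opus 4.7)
The plan is to combine the two preceding results in a direct chain. Since any rectangular report and state space $\reportspace = \statespace = \bigtimes_{i=1}^{\numasm}[a_i,b_i]$ is convex and has non-empty interior, \Cref{thm:opt_symm_nd} applies and tells us that under a center symmetric distribution $\marginalReport$ over posterior means, an optimal solution $\util^{\star}$ to \Cref{eq:program} is the symmetric V-shaped function from \Cref{def:symmetric v-shaped scoring rule} with center $\centerpoint = \priorMean$. This reduces the corollary to the question of whether this particular utility function can be induced by a \mos scoring rule.

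Next I would invoke \Cref{lem:v shape is max over separate}, which explicitly exhibits a \mos scoring rule whose induced utility function is precisely the symmetric V-shaped function on a rectangular space when we take $\kappa(\state) = \sfrac{1}{2}$. Since the objective of \Cref{eq:program} depends only on the utility function $\util$ (it is invariant to the choice of subgradients $\sg$ and the state function $\kappa$, as noted in \Cref{subsec:formal_program}), the \mos scoring rule so constructed attains the same objective value as $\util^{\star}$, namely $\OPT(\marginalReport)$.

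The only thing left to verify is that this \mos scoring rule is itself a feasible solution to \Cref{eq:mainprogram}, i.e., is proper and bounded. Properness follows from \Cref{def:max over separate} together with the discussion immediately after it (each single-dimensional component is proper because $\sdkappa_i$ is constant in the state, and the max structure preserves properness); boundedness in $[0,1]$ was already verified inside the proof of \Cref{lem:v shape is max over separate} by noting that each $\sdscore_i$ is bounded in $[0,1]$, so the pointwise maximum also lies in $[0,1]$. Chaining these observations yields a \mos scoring rule achieving the optimum, which is exactly the claim.

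I do not anticipate a main obstacle here, since the heavy lifting is done by \Cref{thm:opt_symm_nd} (existence of a V-shaped optimum) and \Cref{lem:v shape is max over separate} (the V-shaped utility admits a \mos implementation on rectangular spaces); the corollary is essentially a one-line composition. The only subtlety worth flagging is ensuring that the center of symmetry of $\marginalReport$ coincides with the center of the rectangle used in the \mos construction of \Cref{lem:v shape is max over separate}, which it does because for a center symmetric distribution $\priorMean = \centerpoint$, and for a rectangle the coordinate midpoints $\Pmeani = \sfrac{(a_i+b_i)}{2}$ are the coordinates of its center of symmetry.
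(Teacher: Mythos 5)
Your proposal is correct and matches the paper's intended argument: the corollary is stated as an immediate consequence of \Cref{thm:opt_symm_nd} (the optimum is symmetric V-shaped) combined with \Cref{lem:v shape is max over separate} (that V-shaped utility is implementable as a \mos scoring rule on a rectangle), together with the routine properness/boundedness check you note. Your flag about the center of symmetry coinciding with the rectangle's midpoint is exactly the identification the paper itself makes ($\priorMean = \centerpoint$ and $\Pmeani = \sfrac{(a_i+b_i)}{2}$), so nothing further is needed.
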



Finally, these \mos scoring rules have an indirect \car
implementation where the agent reports the dimension to be scored on
and the mean for that dimension.  This indirect implementation has a
practical advantage that 
when the communication between the principal and the agent is costly
since in $\numasm$-dimensional spaces, it requires
only reporting two rather than $\numasm$ numbers.\footnote{In the application of exam grading, it also implies that it is sufficient for the instructor to only grades one question instead of $n$ questions.}
Note that \car and \mos are essentially the same scoring rule, with different implementations.

\begin{definition}[choose-and-report]\label{def:choose and report}
A scoring rule $\score$ is \car if there exists single dimensional
scoring rules $(\sdscore_1, \dots, \sdscore_{\numasm})$ such that the
agent reports dimension~$i$ and mean value $\report_i$, and receives
score $\score((i,\report_i), \state) = \sdscore_i(\report_i, \state_i)$.
\end{definition}

An agent's optimal strategy in the \car scoring rule for proper
single-dimensional scoring rules
$(\sdscore_1,\ldots,\sdscore_{\numasm})$ is to choose the dimension
$i$ with the highest expected score according to the posterior, i.e., $i = \argmax_{j} \expect{\state_j \sim
\posterior_j}{\sdscore_j(\mean{\posterior_j}, \state_j)}$, 
and to report the mean of the posterior for that dimension, i.e.,
$\mean{\posterior_i}$. 

For the \car scoring rule that corresponds to the optimal V-shaped utility function of \Cref{eq:program}, 
the dimension $i$ that maximizes the expected utility is the dimension with posterior mean $\mean{\posterior_i}$ that is furthest to the prior mean $\mean{\prior_i}$, 
i.e., $\abs{\mean{\posterior_i} - \mean{\prior_i}}$ is maximized. 
Therefore, based on the interpretation of \car scoring rules, 
the agent is only scored on the dimension with the most surprising observation.



\subsection{Inapproximation by Separate Scoring Rules}
\label{subsec:separate}

In general asymmetric environments, 
one way to design the scoring rule for an $\numasm$-dimensional space
is to average independent scoring rules for the marginal distributions
of each dimension.  In this section we show that the worst-case
multiplicative approximation of scoring each dimension separately and
scoring optimally is $\Theta(\numasm)$.  
The main idea of this large gap is already illustrated in \cref{sec:peer grading}
in the application of exam grading
when the probability of acquiring an informative signal is small for each dimension. 
Hence, the proof of \Cref{thm:score separately multiplicative} is deferred to \Cref{sec:appendix-proof-score separately multiplicative}.


\begin{definition}\label{def:separate score}
A scoring rule $\score$ is a \emph{separate scoring rule} 
if there exists single dimensional scoring rules $(\score_1, \dots, \score_{\numasm})$ such that 
$\score(\report, \state)
= \sum_i\score_i(\report_i, \state_i)$. 
\end{definition}

\begin{proposition}\label{thm:score separately multiplicative}
  In $\numasm$-dimensional rectangular report and state spaces, the worst-case
  approximation factor of scoring each dimension separately is
  $\Theta(\numasm)$.
\end{proposition}

\cref{thm:score separately multiplicative} highlights the importance of linking incentives across different dimensions to incentivize effort.
This concept of linking incentives has been previously recognized in \citet{JS-07}, where it was applied to allocating items to maximize welfare without transfers, which is an environment with pure adverse selection.
Our work complements this by extending the philosophy of linking incentives to an environment with moral hazard and endogenous information.
Moreover, it's worth noting that the main driver for the importance of linking incentives in our model is different from \citet{JS-07}. 
In our case, the key reason for linking incentives across dimensions is to prevent the situation where an uninformed agent who does not exert effort can make multiple uneducated guesses for different dimensions. This scenario makes it challenging to distinguish between such an uninformed agent and an agent who genuinely exerted effort, particularly when the signals arrive with only small probabilities in each dimension. The ability to separate those two types of agents is essential to ensure effective effort incentivization in our context.

\subsection{Approximately Optimal Scoring Rules for General Distributions}
\label{subsec:max over separate}

In \Cref{subsec:optimal symmetric multi d}, we have shown that the \mos scoring rule is optimal for symmetric distributions. 
When the distribution is not symmetric, 
although the \mos scoring rule may not be optimal, 
we show that the optimal
\mos scoring rule always outperforms the separate scoring rule.
Moreover, there exists a \mos scoring rule that is an $8$-approximation to the optimal for any asymmetric and
possibly correlated distribution over a high dimensional rectangular space,
and the design of this approximately optimal scoring rule only requires the knowledge of the prior mean, not the distribution over posteriors. 
In the online appendix, we further relax this assumption and design approximately optimal scoring rules when the principal only have imprecise estimates of the prior mean.

To show the approximate optimality of \mos scoring rules, we symmetrize the distribution over posteriors, and construct a V-shaped scoring rule on the symmetrized distribution. 
This V-shaped scoring rule can be implemented as a \mos scoring rule on the original problem, which only requires the knowledge of prior mean. 

\begin{theorem}\label{thm:approx_max_over_separate}
For any distribution $\marginalReport$ over posterior means in
$\numasm$-dimensional rectangular report and state space $\reportspace
= \statespace = \bigtimes_{i=1}^{\numasm}[a_i,b_i]$, the utility
function $\util$ of optimal \mos scoring rule for \Cref{eq:program}
achieves at least $\sfrac{1}{8}$ of the optimal objective value,
i.e. $\objfunc(\utility, \marginalReport)\geq \sfrac{1}{8}\cdot
\OPT(\marginalReport, \scorebound, \statespace)$.
\end{theorem}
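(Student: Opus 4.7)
The plan is to reduce the general asymmetric case to the symmetric case treated by Theorem~\ref{thm:opt_symm_nd}, and then bound the loss from this reduction. Shift coordinates so that $\priorMean = 0$ and, by reflecting individual dimensions as needed, assume $b_i \geq |a_i|$, so that $d_i := \max(|a_i|, b_i) = b_i$. Introduce the extended symmetric rectangle $\extstatespace = \bigtimes_{i=1}^{\numasm}[-d_i, d_i] \supseteq \statespace$, and define the symmetrized distribution $\extMarginalReport := \tfrac{1}{2}\marginalReport + \tfrac{1}{2}\marginalReport^{R}$, where $\marginalReport^{R}$ is the reflection of $\marginalReport$ through the origin. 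By Theorem~\ref{thm:opt_symm_nd} the optimal scoring rule for $(\extMarginalReport, \extstatespace)$ is the symmetric V-shaped function $\util_V(\report) = \tfrac{1}{2}\max_i |\report_i|/d_i$, which by Lemma~\ref{lem:v shape is max over separate} is implementable as a max-over-separate scoring rule depending only on the prior mean.

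Next I would verify that $\util_V$ is feasible for the original program on $(\marginalReport, \statespace)$: since $\statespace \subseteq \extstatespace$, the scoring-rule boundedness on $\extstatespace$ restricts to the same bound $\scorebound = 1$ on $\statespace$. Using the reflective symmetry $\util_V(\report) = \util_V(-\report)$ to transport the objective across symmetrization,
\[
\objfunc(\util_V, \marginalReport) \;=\; \objfunc(\util_V, \extMarginalReport) \;=\; \OPT(\extMarginalReport, 1, \extstatespace),
\]
where the last equality is Theorem~\ref{thm:opt_symm_nd}. Because the optimal max-over-separate rule performs at least as well as $\util_V$, it suffices to establish the key inequality
\[
\OPT(\extMarginalReport, 1, \extstatespace) \;\geq\; \tfrac{1}{8}\,\OPT(\marginalReport, 1, \statespace).
\]

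The hard part is precisely this key inequality, which compares optima across the asymmetric original and symmetric extended problems. I would prove it by taking an optimal $\util^*$ for $(\marginalReport, 1, \statespace)$ (WLOG, by adding an affine function one may assume $\priorMean$ minimizes $\util^*$, giving $\util^*(\priorMean) = 0$ and $0 \leq \util^* \leq 1$ pointwise on $\statespace$) and constructing a feasible utility $\hat\util$ on $(\extMarginalReport, \extstatespace)$ with $\objfunc(\hat\util, \extMarginalReport) \geq \objfunc(\util^*, \marginalReport)/8$. The construction would reflection-symmetrize $\util^*$ on the central sub-box $\reportspace \cap (-\reportspace)$ as $\tfrac{1}{2}(\util^*(\report) + \util^*(-\report))$, extend to all of $\extreportspace$ via a suitable convex extension, and rescale so that the scoring-rule bound on $\extstatespace$ is $1$. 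The factor of $8$ arises from compounding three factor-of-two losses: averaging $\util^*$ with its reflection, extending each dimension's width (each original width $b_i - a_i \leq 2 d_i$), and renormalizing the convex extension to be bounded by $1$ on the larger $\extstatespace$.
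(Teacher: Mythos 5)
Your reduction to the symmetric problem is exactly the paper's: the extended rectangle symmetric about $\priorMean$, the symmetrized distribution $\extMarginalReport$, feasibility of the symmetric V-shaped rule on the original $(\marginalReport,\statespace)$, and the identity $\objfunc(\util_V,\marginalReport)=\OPT(\extMarginalReport,\scorebound,\extstatespace)$ are the content of \Cref{lem:opt-separate,lem:extopt-to-orig}. The genuine gap is in what you yourself call the hard part, the key inequality $\OPT(\extMarginalReport,\scorebound,\extstatespace)\geq \tfrac18\,\OPT(\marginalReport,\scorebound,\statespace)$, which you do not actually prove. First, your symmetrization $\tfrac12(\util^*(\report)+\util^*(2\priorMean-\report))$ is only defined on $\reportspace\cap(2\priorMean-\reportspace)$, while $\marginalReport$ (and hence $\extMarginalReport$) can place most or all of its mass outside that central sub-box; nothing in your sketch guarantees the ``suitable convex extension'' is at least a constant fraction of $\util^*$ there, so even the factor-$2$ loss you attribute to averaging is not established. (It is also unnecessary: the paper's \Cref{lem:symmetry-in-ext} gets this factor $2$ without symmetrizing the utility at all, by evaluating the unsymmetrized optimizer on $\extMarginalReport$ and using convexity, Jensen, and $\util(\priorMean)=0$ to drop the reflected half.)

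Second, and more seriously, the extension of $\util^*$ from $\reportspace$ to $\extreportspace$ with a controlled score bound is the entire technical content of this step, and ``rescale so the bound is $1$'' plus ``each width at most doubles'' is not an argument: the boundedness constraint in \Cref{eq:program} is not proportional to the width of the state space for an arbitrary convex extension, since it compares $\util(\state)$ at the new states against all supporting hyperplanes of the extension, and an unstructured extension can violate any fixed multiple of $\scorebound$. The paper's \Cref{lem:obj ratio} handles this with a specific construction --- the convex-conical extension obtained by shifting every bound-respecting supporting hyperplane of $\util^*$ to pass through $(\priorMean,0)$ --- together with a closedness argument for the set of such subgradients (\Cref{lem:bounded subgradient is close}) and a chain of geometric lemmas: the shift costs one factor of $2$ on the original state space (\Cref{lem:expansion_bounded_R}), the bound then transfers for free to the point-reflected state space and its convex hull, and a final factor of $2$ comes from the fact that the extended rectangle lies inside that convex hull scaled by $2$ about $\priorMean$, using that the bound of a conical function grows linearly along rays (\Cref{lem:extension_bound_RH}). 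Your three heuristic factors of $2$ do not correspond to these steps, and without the conical structure (or some substitute) the ray-scaling argument that actually yields the constant is unavailable. To complete your proof you would need to specify the extension and prove its feasibility at bound $O(\scorebound)$ on $\extstatespace$, which is precisely where the paper's work lies.
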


\paragraph{Interpretations}
Before delving into the proof of \cref{thm:approx_max_over_separate}, it is important to briefly discuss the interpretation of the approximation factor of 8. 
At first glance, it may appear that the factor of 8 is large, suggesting that the principal might incur significant losses by adopting the suboptimal solution. 
However, it is important to note that our analysis takes a worst-case approach, and the actual gap between the optimal and the proposed \mos scoring rule in real-life scenarios can be much smaller. 
For instance, as demonstrated in \cref{thm:opt_symm_nd}, 
the gap is only 1, i.e., \mos scoring rules are optimal, for symmetric distributions. 
Furthermore, we encourage readers not to interpret the exact approximation factor literally, 
but rather focus on the relative comparison among simple scoring rules based on their approximation factor. 
In this context, the primary comparison is between \mos scoring rules and separate scoring rules. 
The former maintains a constant approximation factor irrespective of the number of dimensions, whereas the latter exhibits linear degradation in the number of dimensions (\cref{thm:score separately multiplicative}).\footnote{In fact, in \cref{apx:thm:mos better}, we show that for any instance and any separate scoring rule, there always exists a \mos scoring rule that outperforms it. } 
This further highlights the crucial role of linking incentives across different dimensions in promoting effort in high-dimensional problems \citep[c.f.,][]{JS-07}.


\paragraph{Proof Sketch}
In the following discussion, we assume without loss of generality that
$\Pmeani \geq \sfrac{(a_i + b_i)}{2}$ for every dimension~$i$. 
To prove \cref{thm:approx_max_over_separate}, we will show that the \mos scoring rule with utility functions 
\begin{align*}
\util_i(\report_i) = \frac{1}{2}+\frac{1}{2(\Pmeani-a_i)} \abs{\report_i-\Pmeani}
\end{align*}
for each dimension $i$
is approximately optimal. 
In particular, for each dimension $i$, the chosen utility function is optimal for the single-dimensional scoring rule problem with prior $\Pmeani$. 

\begin{figure}[t]
    \centering
    \begin{tikzpicture}
[scale = 2,
pnt/.style={circle, draw=black!100, fill=black!100, thick, inner sep = 0.001 mm, minimum size = 0.5 mm}]
\shade [left color = black!7, right color = black!7] plot coordinates{
(0, 0) (1, 0) (1, 1/2) (4/3, 1/2) (4/3, 3/2) (1/3, 3/2) (1/3, 1) (0, 1) (0, 0)
};

\draw [->](0, 0) -- (5/3, 0);
\draw [->](0, 0) -- (0, 5/3);
\draw (0, 0) -- (1, 0);
\draw (0, 0) -- (0, 1);
\draw (1, 0) -- (1, 1);
\draw (0, 1) -- (1, 1);
\draw (1/3, 1/2) -- (1/3, 3/2);
\draw (1/3, 1/2) -- (4/3, 1/2);
\draw (1/3, 3/2) -- (4/3, 3/2);
\draw (4/3, 1/2) -- (4/3, 3/2);
\draw [dotted] (0, 0) -- (4/3, 3/2);
\draw [dotted] (1, 0) -- (1/3, 3/2);
\draw [dotted] (0, 1) -- (4/3, 1/2);

\draw [line width = 2pt, color=gray, dashed] plot coordinates{
(0, 0) (1, 0) (4/3, 1/2) (4/3, 3/2) (1/3, 3/2) (0, 1) (0, 0)
};

\draw [very thick] plot coordinates{
(0, 0) (4/3, 0) (4/3, 3/2) (0, 3/2) (0, 0)
};

\node at (2/3, 3/4) [pnt] {};
\draw (2/3-0.1, 3/4+0.1) node {$\priorMean$};
\draw (1, -0.1) node {$1$};
\draw (-0.1, 1) node {$1$};
\draw (-0.1, -0.1) node{$0$};

\end{tikzpicture}
    \caption{\label{fig:convHull} This figure depicts a
      two-dimensional state space.  The state space $\statespace =
      [0,1]^2$ and its point reflection around the prior mean
      $\priorMean$ are shaded in gray.  The extended report and state
      space are depicted by the region within the thick black
      rectangle.  
      }
\end{figure}
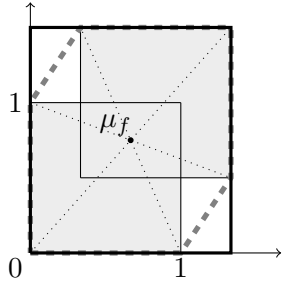

The proof of \cref{thm:approx_max_over_separate} introduces the following constructs:
\begin{itemize}
\item The {\em extended report and state space} are
  $\extreportspace = \extstatespace =
  \bigtimes_{i=1}^{\numasm} [a_i, 2\Pmeani-a_i]$.  These are
  rectangular and contain the original report and state spaces
  $\reportspace = \statespace$.  See \Cref{fig:convHull}.

\item The {\em symmetric extended distribution of $\marginalReport$}
  on the extended report space is $\extMarginalReport(\report) =
  \frac{1}{2}(\marginalReport(\report)+\marginalReport(2\priorMean -
  \report))$.  Note in this definition that the original distribution $\marginalReport$
  satisfies $\marginalReport(\report) = 0$ for any $\report \in
  \extreportspace \setminus \reportspace$.

 \end{itemize}
\Cref{thm:approx_max_over_separate} now follows by directly combining the following lemmas, with proofs provided in \cref{app:opt-separate,sss:lem-obj-ratio}.
Essentially, by symmetrizing the distribution $\marginalReport$ to~$\extMarginalReport$, 
we show that the optimal objective values are close between $\marginalReport$ and $\extMarginalReport$.
Moreover, by adopting the optimal scoring rule for symmetric distribution $\extMarginalReport$, 
which is the \mos scoring rule we described above, 
the loss in objective value is small when the true distribution is $\marginalReport$. 
Therefore, \mos scoring rule is approximately optimal for the original distribution~$\marginalReport$.

\begin{lemma}
  \label{lem:opt-separate}
  Evaluated on any distribution over posterior means
  $\marginalReport$, the optimal \mos scoring rule for the
  distribution $\marginalReport$ and the state space $\statespace$ is
  at least as good as the optimal scoring rule for the extended
  distribution $\extMarginalReport$ and the extended state space
  $\extstatespace$.
\end{lemma}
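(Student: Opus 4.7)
The plan is to exhibit a single utility function that is feasible for the max-over-separate problem on $(\marginalReport, \statespace)$ and simultaneously achieves the value $\OPT(\extMarginalReport, \scorebound, \extstatespace)$. I would obtain this function by first solving the extended problem and then transporting the solution back to the original problem.

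First, I would apply the symmetric characterization. Because $\extstatespace = \bigtimes_{i=1}^\numasm [a_i, 2\Pmeani - a_i]$ is a rectangle centered at $\priorMean$ and $\extMarginalReport$ is center symmetric about $\priorMean$ by construction, \Cref{thm:opt_symm_nd} gives that the optimal utility $\tilde\util$ for $\OPT(\extMarginalReport, \scorebound, \extstatespace)$ is symmetric V-shaped at $\priorMean$ on $\extreportspace$. Then \Cref{lem:v shape is max over separate} implements $\tilde\util$ as a \mos scoring rule whose $i$-th single-dimensional component $\sdscore_i$ is proper and bounded in $[0,\scorebound]$ on $[a_i, 2\Pmeani - a_i]$.

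Second, I would restrict this rule to the original problem. The standing assumption $\Pmeani \geq (a_i+b_i)/2$ gives $[a_i,b_i] \subseteq [a_i, 2\Pmeani - a_i]$, so each $\sdscore_i$ restricted to the shorter interval remains proper and bounded in $[0,\scorebound]$, and the collection assembles into a valid \mos scoring rule for $(\marginalReport, \statespace)$ whose induced utility function is the restriction of $\tilde\util$ to $\reportspace$.

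Third, I would equate the two objectives via the symmetry of $\tilde\util$. By \Cref{def:symmetric v-shaped scoring rule}, $\tilde\util(\priorMean) = 0$ and $\tilde\util(r) = \tilde\util(2\priorMean - r)$ for every $r \in \extreportspace$, since the two points lie at the same normalized distance along opposite rays from $\priorMean$ to the boundary of $\extreportspace$. Combined with $\extMarginalReport(r) = \tfrac12\bigl(\marginalReport(r) + \marginalReport(2\priorMean - r)\bigr)$ and the fact that $\marginalReport$ is supported in $\reportspace \subseteq \extreportspace$, the change of variables $r \mapsto 2\priorMean - r$ in the second half of the integral yields
\begin{align*}
\OPT(\extMarginalReport, \scorebound, \extstatespace) \;=\; \int_{\extreportspace} \tilde\util(r)\,\extMarginalReport(r)\,\de r \;=\; \int_{\reportspace} \tilde\util(r)\,\marginalReport(r)\,\de r,
\end{align*}
and the right-hand side is exactly the objective value of the restricted \mos rule on $(\marginalReport, \statespace)$, proving the lemma.

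The main subtlety I expect is verifying the identity $\tilde\util(r) = \tilde\util(2\priorMean - r)$: this is the real reason the symmetrization of $\marginalReport$ into $\extMarginalReport$ does not change the objective value of a symmetric V-shape, and it relies on central symmetry of $\extreportspace$ about $\priorMean$ so that $2\priorMean - r$ lies in $\extreportspace$ whenever $r$ does. Everything else (properness and boundedness under restriction, linearity of the objective in $\marginalReport$) is routine.
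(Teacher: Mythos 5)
Your proposal is correct and follows essentially the same route as the paper: center symmetry of $\extMarginalReport$ on the rectangular extended space makes the extended optimum a symmetric V-shape implementable as a \mos scoring rule (\Cref{thm:opt_symm_nd} and \Cref{lem:v shape is max over separate}), which restricts to a proper, bounded \mos rule on the original space, so the optimal \mos rule for $(\marginalReport,\statespace)$ can only do better. Your third step --- the symmetry/change-of-variables identity showing the restricted rule attains $\OPT(\extMarginalReport,\scorebound,\extstatespace)$ when evaluated on $\marginalReport$ --- is exactly the content of \Cref{lem:extopt-to-orig}, which the paper keeps as a separate lemma rather than folding into this proof.
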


\begin{lemma}
  \label{lem:extopt-to-orig}
  The symmetric optimizer $\extutil$ for the symmetric extended
  distribution $\extMarginalReport$ and extended state
  space $\extstatespace$ attains the same objective
  value on the original distribution $\marginalReport$, i.e.,
  $\objfunc(\extutil,\marginalReport) =
  \OPT(\extMarginalReport,\scorebound, \extstatespace)$.
\end{lemma}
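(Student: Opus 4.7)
}

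The plan is to exploit the symmetry of $\extutil$ around the prior mean $\priorMean$. First, I would note that by construction, the extended state space $\extstatespace = \bigtimes_{i=1}^{\numasm}[a_i, 2\Pmeani - a_i]$ is rectangular with center $\priorMean$, and a direct computation shows that $\extMarginalReport$ is center symmetric around $\priorMean$ (so its mean is $\priorMean$). By \Cref{thm:opt_symm_nd}, the optimizer $\extutil$ may be taken to be the symmetric V-shaped utility function centered at $\priorMean$, which satisfies two key properties: (i) $\extutil(\priorMean) = 0$ and (ii) $\extutil(\report) = \extutil(2\priorMean - \report)$ for every $\report \in \extreportspace$.

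Next I would rewrite the objective on the extended distribution using the definition $\extMarginalReport(\report) = \tfrac{1}{2}(\marginalReport(\report) + \marginalReport(2\priorMean - \report))$ (with $\marginalReport$ extended by zero outside $\reportspace$):
\begin{align*}
\OPT(\extMarginalReport, \scorebound, \extstatespace)
= \objfunc(\extutil, \extMarginalReport)
&= \int_{\extreportspace} \extutil(\report) \extMarginalReport(\report) \,\de\report - \extutil(\priorMean) \\
&= \tfrac{1}{2} \int_{\extreportspace} \extutil(\report) \marginalReport(\report) \,\de\report + \tfrac{1}{2} \int_{\extreportspace} \extutil(\report) \marginalReport(2\priorMean - \report) \,\de\report.
\end{align*}
For the second integral I would apply the change of variables $\report' = 2\priorMean - \report$ (Jacobian one, and $\extreportspace$ maps to itself by symmetry), then use property (ii) to replace $\extutil(2\priorMean - \report')$ by $\extutil(\report')$. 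Both integrals then coincide, so the displayed expression equals $\int_{\extreportspace} \extutil(\report)\, \marginalReport(\report) \,\de\report$.

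Finally, since $\marginalReport$ is supported on $\reportspace \subseteq \extreportspace$ and $\extutil(\priorMean) = 0$, we have $\objfunc(\extutil, \marginalReport) = \int_{\reportspace} \extutil(\report) \marginalReport(\report) \,\de\report - \extutil(\priorMean) = \int_{\extreportspace} \extutil(\report) \marginalReport(\report) \,\de\report$, which matches the quantity above, giving the desired equality. The only subtle step is the symmetry identity $\extutil(\report) = \extutil(2\priorMean - \report)$; this follows by writing any $\report$ as $\alpha \report_0 + (1-\alpha)\priorMean$ for some $\report_0 \in \boundaryReport$, observing that its reflection is $\alpha(2\priorMean - \report_0) + (1-\alpha)\priorMean$ with $2\priorMean - \report_0$ also on the boundary by center symmetry of $\extreportspace$, and applying \Cref{def:symmetric v-shaped scoring rule} to both expressions to get value $\alpha/2$.
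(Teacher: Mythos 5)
Your proposal is correct and follows essentially the same route as the paper's proof: invoke the center symmetry of $\extMarginalReport$ and \Cref{thm:opt_symm_nd} to take $\extutil$ symmetric V-shaped, split the objective via the definition of $\extMarginalReport$, and use the reflection symmetry $\extutil(\report) = \extutil(2\priorMean - \report)$ under the change of variables to collapse the two integrals to $\int_{\reportspace} \extutil(\report)\,\marginalReport(\report)\, \de \report = \objfunc(\extutil,\marginalReport)$. The only difference is that you spell out the symmetry identity for $\extutil$ explicitly, which the paper leaves implicit.
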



\begin{lemma}
  \label{lem:symmetry-in-ext}
  On extended state space $\extstatespace$, the optimal value of
  \Cref{eq:program} for the symmetric extended
  distribution $\extMarginalReport$ is at least half that for the
  original distribution $\marginalReport$, i.e.,
  $\OPT(\extMarginalReport,\scorebound, \extstatespace) \geq
  \frac{1}{2} \OPT(\marginalReport,\scorebound, \extstatespace)$.
\end{lemma}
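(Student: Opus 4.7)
The plan is to take any feasible utility $\util$ for \Cref{eq:program} on $(\marginalReport,\scorebound,\extstatespace)$, and explicitly construct a symmetrized feasible utility $\tilde{\util}$ for $(\extMarginalReport,\scorebound,\extstatespace)$ whose objective on $\extMarginalReport$ is at least half of $\util$'s objective on $\marginalReport$. Applied to the optimizer, this yields the desired inequality.

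The construction is the average of $\util$ with its reflection through the prior mean: define $\tilde{\util}(\report) = \tfrac{1}{2}\bigl(\util(\report) + \util(2\priorMean-\report)\bigr)$. I would first verify feasibility. Because the extended report and state spaces are point-symmetric about $\priorMean$, the reflected function $\util(2\priorMean-\cdot)$ is convex on $\extreportspace$, has subgradient $-\sg(2\priorMean-\report)$, and satisfies the boundedness constraint on $\extreportspace\times\extstatespace$ (via the substitution $\report'=2\priorMean-\report$, $\state'=2\priorMean-\state$, which lie in $\extreportspace$ and $\extstatespace$ by symmetry). Averaging convex functions preserves convexity; averaging the bounds gives the bound $\scorebound$ for $\tilde{\util}$ with subgradient the average of the two. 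Finally $\tilde{\util}(\priorMean)=\util(\priorMean)=0$, so $\tilde{\util}$ is feasible.

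Next I would compute the objective. Using the definition of $\extMarginalReport$ and a change of variables $\report\mapsto 2\priorMean-\report$ on appropriate terms, the four-term expansion collapses to
\begin{align*}
\objfunc(\tilde{\util},\extMarginalReport)
= \tfrac{1}{2}\!\int_{\extreportspace}\!\util(\report)\,\marginalReport(\report)\,\de\report
+ \tfrac{1}{2}\!\int_{\extreportspace}\!\util(2\priorMean-\report)\,\marginalReport(\report)\,\de\report.
\end{align*}
The first term is $\tfrac{1}{2}\objfunc(\util,\marginalReport)$. For the second, Jensen's inequality applied to the convex function $\util(2\priorMean-\cdot)$ together with $\expect{\report\sim\marginalReport}{\report}=\priorMean$ gives $\int\util(2\priorMean-\report)\,\marginalReport(\report)\,\de\report \geq \util(\priorMean)=0$. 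Taking $\util$ to be the optimizer on $(\marginalReport,\scorebound,\extstatespace)$ yields $\OPT(\extMarginalReport,\scorebound,\extstatespace) \geq \objfunc(\tilde{\util},\extMarginalReport) \geq \tfrac{1}{2}\OPT(\marginalReport,\scorebound,\extstatespace)$.

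The main obstacle is the feasibility check, in particular verifying that boundedness is inherited by the reflected function; this is exactly why the argument is carried out on the symmetric extended state space $\extstatespace$ rather than the original $\statespace$, since reflecting a point in $\statespace$ through $\priorMean$ need not stay in $\statespace$ but does stay in $\extstatespace$.
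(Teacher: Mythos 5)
Your proposal is correct and is essentially the paper's argument: the heart of both is the identity $\objfunc(\cdot,\extMarginalReport)=\tfrac12\int_{\reportspace}\util(\report)\marginalReport(\report)\,\de\report+\tfrac12\int_{\reportspace}\util(2\priorMean-\report)\marginalReport(\report)\,\de\report$ followed by Jensen's inequality at the prior mean together with $\util(\priorMean)=0$, applied to the optimizer for $(\marginalReport,\scorebound,\extstatespace)$. The only difference is that you additionally symmetrize the utility and verify its feasibility; this step is sound but redundant, since the paper simply evaluates the original (unsymmetrized) optimal utility against $\extMarginalReport$ -- it is automatically feasible because the constraints of \cref{eq:program} depend only on $\extstatespace$ and on the prior mean, which $\marginalReport$ and $\extMarginalReport$ share -- and, as your own collapse computation shows, symmetrizing the utility does not change its objective against the symmetric distribution.
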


\begin{lemma}\label{lem:obj ratio}
  For any distribution over posterior means $\marginalReport$, the
  optimal value of \Cref{eq:program} on the extended state space
  $\extstatespace$ is at least a quarter of that of the original state
  space $\statespace$, i.e.,
  $\OPT(\marginalReport,\scorebound, \extstatespace) \geq \frac{1}{4}
  \OPT(\marginalReport,\scorebound, \statespace)$ or equivalently
  $\OPT(\marginalReport,4\scorebound, \extstatespace) \geq
  \OPT(\marginalReport,\scorebound, \statespace)$.
\end{lemma}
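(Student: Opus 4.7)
The plan is to build, from an optimizer $\util^*$ of $\OPT(\marginalReport,\scorebound,\statespace)$ with $\util^*(\priorMean)=0$, a feasible utility $\util'$ for $\OPT(\marginalReport,4\scorebound,\extstatespace)$ that achieves the same objective. Since $\supp(\marginalReport)\subseteq\reportspace\subseteq\extreportspace$, the objective depends only on $\util'$ restricted to $\reportspace$, so I aim to extend $\util^*$ outward to $\extreportspace$ without altering it on $\reportspace$, paying only in the score bound.

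First, I reduce to the case where the state space equals the report space: because $\util(\state)-\util(\report)-\sg(\report)\cdot(\state-\report)$ is convex in $\state$ for any fixed $\report$ and subgradient choice, its maximum over $\reportspace=\conv(\statespace)$ is attained at extreme points, which lie in $\statespace$; hence $\OPT(\marginalReport,\scorebound,\statespace)=\OPT(\marginalReport,\scorebound,\reportspace)$, and I assume the state space is $\reportspace$ throughout. I then extend $\util^*$ to $\extreportspace$ via the minimal convex extension
\[
\util'(\report) := \sup_{\report'\in\reportspace}\bigl\{\util^*(\report')+\sg^*(\report')\cdot(\report-\report')\bigr\},
\]
which is convex on $\extreportspace$, coincides with $\util^*$ on $\reportspace$, and hence preserves both $\util'(\priorMean)=0$ and $\objfunc(\util',\marginalReport)=\OPT(\marginalReport,\scorebound,\reportspace)$.

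The principal obstacle is to verify $\util'(\state)-\util'(\report)-\sg'(\report)\cdot(\state-\report)\leq 4\scorebound$ for all $\report\in\extreportspace$ and $\state\in\extstatespace$. The key structural fact is that $\extreportspace$ extends $\reportspace$ by at most $b_i-a_i$ in each coordinate, since $2\Pmeani-a_i-b_i=2(\Pmeani-\sfrac{a_i+b_i}{2})\leq b_i-a_i$ using $\Pmeani\leq b_i$. Given this, for any $(\report,\state)\in\extreportspace\times\extstatespace$ I project coordinate-wise to $(\bar\report,\bar\state)\in\reportspace\times\reportspace$, apply the original boundedness inequality there to obtain a contribution of at most $\scorebound$, and account for the ``excess'' differences $\util'(\state)-\util'(\bar\state)$ and $\util'(\bar\report)-\util'(\report)$ together with the corresponding subgradient corrections via the subgradient inequality applied at the boundary of $\reportspace$. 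Each such residual is at most $\scorebound$, so the factor $4=2\cdot 2$ arises naturally, with one factor of $2$ coming from the extension of the state space and one from the extension of the report space.

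The main technical challenge is this last verification: carefully tracking how the convex-extension subgradient $\sg'$ changes as one moves across the boundary of $\reportspace$ and ensuring that the residual terms telescope cleanly to yield at most $3\scorebound$ of additional slack beyond the original bound of $\scorebound$ on $\reportspace\times\reportspace$. An alternative, potentially cleaner, route is to instead define $\util'$ as an affine contraction-and-rescaling of $\util^*$, namely $\util'(\report)=c\,\util^*(\phi(\report))$ for a suitable affine map $\phi:\extreportspace\to\reportspace$ with $\phi(\priorMean)=\priorMean$ and scale factor $c$; the bound then follows directly from the change-of-variables, though the objective trade-off (via Jensen's inequality) must be balanced against the contraction factor.
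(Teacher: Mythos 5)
Your construction is genuinely different from the paper's, and the difference matters. The paper does not extend $\util^*$ minimally: it replaces $\util^*$ by the convex-conical function obtained by shifting every usable supporting hyperplane to pass through $(\priorMean,0)$. That shift weakly increases the objective, costs one factor of $2$ in the bound on $\statespace$, and then the conical structure (the boundedness gap is linear along rays from $\priorMean$, and $\extstatespace$ sits inside the symmetrized convex hull scaled by $2$ about $\priorMean$) delivers the second factor of $2$. Your route keeps the objective exactly, but then the entire content of the lemma is the feasibility of $\util'$ with bound $4\scorebound$ on $\extstatespace$, and that is exactly the step you leave as a sketch — and the sketch as stated is wrong. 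The claim that each residual in your projection decomposition is at most $\scorebound$ fails: already in one dimension with $\statespace=[0,1]$, $\scorebound=1$, $\priorMean=1-\epsilon$, the utility $\util(\report)=\maxzero{2(\report-\sfrac{1}{2})}$ with subgradients $0$ and $2$ is feasible for bound $1$, yet its minimal convex extension to $[0,2-2\epsilon]$ has $\util'(2-2\epsilon)-\util'(1)=2-4\epsilon\approx 2\scorebound$, and the full gap from a report in $[0,\sfrac{1}{2})$ to the state $2-2\epsilon$ is about $3\scorebound$; a single ``residual'' can thus consume $2\scorebound$–$3\scorebound$, so the $4=2\times 2$ accounting does not go through. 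Note your verification uses only feasibility of $\util^*$, never optimality, so it must hold for all feasible utilities; at that level of generality the clean version of your argument (normalize so $0\le\util^*\le\scorebound$ on $\statespace$, observe every usable tangent plane then has values in $[-\scorebound,\scorebound]$ on $\statespace$, and $\extstatespace$ is contained in $\statespace$ scaled by $2$ about the corner $a$) yields only a $6\scorebound$ bound, which would degrade the downstream $8$-approximation. Whether the minimal extension in fact always meets $4\scorebound$ is left open by your argument; re-anchoring all hyperplanes at $(\priorMean,0)$ is precisely the device the paper uses to make the constant $4$ come out.

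Your suggested alternative, $\util'(\report)=c\,\util^*(\phi(\report))$ for an affine contraction $\phi$ fixing $\priorMean$, has the opposite problem: for a general convex $\util^*$ with $\util^*(\priorMean)=0$, convexity only gives $\util^*(\phi(\report))\le\lambda\,\util^*(\report)$ for the contraction factor $\lambda$, with no matching lower bound (a hinge-shaped $\util^*$ can vanish identically on $\phi(\reportspace)$), so the objective can drop by an unbounded factor — unless you first replace $\util^*$ by its conical majorant through $(\priorMean,0)$, at which point you have reproduced the paper's construction.
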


\section{Elicitation of Full Distribution}
\label{sec:full}
In this section, we consider the extension where the state space is finite, i.e., $\statespace=\{\state_1,\dots,\state_m\}$, 
and the principal can elicit the full posterior distribution from the agent instead of just the statistics such as the posterior mean. 
In this case, by slightly overloading the notations, 
we also use $\state$ to represent a $m$-dimensional unit vector, 
where the $i$th coordinate is $1$ if and only if $\state = \state_i$. 
The following lemma characterizes the set of proper scoring rules for eliciting the full distribution. 

\begin{definition}[Proper]\label{def:proper_full}
A scoring rule $\score(\report, \theta)$ is proper for eliciting full distribution if 
for any distribution~$\posterior$
and report $\report \in \Delta(\statespace)$, 
we have 
\begin{align*}
\expect{\state\sim\posterior}{\score(\posterior, \theta)}
\geq \expect{\state\sim\posterior}{\score(\report, \theta)}.
\end{align*}
\end{definition}


\begin{lemma}[\citealp{Mcc-56}]
\label{thm:characterization of report full dist}
For any finite state space $\statespace$ and corresponding report space $\reportspace$,
a scoring rule $\score$ is proper for eliciting the full distribution in space $\statespace$
if and only if there exists a convex function $\util: \reportspace \to \reals$ such that
for any report 
$\reportFull \in \reportspace$ 
and any state $\state \in \statespace$,
we have 
$$
 \score(\reportFull, \state)= \util(\reportFull) +\sg(\reportFull) \cdot (\state-\reportFull),
$$
where $\sg(\reportFull) \in \subgradient \util(\reportFull)$ is a subgradient of $\util$.\footnote{Note that in general the principal can also provide additional state-dependent rewards $\kappa(\state)$ without affecting the incentives. 
However, for eliciting the full distribution, for any $\util(\reportFull),\sg(\reportFull) \in \subgradient \util(\reportFull)$ and $\kappa(\state)$, 
there exists $\util^{\dagger}(\reportFull),\sg^{\dagger}(\reportFull) \in \subgradient \util^{\dagger}(\reportFull)$
such that $\util(\reportFull) +\sg(\reportFull) \cdot (\state-\reportFull) + \kappa(\state)= \util^{\dagger}(\reportFull) +\sg^{\dagger}(\reportFull) \cdot (\state-\reportFull)$.
Therefore, considering the representation without $\kappa$ is without loss for eliciting the full distribution.}
\end{lemma}
 
Applying the characterization in \cref{thm:characterization of report full dist}, for any $\posterior\in\reportspace$, we have
\begin{align*}
\expect{\state \sim \posterior}{\score(\mean{\posterior}, \state)}
= \expect{\state \sim \posterior}{\util(\reportFull) +\sg(\reportFull) \cdot (\state-\reportFull)}
= \util(\reportFull) +\sg(\reportFull) \cdot \expect{\state \sim \posterior}{(\state-\reportFull)}
= \util(\reportFull),
\end{align*}
and thus the optimization program can be simplified as follows. 
\begin{align}
\OPT(\marginalReport, \scorebound, \statespace) 
= \max_\util \qquad &\int_{\reportspace} \util(\posterior)\marginalReport(\posterior)
\ \textrm{d}\posterior - \util(\prior)\label{eq:program full}\\
\text{s.t.} \qquad 
&\util\text{ is a convex function,}\nonumber\\ 
&\sg(\posterior) \in \subgradient \util(\posterior), \quad \forall \posterior \in \reportspace, \nonumber \\
&\util(\posterior)
+\sg(\posterior) \cdot (\posterior-\report) \in [0, \scorebound], 
\quad \forall \posterior \in \reportspace, \state\in \statespace,\nonumber\\
&\reportspace = \Delta(\statespace)\nonumber.
\end{align}

\subsection{Optimal Betting Mechanism}\label{sec:report full}

In this section, we characterize the optimal scoring rule for eliciting the full distribution. 
We first formally define the \emph{betting mechanism}, which is a non-revelation mechanism to help explain the structure of the mechanism. 
The main difference compared to the betting mechanism for eliciting the mean is that instead of taking bets on hyperplanes, 
the betting mechanism for eliciting the full distribution now takes bets on the set of states. 
Again we focus on the case that the score bound is $\scorebound=1$.


\begin{definition}
In the betting mechanism with parameter $c$, 
when the prior is $\prior$, 
the agent chooses the score $s(\theta) \in [0,1]$ for each state $\theta \in \Theta$ satisfying $\expect{\theta \sim \prior}{s(\theta)} \leq c$. 
The agent receives score $s(\theta)$ when the realized state is $\state$.
\end{definition}

Note that the revelation version of the betting mechanism is a feasible solution of Program~(\ref{eq:mainprogram}) as the score is obviously bounded. 
Given this betting mechanism, the optimal choice of the agent is actually quite simple. 
Given any posterior belief $\posterior$, 
the problem of maximizing 
$\expect{\theta \sim \posterior}{s(\theta)}$. 
subject to the constraints that 
$\expect{\theta \sim \prior}{s(\theta)} \leq c$
and $s(\theta) \in [0,1]$ for any $\theta$
is exactly the fractional knapsack problem, which can be solved by greedily assigning score $s(\theta)$ to 1 according the ratio $\frac{\posterior(\state)}{\prior(\state)}$. 
The following lemma is folklore and hence we omit the proof of it.
\begin{lemma}\label{lem:frac knapsack}
For any posterior belief $\posterior$, the agent's optimal choice is to select $\state^*, s(\theta^*)\in[0,1]$ and $\statespace_1\in \statespace\backslash\{\state^*\}$
such that
\begin{enumerate}
    \item $s(\theta)=1$ and $\frac{\posterior(\state)}{\prior(\state)}\geq \frac{\posterior(\state^*)}{\prior(\state^*)}$ for any $\state\in\statespace_1$;
    \item $s(\theta)=0$ and $\frac{\posterior(\state)}{\prior(\state)}\leq \frac{\posterior(\state^*)}{\prior(\state^*)}$ for any $\state\not\in\statespace_1\cup \{\state^*\}$;
    \item $\expect{\theta \sim \prior}{s(\theta)} = c$.
\end{enumerate}
\end{lemma}

\cref{lem:frac knapsack} indicates that the optimal choice of the agent is essentially betting on a subset of the states (potentially a randomized subset since $s(\theta^*)\in[0,1]$)
that maximizes the probability of this subset given posterior belief $\posterior$ 
subject to the constraint that the prior places probability exactly $c$ on that subset. 
The proof of \cref{thm:betting_full} is provided in \cref{apx:report full}. 


\begin{theorem}\label{thm:betting_full}
The optimal scoring rule for eliciting the full distribution is the revelation version of the betting mechanism. 
\end{theorem}

The optimal choice of parameter $c$ in the betting mechanism depends on the prior and the distribution over posterior beliefs. 
In \cref{apx:report full}, we show that the objective value of the betting mechanism is concave in the choice of parameter $c$ and hence a simple choice of setting $c=\frac{1}{2}$ is approximately optimal
regardless of the distribution over posteriors. 

\paragraph{linking incentives}
Note that both separate scoring rules and \mos scoring rules for eliciting the mean have straightforward interpretations in the model of eliciting the full distribution. 
Essentially, by treating each state as a separate dimension, eliciting the full distribution is equivalent to eliciting the probability of each state. 
Moreover, each dimension can be viewed as a Bernoulli distribution where eliciting the mean is equivalent to eliciting the full distribution. 
Consequently, our results in \cref{sec:multi} can be naturally extended to show that, in order to incentivize effort, it is crucial to link incentives across different states when eliciting the full distribution.

\subsection{Comparison of Eliciting Mean and Full Distribution}
In many application of interests, the principal elicits distributional information from the agent for making better decisions, 
and the optimal choice of action often only depends on the posterior mean. 
In this section, we show that even when information beyond the posterior mean is not useful for subsequent optimizations, 
the principal may still wish to elicit information beyond the posterior mean in order to better incentivize the agent to exert costly effort. 
Specifically, in this section, we measure the multiplicative gap between the optimal proper
scoring rule for eliciting the full distribution and the optimal
proper scoring rule for eliciting the mean, 
and show that the gap can be unbounded, even when the
size of the state space is a constant. 
The proof of \Cref{thm:gap between full dist and mean} is deferred to \Cref{sec:appendix-proof-gap between full dist and mean}.

\begin{theorem}\label{thm:gap between full dist and mean}
For any $\epsilon \in (0, \frac{1}{2}]$, there exists a finite state space $\statespace\subseteq \reals$, 
and a distribution~$\distoverposterior$ over the posteriors 
such that the objective value of optimal scoring rule for eliciting full
distribution is at least $\sfrac{1}{4}$, while the objective value
of optimal scoring rule for eliciting mean is at most~$\epsilon$.
\end{theorem}

\cref{thm:gap between full dist and mean} implies that there exists settings where if the principal restricts to only eliciting the mean, the agent will not exert costly effort and 
only report the prior mean to the principal. 
However, if the principal elicits the full distribution, the agent can be incentivized to exert costly effort and report the updated posterior belief to the principal, 
which is beneficial for the principal's subsequent optimizations. 
Therefore, in general, the principal may face a tradeoff between minimizing the communication cost 
and incentivizing effort for high quality information. 
It is an interesting open question to understand the optimal tradeoff of the principal in various applications of information elicitation.




\bibliography{ref.bib}

\begin{thebibliography}{36}
\providecommand{\natexlab}[1]{#1}
\providecommand{\url}[1]{\texttt{#1}}
\expandafter\ifx\csname urlstyle\endcsname\relax
  \providecommand{\doi}[1]{doi: #1}\else
  \providecommand{\doi}{doi: \begingroup \urlstyle{rm}\Url}\fi

\bibitem[Abernethy and Frongillo(2012)]{AF-12}
Jacob~D Abernethy and Rafael~M Frongillo.
\newblock A characterization of scoring rules for linear properties.
\newblock In \emph{Conference on Learning Theory}, pages 27--1, 2012.

\bibitem[Armstrong(1996)]{arm-96}
Mark Armstrong.
\newblock Multiproduct nonlinear pricing.
\newblock \emph{Econometrica}, 64\penalty0 (1):\penalty0 51--75, 1996.

\bibitem[Arrow(1974)]{arrow1974limits}
Kenneth~J Arrow.
\newblock \emph{The limits of organization}.
\newblock WW Norton \& Company, 1974.

\bibitem[Babaioff et~al.(2014)Babaioff, Immorlica, Lucier, and Weinberg]{BILW-14}
Moshe Babaioff, Nicole Immorlica, Brendan Lucier, and S~Matthew Weinberg.
\newblock A simple and approximately optimal mechanism for an additive buyer.
\newblock In \emph{IEEE 55th Annual Symposium on Foundations of Computer Science}, pages 21--30, 2014.

\bibitem[Ben-David and Borodin(1994)]{BB-94}
Shai Ben-David and Allan Borodin.
\newblock A new measure for the study of on-line algorithms.
\newblock \emph{Algorithmica}, 11\penalty0 (1):\penalty0 73--91, 1994.

\bibitem[Blackwell(1953)]{blackwell1953equivalent}
David Blackwell.
\newblock Equivalent comparisons of experiments.
\newblock \emph{The annals of mathematical statistics}, pages 265--272, 1953.

\bibitem[Bloedel and Zhong(2021)]{bloedel2021cost}
Alexander~W Bloedel and Weijie Zhong.
\newblock The cost of optimally-acquired information.
\newblock \emph{working paper}, 2021.

\bibitem[Briest et~al.(2015)Briest, Chawla, Kleinberg, and Weinberg]{BCKW-15}
Patrick Briest, Shuchi Chawla, Robert Kleinberg, and S~Matthew Weinberg.
\newblock Pricing lotteries.
\newblock \emph{Journal of Economic Theory}, 156:\penalty0 144--174, 2015.

\bibitem[Cai et~al.(2015)Cai, Daskalakis, and Papadimitriou]{CDP-15}
Yang Cai, Constantinos Daskalakis, and Christos Papadimitriou.
\newblock Optimum statistical estimation with strategic data sources.
\newblock In \emph{Conference on Learning Theory}, pages 280--296, 2015.

\bibitem[Caplin et~al.(2022)Caplin, Dean, and Leahy]{caplin2022rationally}
Andrew Caplin, Mark Dean, and John Leahy.
\newblock Rationally inattentive behavior: Characterizing and generalizing shannon entropy.
\newblock \emph{Journal of Political Economy}, 130\penalty0 (6):\penalty0 1676--1715, 2022.

\bibitem[Chen and Yu(2021)]{CY-21}
Yiling Chen and Fang-Yi Yu.
\newblock Optimal scoring rule design.
\newblock \emph{arXiv preprint arXiv:2107.07420}, 2021.

\bibitem[Chen and Zheng(2019)]{CZ-19}
Yiling Chen and Shuran Zheng.
\newblock Prior-free data acquisition for accurate statistical estimation.
\newblock In \emph{Proceedings of the 2019 ACM Conference on Economics and Computation}, pages 659--677, 2019.

\bibitem[Chen et~al.(2018)Chen, Immorlica, Lucier, Syrgkanis, and Ziani]{CILSZ-18}
Yiling Chen, Nicole Immorlica, Brendan Lucier, Vasilis Syrgkanis, and Juba Ziani.
\newblock Optimal data acquisition for statistical estimation.
\newblock In \emph{Proceedings of the 2018 ACM Conference on Economics and Computation}, pages 27--44, 2018.

\bibitem[Dasgupta(2023)]{Das-23}
Sulagna Dasgupta.
\newblock Optimal test design for knowledge-based screening.
\newblock In \emph{Proceedings of the 24th ACM Conference on Economics and Computation}, page 514, 2023.

\bibitem[Deb et~al.(2018)Deb, Pai, and Said]{deb2018evaluating}
Rahul Deb, Mallesh~M Pai, and Maher Said.
\newblock Evaluating strategic forecasters.
\newblock \emph{American Economic Review}, 108\penalty0 (10):\penalty0 3057--3103, 2018.

\bibitem[Dhangwatnotai et~al.(2015)Dhangwatnotai, Roughgarden, and Yan]{DRY-15}
Peerapong Dhangwatnotai, Tim Roughgarden, and Qiqi Yan.
\newblock Revenue maximization with a single sample.
\newblock \emph{Games and Economic Behavior}, 91:\penalty0 318--333, 2015.

\bibitem[Frankel and Kamenica(2019)]{FK-19}
Alexander Frankel and Emir Kamenica.
\newblock Quantifying information and uncertainty.
\newblock \emph{American Economic Review}, 109\penalty0 (10):\penalty0 3650--80, 2019.

\bibitem[Frongillo and Witkowski(2017)]{FW-17}
Rafael Frongillo and Jens Witkowski.
\newblock A geometric perspective on minimal peer prediction.
\newblock \emph{ACM Transactions on Economics and Computation}, 5\penalty0 (3):\penalty0 1--27, 2017.

\bibitem[Gneiting and Raftery(2007)]{GR-07}
Tilmann Gneiting and Adrian~E Raftery.
\newblock Strictly proper scoring rules, prediction, and estimation.
\newblock \emph{Journal of the American Statistical Association}, 102\penalty0 (477):\penalty0 359--378, 2007.

\bibitem[Haghpanah and Hartline(2015)]{HH-15}
Nima Haghpanah and Jason Hartline.
\newblock Reverse mechanism design.
\newblock In \emph{Proceedings of the Sixteenth ACM Conference on Economics and Computation}, pages 757--758, 2015.

\bibitem[Hartline et~al.(2023)Hartline, Li, Shan, and Wu]{HLSW-21a}
Jason Hartline, Yingkai Li, Liren Shan, and Yifan Wu.
\newblock Optimal scoring rules for multi-dimensional effort.
\newblock \emph{to appear at Conference on Learning Theory}, 2023.

\bibitem[Hartline(2012)]{Hartline-12}
Jason~D Hartline.
\newblock Approximation in mechanism design.
\newblock \emph{American Economic Review}, 102\penalty0 (3):\penalty0 330--336, 2012.

\bibitem[Jackson and Sonnenschein(2007)]{JS-07}
Matthew~O Jackson and Hugo~F Sonnenschein.
\newblock Overcoming incentive constraints by linking decisions.
\newblock \emph{Econometrica}, 75\penalty0 (1):\penalty0 241--257, 2007.

\bibitem[Kong(2021)]{Kong-21}
Yuqing Kong.
\newblock Optimizing multi-task peer prediction.
\newblock \emph{arXiv preprint arXiv:2103.02214}, 2021.

\bibitem[Lambert(2022)]{lam-11}
Nicolas~S Lambert.
\newblock Elicitation and evaluation of statistical forecasts.
\newblock \emph{revised and resubmitted at Econometrica}, 2022.

\bibitem[McCarthy(1956)]{Mcc-56}
John McCarthy.
\newblock Measures of the value of information.
\newblock \emph{Proceedings of the National Academy of Sciences of the United States of America}, 42\penalty0 (9):\penalty0 654, 1956.

\bibitem[Mookherjee and Tsumagari(2014)]{mookherjee2014mechanism}
Dilip Mookherjee and Masatoshi Tsumagari.
\newblock Mechanism design with communication constraints.
\newblock \emph{Journal of Political Economy}, 122\penalty0 (5):\penalty0 1094--1129, 2014.

\bibitem[Neyman et~al.(2021)Neyman, Noarov, and Weinberg]{NNW-20}
Eric Neyman, Georgy Noarov, and S~Matthew Weinberg.
\newblock Binary scoring rules that incentivize precision.
\newblock \emph{Proceedings of the 22nd ACM Conference on Economics and Computation}, 2021.

\bibitem[Osband(1989)]{Osb-89}
Kent Osband.
\newblock Optimal forecasting incentives.
\newblock \emph{Journal of Political Economy}, 97\penalty0 (5):\penalty0 1091--1112, 1989.

\bibitem[Osband and Reichelstein(1985)]{Osb-85}
Kent Osband and Stefan Reichelstein.
\newblock Information-eliciting compensation schemes.
\newblock \emph{Journal of Public Economics}, 27\penalty0 (1):\penalty0 107--115, 1985.

\bibitem[Papireddygari and Waggoner(2022)]{PW-22}
Maneesha Papireddygari and Bo~Waggoner.
\newblock Contracts with information acquisition, via scoring rules.
\newblock In \emph{Proceedings of the 23rd ACM Conference on Economics and Computation}, page 703–704, 2022.

\bibitem[Pomatto et~al.(2023)Pomatto, Strack, and Tamuz]{pomatto2023cost}
Luciano Pomatto, Philipp Strack, and Omer Tamuz.
\newblock The cost of information: The case of constant marginal costs.
\newblock \emph{American Economic Review}, 113\penalty0 (5):\penalty0 1360--1393, 2023.

\bibitem[Rochet(1985)]{roc-85}
Jean-Charles Rochet.
\newblock The taxation principle and multi-time hamilton-jacobi equations.
\newblock \emph{Journal of Mathematical Economics}, 14\penalty0 (2):\penalty0 113--128, 1985.

\bibitem[Savage(1971)]{sav-71}
Leonard~J Savage.
\newblock Elicitation of personal probabilities and expectations.
\newblock \emph{Journal of the American Statistical Association}, 66\penalty0 (336):\penalty0 783--801, 1971.

\bibitem[Sims(2003)]{sims2003implications}
Christopher~A Sims.
\newblock Implications of rational inattention.
\newblock \emph{Journal of monetary Economics}, 50\penalty0 (3):\penalty0 665--690, 2003.

\bibitem[Zermeno(2011)]{Z-11}
Luis Zermeno.
\newblock A principal-expert model and the value of menus.
\newblock \emph{Job market paper, Massachusetts Institute of Technology}, 2011.

\end{thebibliography}
\appendix
\section{Detailed Discussion of Related Work.}
\label{app:related-work}

In this section we give a detailed discusson of some of the most
related works.

The early work of \citet*{Osb-89} is close to ours in that it assumes
that the agent has a prior and, with a continuous level effort, can
receive a signal from which the prior is updated to a posterior.  The
principal then aims at optimizing a quadratic loss function while
incentivizing the agent to both put in effort and truthfully report
the posterior.  \citet*{Osb-89} imposes additional constraints on the
scoring rule such that the restricted optimal scoring rule is
quadradic.  In our setting of binary effort, we impose no constraint
on the scoring rule except the ex post boundedness, and we find that
the optimal scoring rule for incentivizing effort is V-shape instead
of quadratic.

\citet*{Z-11} considers the optimization of scoring rules
in the binary state setting, and he shows that among all scoring rules
that induces a certain level of effort, the V-shaped scoring rule is
the one that minimizes the expected transfer from the principal to the
agent.  This objective is qualitatively different from ours, where we
consider the objective of maximizing the agent's expected surplus for
exerting effort, subject to the ex post boundedness constraint.  In
addition, the model in \citet*{Z-11} is restricted in the following
two ways: 1) it only considers single dimensional (i.e., binary
states) optimization problem; 2) agent's cost of effort is known to
the principal.  In our paper, we show that the V-shaped scoring rule
is optimal in the single dimensional problem even when the agent has
private cost of effort, and more importantly, in the multi-dimensional
problem, the V-shaped scoring rule is approximately optimal for
eliciting effort.

Contemporaneously with and independently from our
work, \citet*{NNW-20} consider the optimization of scoring rules for a
binary state setting with uniform prior.  The forecaster has access to
costly samples and solves the optimal stopping problem given the cost
and the scoring rule.  They show that all scoring rules can be ranked
by an incentivization index such that when the cost of the
forecaster's samples converges to zero, the scoring rule with higher
incentivization index induces lower prediction error given that the
forecaster optimizes his expected reward net the cost.  The authors
characterize the scoring rule that maximizes the incentivization
index.  The main difference between their paper and ours is: in their
model, different scoring rules only lead to prediction error with
lower order terms that vanishes to zero, and under equilibrium the
forecaster acquires almost perfect information about the state.  In
contrast, in our model, scoring rule plays a crucial rule for
incentivizing effort, and both the additive gap and the multiplicative
gap between the optimal scoring rule and heuristic scoring rules
(e.g., quadratic scoring rules) for providing incentives can be large.

\citet*{FW-17}
considers the same optimization goal of maximizing incentive in the
different single-task peer prediction setting. In the peer prediction
model, the designer does not have access to a sample of the ground
truth and must cross reference the reports from different agents to
elicit the truthful report. Thus, the truthful peer prediction
mechanism is unique up to positive affine transformations. Their
optimization program reduces to the optimization of the parameters for
affine transformations, which is significantly different from the
optimization of scoring rules.

\section{Missing Proofs in Preliminaries}
\label{app:continuity}

\subsection{Proofs in Section~\ref{subsec:formal_program}}
\label{apx:prelim}

\begin{proof}[Proof of Lemma~\ref{prop:canonical-proper}]
Canonical scoring rules have the following simple interpretation.  By
making a report $\report$, the agent selects the supporting hyperplane
of $\util$ at $\report$ on which to evaluate the state.  This
supporting hyperplane has gradient $\sg(\report)$ and contains point
$(\report,\util(\report))$.  The agent's utility is equal to the value
of the realized state $\state$ on this hyperplane (plus constant
$\kappa(\state)$ which is independent of the agent's report).  With
utility given by a random point on a hyperplane, the expected utility
is equal to its mean on the hyperplane.  When the agent's true
posterior belief is that the state has mean $\report$, the agent's
expected utility is $\util(\report)$ (plus a constant equal to the
expected value of $\kappa(\cdot)$ under the agent's posterior belief;
summarized below as \Cref{lem:canonical-expectation}).  Misreporting
$\report'$ with belief $\report$ gives a utility equal to the value of
$\report$ on the supporting hyperplane with gradient $\sg(\report')$
at $\report'$.  By convexity of $\util$, a report of $\report$ gives
the higher utility of $\util(\report)$.
\end{proof}

\begin{proof}[Proof of \cref{lem:canonical-bound}]
\Cref{eq:canonical-expected-utility} can be derived through simple algebraic calculation, and hence is omitted here. 

Moreover, similar to the proof of \Cref{prop:canonical-proper}, canonical
scoring rules (\Cref{def:canonical}) can be interpreted via
supporting hyperplanes of the utility function.  The first term on
the left-hand side of \eqref{eq:canonical-bound} upper bounds the
utility that an agent can obtain at state $\state$, specifically, it
is the utility from reporting state $\state$.  The remainder of the
left-hand side subtracts the utility that the agent obtains from
report $\report$ in state $\state$, i.e., it evaluates, at state
$\state$, the supporting hyperplane of $\util$ at report $\report$.
Thus, the boundedness constraint of canonical scoring rule $\score$ requires the difference between the
utility function and the value of any supporting hyperplane of the
utility function to be bounded at all states $\state
\in \statespace$. 
\Cref{f:single}(a) illustrates this bound.

Given any utility $\util$ and subgradient $\sg$ that satisfies the inequality, 
To enforce that the score is within $[0,\scorebound]$, 
for any state $\state\in\statespace$,
it is sufficient to select $\kappa(\state)$ equals the negative of the minimum score for any report $\report\in\reportspace$
so that the score is 0 for the report with the worst score at
state~$\state$. 
Since the difference in score is at most $\scorebound$ at state $\state$, the maximum score is at most $\scorebound$. 
\end{proof}

\subsection{Necessity of Canonical Scoring Rules}
\label{apx:necessity_canonical}
Now we provides a partial converse to
\Cref{prop:canonical-proper} and shows that the restriction to
canonical scoring rules is without loss, i.e., \Cref{eq:mainprogram}
and \Cref{eq:program} are equivalent.  The converse will require a
weak technical restriction on the set of scoring rules
considered.\footnote{The literature on scoring rules for eliciting the mean, 
to the best of our knowledge, obtains converses to
\Cref{prop:canonical-proper} only with restrictions.  For example,
\citet{lam-11} assumes the scoring rules are continuously
differentiable in the agent's report.  The restriction we employ is
weaker than differentiability.}  With this restriction,
\citet{AF-12} provide a converse to \Cref{prop:canonical-proper} for
reports in the relative interior of the report space. 
We generalize their observation to the boundary of the report space when the scoring rule is bounded. 
Formally, we have the following result establishing 
that \Cref{eq:mainprogram}
and \Cref{eq:program} are equivalent.
\begin{definition}[\citealp{AF-12}]
A scoring rule $\score$ is $\mu$-differentiable if all directional
derivatives of $\expect{\state\sim
  \posterior}{\score(\mean{\posterior},\state)}$ exists for all
posteriors $\posterior$ with mean $\mean{\posterior}$ in the relative
interior of~$\reportspace$.
\end{definition}

\begin{theorem}\label{thm:equal program}
  For optimization of the incentive for exerting a binary effort via a
  bounded and $\mu$-differentiable scoring rule for the mean, it is
  without loss to consider canonical scoring rules, i.e.,
  \Cref{eq:mainprogram} and \Cref{eq:program} are equivalent.
\end{theorem}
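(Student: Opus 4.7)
I would establish the equivalence by constructing, for each direction, an objective-preserving map between feasible solutions. For the forward direction (from \cref{eq:program} to \cref{eq:mainprogram}), given a feasible $\util$ with subgradient selection $\sg$, I would set $\kappa(\state) = \scorebound - \util(\state)$ and form the canonical scoring rule $\score(\report,\state) = \util(\report) + \sg(\report)\cdot(\state-\report) + \kappa(\state)$. Properness is immediate from \cref{prop:canonical-proper}. The upper bound $\score \le \scorebound$ rearranges to the subgradient inequality $\util(\state) \ge \util(\report) + \sg(\report)\cdot(\state-\report)$, which holds by convexity of $\util$. The lower bound $\score \ge 0$ rearranges to $\util(\state) - \util(\report) - \sg(\report)\cdot(\state-\report) \le \scorebound$, which is exactly the final constraint of \cref{eq:program}. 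Since convex functions possess one-sided directional derivatives throughout the relative interior of their domain, the induced expected score is directionally differentiable in the posterior mean, so $\score$ is $\mu$-differentiable.

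The reverse direction (from \cref{eq:mainprogram} to \cref{eq:program}) is the substantive one. Given a bounded, $\mu$-differentiable proper scoring rule $\score$, I would invoke \citet{AF-12} to conclude that on the relative interior of $\reportspace$ the induced utility $\util(\report) = \expect{\state\sim\posterior}{\score(\report,\state)}$ (well-defined for any $\posterior$ with $\mean{\posterior}=\report$ by properness) is convex, and $\score$ admits a canonical decomposition with some measurable subgradient $\sg$ and state-function $\kappa$. The principal obstacle is extending this decomposition from the relative interior of $\reportspace$ to its closure so that $\util$ and $\sg$ satisfy the constraints of \cref{eq:program} on all of $\reportspace \times \statespace$. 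This is precisely where the boundedness hypothesis is essential: $\score \in [0,\scorebound]$ implies $\util$ is bounded on $\relint(\reportspace)$, and a bounded convex function on a relatively open convex set extends continuously and convexly to its closure, so I would take this continuous extension as $\util$ and then select $\sg$ on the boundary from the (nonempty) subdifferential of the extended $\util$. The boundedness constraint of \cref{eq:program} then follows directly from the identity $\util(\state) - \util(\report) - \sg(\report)\cdot(\state-\report) = \score(\state,\state) - \score(\report,\state) \le \scorebound$, and an additive shift enforces $\util(\priorMeanMean)=0$ without changing the scoring rule.

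The objectives coincide under both maps by \cref{lem:canonical-expectation}: the $\kappa(\state)$ contributions to $\score(\mean{\posterior},\state) - \score(\priorMean,\state)$ cancel because the marginal distribution of $\state$ under $(\posterior\sim\marginalReport,\ \state\sim\posterior)$ equals the prior $\statedist$, and the residual term $\sg(\priorMean)\cdot(\mean{\posterior}-\priorMean)$ has zero expectation over $\posterior\sim\marginalReport$ since $\expect{\posterior\sim\marginalReport}{\mean{\posterior}}=\priorMean$. Thus both programs reduce to maximizing $\int_\reportspace [\util(\report)-\util(\priorMeanMean)]\marginalReport(\report)\,\de\report$, establishing equivalence. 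The main technical work is the rigorous boundary extension of $\util$ and $\sg$, which the paper defers to \cref{app:continuity}.
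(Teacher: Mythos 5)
Your overall architecture matches the paper's: the forward direction via \Cref{prop:canonical-proper} with an explicit choice of $\kappa$, and the reverse direction via the interior characterization of \Cref{thm:multid_character} plus an extension to the boundary. The forward direction is fine — taking $\kappa(\state)=\scorebound-\util(\state)$ does convert the two score bounds into the subgradient inequality and the constraint of \Cref{eq:program}. The reverse direction, however, has a genuine gap at exactly the point you call ``the principal obstacle.'' The general fact you invoke is false: a bounded convex function on a relatively open convex set need not extend continuously to its closure once the dimension is at least two. For instance $\util(x,y)=\sfrac{x^2}{(1-y)}$ on the open unit disk is convex and bounded by $2$, yet has no limit at $(0,1)$ (it tends to $0$ along $x=0$ and to $2$ along the boundary circle). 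What rescues continuity in this setting is not boundedness of $\util$ but boundedness of the subgradients appearing in the interior decomposition, which itself must be derived from $\score\in[0,\scorebound]$ (this is the paper's \Cref{lem:bounded functions}); bounded subgradients make $\util$ Lipschitz on $\relint(\reportspace)$ and also justify the nonemptiness of the boundary subdifferential that you assert in passing, which is likewise false for general convex functions (slopes can blow up at the boundary of the domain). None of this is established in your argument.

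More substantively, even granting a continuous convex extension, you never connect the \emph{original} scoring rule's behavior at boundary reports to that extension. Your identity $\util(\state)-\util(\report)-\sg(\report)\cdot(\state-\report)=\score(\state,\state)-\score(\report,\state)$ presupposes $\score(\state,\state)=\util(\state)+\kappa(\state)$, but \Cref{thm:multid_character} gives the decomposition only on $\relint(\reportspace)$, and a state viewed as a report generally lies on the boundary; the same issue arises for the objective of \Cref{eq:mainprogram}, which involves $\expect{\state\sim\posterior}{\score(\mean{\posterior},\state)}$ for posteriors whose means may sit on the boundary (e.g.\ point-mass posteriors). Nothing in your proposal rules out that a bounded proper rule earns strictly more at such boundary reports than the continuous extension predicts — convex utility functions can jump upward at the boundary of their domain, so this is not automatic, and if it could happen the inequality ``value of \Cref{eq:mainprogram} at most value of \Cref{eq:program}'' would fail. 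This is precisely the content of the paper's \Cref{thm:interim utility on the boundary}: one approximates a boundary-mean posterior by mixtures with a fixed interior-mean posterior (\Cref{lem:convergence}) and uses properness in both directions, as a sandwich, to force the expected truthful score at the boundary to equal $\util(\mean{\posterior})+\expect{\state\sim\posterior}{\kappa(\state)}$; a companion limiting argument on subgradients (\Cref{thm:bounded_simple}) then delivers the boundedness constraint at boundary reports. Your proposal needs these steps (or a substitute for them) before the reverse direction, and hence the equivalence, is complete.
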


In the subsequent
discussion, the boundary of the report space is denoted by
$\boundaryReport$ and the interior of the report space by
$\interiorReport = \reportspace \setminus \boundaryReport$.

\begin{lemma}[\citealp{AF-12}]\label{thm:multid_character}
  Any proper and $\mu$-differentiable scoring rule for eliciting the
  mean $\score$ coincides with a canonical scoring rule (defined by
  $\util$, $\sg$, and $\kappa$) at reports in the relative interior of the
  report space, i.e., it satisfies equation~\eqref{eq:scoring-rule-construction} for all $\report
  \in \interiorReport$.
\end{lemma}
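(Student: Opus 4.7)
The plan is to derive the canonical form on the interior by combining a first-order condition from properness with a moment-matching argument in the state variable, and then integrating.

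First, I would fix any $\report \in \interiorReport$ and any posterior $\posterior$ with $\mean{\posterior} = \report$. Properness says that $\report$ maximizes $\phi_\posterior(\report') := \expect{\state \sim \posterior}{\score(\report', \state)}$, and $\mu$-differentiability says that every directional derivative of $\phi_\posterior$ at $\report$ exists. Since $\report$ lies in the relative interior, maximality in both directions $\pm v$ forces each directional derivative to vanish, which (after swapping derivative and expectation) yields
\[
\expect{\state \sim \posterior}{\sg_\report(\state)} = 0, \qquad \sg_\report(\state) := \nabla_{\report'}\score(\report', \state)\bigr|_{\report' = \report},
\]
and this identity holds for every posterior whose barycenter is $\report$.

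Second, I would argue that this orthogonality condition forces $\sg_\report(\cdot)$ to be affine in $\state$ with $\sg_\report(\report) = 0$. Testing against the Dirac mass $\delta_\report$ gives $\sg_\report(\report) = 0$; testing against two-point distributions $(1-\lambda)\delta_{\state_1} + \lambda \delta_{\state_2}$ with barycenter $\report$ yields linearity of $\sg_\report$ along every chord through $\report$; testing against three-point distributions such as $\tfrac14 \delta_{\report+v_1}+\tfrac14\delta_{\report+v_2}+\tfrac12\delta_{\report-(v_1+v_2)/2}$ yields the midpoint identity $\sg_\report(\report+\tfrac12(v_1+v_2)) = \tfrac12[\sg_\report(\report+v_1)+\sg_\report(\report+v_2)]$. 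Combined, these force $\sg_\report(\state) = M(\report)(\state - \report)$ for some matrix-valued $M(\report)$. Fixing a reference $\report_0 \in \interiorReport$, setting $\kappa(\state) := \score(\report_0, \state)$, and integrating $\nabla_{\report'}\score$ along the straight line from $\report_0$ to $\report$ gives
\[
\score(\report, \state) - \kappa(\state) = \int_0^1 M\bigl(\report_0 + t(\report-\report_0)\bigr)\bigl(\state-\report_0 - t(\report-\report_0)\bigr)\cdot (\report-\report_0)\,dt,
\]
which is manifestly affine in $\state$. Writing the right side as $\sg(\report)\cdot(\state - \report) + \util(\report)$ produces the canonical decomposition $\score(\report, \state) = \util(\report) + \sg(\report)(\state - \report) + \kappa(\state)$. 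Substituting this form back into properness $\phi_\posterior(\report) \geq \phi_\posterior(\report')$ and cancelling the common $\expect{\state\sim\posterior}{\kappa(\state)}$ term yields $\util(\report) \geq \util(\report') + \sg(\report')\cdot(\report-\report')$, the subgradient inequality, which simultaneously certifies that $\util$ is convex on $\interiorReport$ and that $\sg(\report') \in \subgradient \util(\report')$.

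The main obstacle is the second step: upgrading ``mean zero under every distribution with barycenter $\report$'' to pointwise affinity of $\sg_\report$ requires a sufficiently rich family of test distributions supported in $\statespace$, which is delicate when $\statespace$ is a proper or discrete subset of $\reportspace$, so one may need to argue via density of finite-support mixtures or invoke continuity of $\sg_\report$. A secondary technical issue is exchanging the directional derivative and the expectation in the first step under the weak $\mu$-differentiability hypothesis; this is typically handled by dominated convergence once one has any uniform bound on $\score$ or at least integrability under $\posterior$.
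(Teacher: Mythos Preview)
The paper does not prove this lemma; it is quoted from \citet{AF-12} and then used as a black box in the subsequent arguments of the appendix. There is therefore no proof in the paper to compare your sketch against.

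On its own merits, your outline follows the Osband-style identification-function route, and the final step (recovering convexity of $\util$ and the subgradient property of $\sg$ once the affine-in-$\state$ form is in hand) is correct. The genuine gap is the step you label ``secondary.'' The $\mu$-differentiability hypothesis is a condition on the expected score only: for each $\posterior$, the map $\report'\mapsto\expect{\state\sim\posterior}{\score(\report',\state)}$ has all directional derivatives at $\report'=\mean{\posterior}$. It says nothing about $\report'\mapsto\score(\report',\state)$ for fixed $\state$, so the pointwise gradient $\sg_\report(\state)=\nabla_{\report'}\score(\report',\state)\vert_{\report'=\report}$ on which your entire second step is built need not exist, and there is nothing to interchange with the expectation. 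For instance, with $\statespace=\{0,1,2\}\subset\reals$ and $\report=1$, the posteriors with mean $1$ are $(p,1-2p,p)$ for $p\in[0,\tfrac12]$; $\mu$-differentiability then forces $\score(\report',1)$ and $\score(\report',0)+\score(\report',2)$ to be directionally differentiable at $\report'=1$, but not $\score(\report',0)$ or $\score(\report',2)$ individually. A sound proof must therefore operate at the level of expected scores throughout rather than via pointwise gradients; once one shows that $\score(\report,\cdot)-\score(\report_0,\cdot)$ integrates against every posterior in a way depending only on its mean, affinity on $\statespace$ follows from the linear-algebra fact you allude to in your first obstacle (differences of probability measures on $\statespace$ with equal mean span the annihilator of the affine functions, so Dirac masses at interior report points are never needed).
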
  

The main new results need to show that canonical scoring rules are
without loss for \Cref{eq:mainprogram} are extensions of
\Cref{thm:multid_character} to the boundary of the report space
$\boundaryReport$.  The form of scoring rules considered enters the
program in two places: the objective and the boundedness constraint.
The two lemmas below show that canonical scoring rules are without
loss in these two places in the program. 

\begin{lemma}\label{thm:interim utility on the boundary}
  Any $\mu$-differentiable, bounded, and proper scoring rule $\score$ for
  eliciting the mean is equal in expectation of truthful
  reports to a canonical scoring rule (defined by $\util$, $\sg$, and
  $\kappa$), i.e., it satisfies
  equation~\eqref{eq:canonical-expected-utility}.
\end{lemma}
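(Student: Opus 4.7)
The plan is to extend the canonical representation from the interior of the report space, where it is guaranteed by \Cref{thm:multid_character}, to the boundary by taking limits and exploiting the ex post boundedness of $\score$. For any posterior $\posterior$ with $\mean{\posterior} \in \interiorReport$, the required identity \eqref{eq:canonical-expected-utility} is immediate by combining \Cref{thm:multid_character} with \Cref{lem:canonical-expectation}; so the real work is at boundary reports $\mean{\posterior} \in \boundaryReport$.

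First I would define the boundary value of $\util$ by a limit, setting
\begin{align*}
\util(\report) := \lim_{r' \to \report,\, r' \in \interiorReport} \util(r')
\end{align*}
for $\report \in \boundaryReport$, and verify that this limit exists and is finite. Existence follows from the convexity of $\util$ on the open convex set $\interiorReport$: radial restrictions to line segments ending at $\report$ are one-dimensional convex functions and thus possess monotone limits in $\reals \cup \{\pm\infty\}$. Finiteness follows from the boundedness constraint \Cref{lem:canonical-bound}: evaluating it at an interior point $r'$ near $\report$ and a state $\state \in \statespace$ on the opposite face constrains $\util(r')$ to a bounded range once $\kappa$ is normalized. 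Well-definedness of the limit along different sequences converging to $\report$ would be proved by applying properness to posteriors supported near $\report$ and using $\mu$-differentiability to equate one-sided limits.

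Next I would verify \eqref{eq:canonical-expected-utility} for $\mean{\posterior} = \report \in \boundaryReport$. Pick a sequence $r_n \to \report$ from $\interiorReport$. Properness of $\score$ gives
\begin{align*}
\expect{\state \sim \posterior}{\score(\report, \state)} \geq \expect{\state \sim \posterior}{\score(r_n, \state)} = \util(r_n) + \sg(r_n) \cdot (\report - r_n) + \expect{\state \sim \posterior}{\kappa(\state)},
\end{align*}
where the equality uses the canonical representation at the interior report $r_n$. Along a judiciously chosen sequence (e.g.\ along a straight line from a fixed interior anchor toward $\report$), \Cref{lem:canonical-bound} applied at extremal states of $\statespace$ on the opposite face from $\report$ forces $\sg(r_n) \cdot (\report - r_n) \to 0$, yielding one direction in the limit. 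For the matching inequality, I would construct posteriors $\posterior_n$ with means $r_n$ obtained by mixing $\posterior$ with a small probability of a reference distribution whose mean pulls $\mean{\posterior}$ into the interior, apply properness from $\posterior_n$ against the report $\report$, and pass to the limit using $\mu$-differentiability to control the interim utilities at $r_n$.

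The main obstacle will be the subgradient blow-up step: for a convex function on a bounded domain, $|\sg(r_n)|$ may diverge as $r_n$ approaches the boundary, so showing that the inner product $\sg(r_n) \cdot (\report - r_n)$ nevertheless vanishes along an appropriate sequence is the crux of the argument. This must rely critically on the ex post boundedness of $\score$ via \Cref{lem:canonical-bound}, rather than on convexity alone, since the direction $\report - r_n$ points toward the boundary while the growth of $\sg(r_n)$ is constrained precisely by how much the canonical representation can overshoot the score bound at states on the opposite face.
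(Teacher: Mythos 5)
Your proposal follows essentially the same route as the paper's proof: extend $\util$ to $\boundaryReport$ by limits from the interior, pull the boundary mean into $\interiorReport$ via mixture posteriors $(1-\sfrac{1}{k})\posterior + \sfrac{1}{k}\widetilde{\posterior}$, and squeeze with properness in both directions while arguing the cross term $\sg(r_k)\cdot(\mean{\posterior}-r_k)$ vanishes, just as in the paper (which packages the needed boundedness of $\util$, $\sg$, and $\kappa$ on the interior into \Cref{lem:bounded functions} and the mixture convergence into \Cref{lem:convergence}). The only cosmetic difference is that the paper attributes the vanishing cross term to boundedness of the subgradients and convergence of the means rather than your explicit opposite-face argument, but the structure and key steps coincide.
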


\begin{lemma}\label{thm:bounded_simple}
  For any $\mu$-differentiable and proper scoring rule $\score$ for eliciting
  the mean that induces utility function $\util$ (via
  \Cref{thm:interim utility on the boundary}) and satisfies score
  bounded in $[0,\scorebound]$, there is a canonical scoring rule
  defined by $\util$ (and some $\sg$ and $\kappa$) that satisfies the
  same score bound, i.e., it satisfies
  equation~\eqref{eq:canonical-bound}.
\end{lemma}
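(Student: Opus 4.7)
The plan is to reuse the canonical representation on the interior of the report space supplied by Theorem~\ref{thm:multid_character}, verify the bound \eqref{eq:canonical-bound} there directly from the ex post bound on $S$, and then extend the subgradient to the boundary via a compactness argument.

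First, fix any $\report \in \interiorReport$. Theorem~\ref{thm:multid_character} furnishes a subgradient $\sg(\report)$ of $\util$ at $\report$ together with a state function $\kappa$ such that
\begin{align*}
S(\report, \state) = \util(\report) + \sg(\report) \cdot (\state - \report) + \kappa(\state) \quad \forall \state \in \statespace.
\end{align*}
Applying Lemma~\ref{thm:interim utility on the boundary} to the point-mass posterior at $\state$ gives the identity $S(\state, \state) = \util(\state) + \kappa(\state)$ for every $\state \in \statespace$. Subtracting $S(\report, \state) \geq 0$ from $S(\state, \state) \leq \scorebound$ then yields
\begin{align*}
\util(\state) - \util(\report) - \sg(\report) \cdot (\state - \report) = S(\state, \state) - S(\report, \state) \leq \scorebound,
\end{align*}
establishing \eqref{eq:canonical-bound} on the interior.

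To extend to the boundary, fix $\report \in \boundaryReport$ and choose a sequence $\report_n \in \interiorReport$ with $\report_n \to \report$. For each $n$, combining the subgradient inequality with the interior bound just obtained gives
\begin{align*}
\util(\state) - \util(\report_n) - \scorebound \leq \sg(\report_n) \cdot (\state - \report_n) \leq \util(\state) - \util(\report_n) \quad \forall \state \in \statespace.
\end{align*}
By continuity of $\util$ on $\reportspace$ (part of Lemma~\ref{thm:interim utility on the boundary}), both sides have finite limits. Since $\reportspace = \conv(\statespace)$, the displacements $\{\state - \report : \state \in \statespace\}$ linearly span the subspace parallel to $\mathrm{aff}(\reportspace)$, and the two-sided bounds along these directions force $\sg(\report_n)$ (projected onto this subspace, without loss) to be uniformly bounded. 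Passing to a convergent subsequence $\sg(\report_{n_k}) \to g$, the vector $g$ is a subgradient of $\util$ at $\report$ and inherits $\util(\state) - \util(\report) - g \cdot (\state - \report) \leq \scorebound$ for every $\state \in \statespace$. Defining $\sg(\report) := g$ and choosing $\kappa$ as in Lemma~\ref{lem:canonical-bound} produces the desired canonical scoring rule.

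The main obstacle is the uniform boundedness of $\sg(\report_n)$ as $\report_n$ approaches the boundary, since a bounded continuous convex function can in general have unbounded subgradients near the boundary of its domain. The key that makes the limiting argument go through is the \emph{lower} bound on $\sg(\report_n) \cdot (\state - \report_n)$ supplied by the ex post boundedness of $S$, which complements the upper bound coming from convexity of $\util$. Without this ex post bound the subgradients could blow up and no canonical extension would satisfy \eqref{eq:canonical-bound}.
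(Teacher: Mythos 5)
Your proof is correct and follows essentially the same route as the paper's: establish \eqref{eq:canonical-bound} at interior reports by subtracting $\score(\report,\state)\geq 0$ from $\score(\state,\state)\leq\scorebound$ using the representation of \Cref{thm:multid_character} and \Cref{thm:interim utility on the boundary}, then extend to boundary reports by taking a convergent subsequence of interior subgradients and passing the bound to the limit. The uniform boundedness of subgradients near the boundary, which you derive directly from the two-sided bound, is exactly what the paper delegates to \Cref{lem:bounded functions}, so the arguments coincide in substance.
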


\sloppy
Note that \Cref{thm:interim utility on the boundary} implies that the
utility function $\util$ corresponding to any $\mu$-differentiable
scoring rule $\score$ can be identified (via the equivalent cannonical
scoring rule); thus, the assumption of \Cref{thm:bounded_simple} is
well defined.  \Cref{thm:interim utility on the boundary} and
\Cref{thm:bounded_simple} combine to imply that \Cref{eq:mainprogram}
and \Cref{eq:program} are equivalent
and \cref{thm:equal program} holds.

Next, we will formally prove \Cref{thm:interim utility on the boundary} and \ref{thm:bounded_simple}. 
First we show that when the scoring rule is bounded, 
the corresponding functions $\util(\report), \sg(\report), \kappa(\state)$ in the characterization of \Cref{thm:multid_character}
are bounded in the interior as well.

\begin{lemma}\label{lem:bounded functions}
For any bounded scoring rule $\score$,
there exist convex function 
$\util: \reportspace \to \reals$ and function $\kappa : \statespace \to\reals$ such that
for any report 
$\report \in \interiorReport$ 
and any state $\state \in \statespace$,
$$
    \score(\report, \state)= \util(\report) +\sg(\report) \cdot (\state-\report)+\kappa(\state)
$$
where $\sg(\report) \in \partial \util(\report)$ is a subgradient of $\util$, 
and functions $\util(\report), \sg(\report), \kappa(\state)$
are bounded for any report $\report \in \interiorReport$ 
and any state $\state \in \statespace$. 
\end{lemma}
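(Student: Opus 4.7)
The plan is to invoke \Cref{thm:multid_character} to obtain an initial canonical decomposition on $\interiorReport$, exploit the fact that such decompositions are non-unique up to an affine shift of $\util$ to normalize at a chosen interior point, and then read off the boundedness of $\util$, $\sg$, and $\kappa$ directly from the score bound $\scorebound$.

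First I would apply \Cref{thm:multid_character} to obtain a convex $\util_0 : \reportspace \to \reals$, a subgradient selection $\sg_0(\report) \in \partial\util_0(\report)$, and a function $\kappa_0 : \statespace \to \reals$ with $\score(\report,\state) = \util_0(\report) + \sg_0(\report)\cdot(\state-\report) + \kappa_0(\state)$ for every $\report \in \interiorReport$ and $\state \in \statespace$. The key observation is that for any $a \in \reals$ and $b \in \reals^{\numasm}$, the substitutions $\util(\report) \mapsto \util_0(\report) - a - b\cdot\report$, $\sg(\report) \mapsto \sg_0(\report) - b$, and $\kappa(\state) \mapsto \kappa_0(\state) + a + b\cdot\state$ leave $\score$ unchanged, preserve convexity of the new $\util$ (subtracting an affine function), and preserve the subgradient relation $\sg(\report)\in\partial\util(\report)$ (since the subdifferential shifts as $\partial\util(\report) = \partial\util_0(\report) - b$). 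Fixing any $\report_0 \in \interiorReport$ and choosing $a = \util_0(\report_0) - \sg_0(\report_0)\cdot\report_0$ and $b = \sg_0(\report_0)$ then produces a normalized decomposition $(\util,\sg,\kappa)$ with $\util(\report_0) = 0$ and $\sg(\report_0) = 0$.

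Under this normalization, evaluating the canonical form at $\report = \report_0$ yields $\score(\report_0,\state) = \kappa(\state)$, so the boundedness of $\score$ gives $\kappa(\state) \in [0, \scorebound]$ for every $\state \in \statespace$. For each $\report \in \interiorReport$, since $\reportspace = \conv(\statespace)$ I can pick a posterior $\posterior$ supported on $\statespace$ with mean $\mean{\posterior} = \report$; taking the expectation of the canonical form over $\state \sim \posterior$ cancels the linear term $\sg(\report)\cdot(\state-\report)$ and produces
\[
\util(\report) = \expect{\state\sim\posterior}{\score(\report,\state)} - \expect{\state\sim\posterior}{\kappa(\state)} \in [-\scorebound,\scorebound].
\]
Since $\util$ is now convex and real-valued on the convex set $\reportspace$, at every interior point $\report \in \interiorReport$ the subdifferential $\partial\util(\report)$ is non-empty and bounded, so the chosen $\sg(\report)$ is finite.

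The main subtlety is the affine-shift step: both $\util$ and $\sg$ must be adjusted consistently with $\kappa$, and one has to verify that convexity of the new $\util$ and the subgradient property of the new $\sg$ are indeed preserved. Once that gauge freedom is in hand, the rest of the argument is a direct calculation using the score bound together with the cancellation in the expected-utility identity of \Cref{lem:canonical-expectation}.
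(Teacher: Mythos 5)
Your normalization-and-expectation route is a legitimate and in places cleaner way to get two of the three bounds: shifting by the affine function $a+b\cdot\report$ is a valid gauge transformation, the choice $\util(\report_0)=0$, $\sg(\report_0)=0$ makes $\kappa(\state)=\score(\report_0,\state)\in[0,\scorebound]$ immediately (the paper instead reads off boundedness of $\kappa$ from the identity at a fixed interior report), and averaging the canonical identity over a finitely supported posterior with mean $\report$ gives the uniform bound $\abs{\util(\report)}\leq\scorebound$ on $\interiorReport$ more directly than the paper's line-by-line upper/lower-bound argument.

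The gap is the last step, the bound on $\sg$. Saying that a real-valued convex function has a non-empty, bounded subdifferential at each relative-interior point only yields \emph{pointwise finiteness} of $\sg(\report)$, which is automatic from \Cref{thm:multid_character} and is not what the lemma needs: the lemma is invoked (in the proof of \Cref{thm:interim utility on the boundary}) to pass to limits along sequences $\report^k$ approaching the boundary, where one needs $\sg(\report^k)\cdot(\mean{\posterior}-\report^k)\to 0$, i.e.\ a bound on $\sg$ that is uniform over $\interiorReport$. Uniform boundedness of $\util$ does not give this: e.g.\ $\util(\report)=-\sqrt{\report(1-\report)}$ on $[0,1]$ is convex and bounded, yet its derivative blows up at the endpoints. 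The missing ingredient is the score identity itself: since $\score$, $\util$, and $\kappa$ are all uniformly bounded, $\sg(\report)\cdot(\state-\report)=\score(\report,\state)-\util(\report)-\kappa(\state)$ is uniformly bounded over $\report\in\interiorReport$, $\state\in\statespace$, and because for every direction in the affine hull of $\reportspace=\conv(\statespace)$ there are states on both sides of $\report$ at distance bounded below (at least half the width of $\reportspace$ in that direction), this two-sided constraint forces a uniform bound on $\sg(\report)$ --- which is exactly the paper's concluding argument and is what your proposal omits.
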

\begin{proof}
\sloppy
Since scoring rule $\score$ is bounded, 
let 
$\bar{\scorebound}_{\state} = \sup_{\report\in \interiorReport} \score(\report, \state)$
and 
$\underline{\scorebound}_{\state} = \inf_{\report\in \interiorReport} \score(\report, \state)$. 
Let $\hat{\report} \in \interiorReport$ be a report in the interior such that 
both $\util(\hat{\report})$ and $\sg(\hat{\report})$ are finite.
Note that for any state $\state \in \statespace$, 
state $\state$ locate on the boundary of the report space, 
i.e., $\state \in \boundaryReport$,
and the report space is a linear combination of the state space.

For any report $\report \in \interiorReport$, 
by the convexity of function $\util$, we have 
\begin{align*}
\util(\report) 
\geq \util(\hat{\report}) - \sg(\hat{\report}) \cdot (\report-\hat{\report})
\end{align*}
and hence $\util(\report)$ is bounded below. 

Next we show that $\util(\report)$ is bounded above for any report $\report \in \interiorReport$. 
We first show that fixing any state $\state$,
any report $\report$ which is a linear combination of $\state$ and $\hat{\report}$
has bounded utility $\util(\report)$. 
If $\util(\report) \leq \util(\hat{\report})$, 
then naturally $\util(\report)$ is bounded above. 
Otherwise, note that
\begin{align*}
&\bar{\scorebound}_{\state} - \underline{\scorebound}_{\state}
\geq \score(\report, \state) - \score(\hat{\report}, \state)
= \util(\report) + \sg(\report) \cdot (\state-\report)
- \util(\hat{\report}) - \sg(\hat{\report}) \cdot (\state-\hat{\report})\\
&\geq (\util(\report) - \util(\hat{\report})) \cdot \frac{\norm{\state-\hat{\report}}}{\norm{\hat{\report}-\report}} + \util(\hat{\report})
- \util(\hat{\report}) - \sg(\hat{\report}) \cdot (\state-\hat{\report})
\geq \util(\report) - \util(\hat{\report})  - \sg(\hat{\report}) \cdot (\state-\hat{\report}),
\end{align*}
where the first inequality holds because the scoring rule is bounded. 
The second inequality holds because the convex function $\util$
projected on line $(\state, \hat{\report})$ is still a convex function. 
The last inequality holds because report $\report$ lies in between $\state$ and $\hat{\report}$. 
Therefore, we have that $\util(\report)$ is bounded above
for report $\report$ lies in between $\state$ and $\hat{\report}$. 
For any state $\state \in \statespace$, 
let $\hat{\util}(\state) = \lim_{k\to\infty} \util(\report^k)$
where $\{\report^k\}_{k=1}^\infty$ is a sequence of report on line $(\state, \hat{\report})$ that converges to $\state$. 
Since $\util(\report^k)$ are bounded for any $\report^k$, 
we have that $\hat{\util}(\state)$ is bounded as well. 
Since the report space is a subset of the convex hull of the state space, 
we have that for any report $\report \in \interiorReport$,
$\util(\report)$ is upper bounded by the convex combination 
of $\hat{\util}(\state)$, which is also bounded by above. 

For any state $\state \in \statespace$, 
we have 
\begin{align*}
\score(\hat{\report}, \state) = \util(\hat{\report}) + \sg(\hat{\report}) \cdot (\state-\hat{\report})+\kappa(\state),
\end{align*}
which implies $\kappa(\state)$ is bounded since all other terms are bounded. 

Finally, 
for any report $\report \in \interiorReport$ 
and any state $\state \in \statespace$, 
\begin{align*}
\score(\report, \state) = \util(\report) + \sg(\report) \cdot (\state-\report)+\kappa(\state),
\end{align*}
which implies $\sg(\report) \cdot (\state-\report)$ is bounded. 
Since the boundedness holds for all directions, 
the subgradient $\sg(\report)$ must also be bounded.
\end{proof}




\begin{lemma}\label{lem:convergence}
Given any state space $\statespace$ 
and report space $\reportspace$ with non-empty interior, 
for any distribution $\posterior \in \Delta(\statespace)$ with mean $\mean{\posterior}$,
there exists a sequence of posteriors $\{\posterior^k\}$
such that for any bounded function $\phi(\state)$ in space $\statespace$, 
we have $\{\expect{\state \sim \posterior^k}{\phi(\state)}\}$
converges to $\expect{\state \sim \posterior}{\phi(\state)}$.
\end{lemma}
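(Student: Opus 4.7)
The plan is to build $\posterior^k$ by mixing the target posterior $\posterior$ with a small amount of a reference posterior whose mean lies in the interior of $\reportspace$. Concretely, I will first use the non-emptiness of $\interiorReport$ together with the fact that $\reportspace = \conv(\statespace)$ to produce a posterior $\posterior^* \in \Delta(\statespace)$ whose mean $\mean{\posterior^*}$ lies in $\interiorReport$: pick any $r^* \in \interiorReport$, write it as a finite convex combination $r^* = \sum_j \alpha_j \state^{(j)}$ with $\state^{(j)} \in \statespace$, and let $\posterior^*$ be the distribution that places mass $\alpha_j$ on $\state^{(j)}$.

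Next I define the sequence $\posterior^k = (1 - 1/k)\, \posterior + (1/k)\, \posterior^*$ for $k \geq 1$. Each $\posterior^k$ is a valid element of $\Delta(\statespace)$, and by linearity of expectation its mean equals $(1-1/k)\, \mean{\posterior} + (1/k)\, \mean{\posterior^*}$. Since $\mean{\posterior} \in \reportspace$ and $\mean{\posterior^*} \in \interiorReport$, a standard fact about convex sets (a strict convex combination of an interior point with any point of the closure lies in the interior) yields $\mean{\posterior^k} \in \interiorReport$ for all $k$. This interiority is the reason for introducing this sequence in the first place, namely to let subsequent arguments (e.g., the application of \Cref{thm:multid_character} which only characterizes scoring rules on $\interiorReport$) approach a boundary mean from the inside.

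For the convergence claim, let $\phi : \statespace \to \reals$ be any bounded function, say $|\phi| \leq M$. By linearity of expectation,
\begin{align*}
\expect{\state \sim \posterior^k}{\phi(\state)}
&= (1 - 1/k)\, \expect{\state \sim \posterior}{\phi(\state)} + (1/k)\, \expect{\state \sim \posterior^*}{\phi(\state)}.
\end{align*}
Both expectations on the right-hand side are finite (each bounded by $M$ in absolute value), so the second term vanishes as $k \to \infty$, giving
\begin{align*}
\lim_{k \to \infty} \expect{\state \sim \posterior^k}{\phi(\state)} = \expect{\state \sim \posterior}{\phi(\state)}.
\end{align*}

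There is no substantive obstacle here: the only delicate point is ensuring that the interior-mean reference posterior $\posterior^*$ exists, which is immediate from the non-empty interior of $\reportspace$ together with $\reportspace = \conv(\statespace)$. The rest is a one-line linearity calculation, and boundedness of $\phi$ is precisely what rules out any contribution in the limit from the vanishing mixture weight.
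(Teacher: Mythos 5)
Your proposal is correct and follows essentially the same route as the paper: mix $\posterior$ with a reference posterior whose mean lies in $\interiorReport$ via weights $(1-\sfrac{1}{k}, \sfrac{1}{k})$ and conclude by linearity of expectation and boundedness of $\phi$. The only difference is cosmetic — you explicitly construct the interior-mean posterior from $\reportspace = \conv(\statespace)$ and note the interiority of $\mean{\posterior^k}$, whereas the paper simply asserts such a posterior exists.
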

\begin{proof}
Since space $\reportspace$ has a non-empty interior, 
let $\widetilde{\posterior}$ be a distribution with mean $\mean{\widetilde{\posterior}}$ in the interior of $\reportspace$. 
Let the sequence of posteriors 
$\posterior^k = (1-\sfrac{1}{k})\cdot\posterior + \sfrac{1}{k} \cdot \widetilde{\posterior}$. 
For any bounded function $\phi(\state)$ in space $\statespace$, 
we have 
\begin{equation*}
\lim_{k\to\infty} \expect{\state \sim \posterior^k}{\phi(\state)} 
= \lim_{k\to\infty} [(1-\sfrac{1}{k})\cdot\expect{\state \sim \posterior}{\phi(\state)} 
+ \sfrac{1}{k} \cdot \expect{\state \sim \widetilde{\posterior}}{\phi(\state)}]
\to \expect{\state \sim \posterior}{\phi(\state)}. \qedhere
\end{equation*}
\end{proof}

\begin{proof}[Proof of \cref{thm:interim utility on the boundary}]
By \Cref{thm:multid_character}, 
for $\mu$-differentiable proper scoring rule $\score$,
there exists convex function 
$\util: \reportspace \to \reals$ and function $\kappa : \statespace \to\reals$ such that
for any report 
$\report \in \interiorReport$ 
and any state $\state \in \statespace$,
we have 
$$
    \score(\report, \state)= \util(\report) +\sg(\report) \cdot (\state-\report)+\kappa(\state)
$$
where $\sg(\report) \in \subgradient \util(\report)$ is a subgradient of $\util$.
By \Cref{lem:bounded functions}, since the scoring rule is bounded, 
function $\util$ is convex and bounded 
and hence continuous in the interior.
Thus, we can well define
the value of $\util$ on the boundary as its limit from the interior, 
i.e., set $\util(\report) = \lim_{k \to \infty} \util(\report^k)$
for any $\report$ on the boundary of the report space $\reportspace$
and $\{\report^k\}_{k=1}^\infty$ as a sequence of interior reports converging to $\report$.
Thus we can replace the convex function $\util$ 
with continuous and convex function $\util$ 
for bounded scoring rules and the characterization still holds in the interior. 

For any bounded proper scoring rule, we have that 
$\util(\report)$ is bounded for any report 
$\report \in \interiorReport$ 
and $\kappa(\state)$ is bounded for any state $\state \in \statespace$.
Given any posterior $\posterior$ such that $\mean{\posterior} \in \boundaryReport$, 
let $\{\posterior^k\}$ be the sequence of posteriors constructed in \Cref{lem:convergence}. 
\begin{enumerate}
\item The identity function $\phi(\state) = \state$ is bounded. 
Therefore, the mean of the posteriors converges, 
i.e., 
$\lim_{k\to\infty}\mean{\posterior^k} = \mean{\posterior}$. And all means $\{\mean{\posterior^k}\}$ are in the interior of $\reportspace$.

\sloppy
\item Function $\kappa(\state)$ is bounded. 
Therefore, the expected value for function $\kappa$ converges. 
That is, 
$\lim_{k\to\infty} \expect{\state \sim \posterior^k}{\kappa(\state)}
= \expect{\state \sim \posterior}{\kappa(\state)}$.

\item The ex post score $\score(\report, \state)$ is bounded. 
Therefore, the expected score for reporting $\mean{\posterior}$ converges, 
i.e., 
$\lim_{k\to\infty} \expect{\state \sim \posterior^k}{\score(\mean{\posterior}, \state)}
= \expect{\state \sim \posterior}{\score(\mean{\posterior}, \state)}$. 
\end{enumerate}

Moreover, considering the sequence of expected score for reporting 
$\mean{\posterior^k}$ with distribution $\posterior$, 
we have 
\begin{align*}
&\lim_{k\to\infty} \expect{\state \sim \posterior}{\score(\mean{\posterior^k}, \state)}
= \lim_{k\to\infty} [\util(\mean{\posterior^k}) 
+ \expect{\state \sim \posterior}{\sg(\mean{\posterior^k}) \cdot (\state-\mean{\posterior^k})}
+ \expect{\state \sim \posterior}{\kappa(\state)}] \\
&= \lim_{k\to\infty} [\util(\mean{\posterior^k}) 
+ \expect{\state \sim \posterior^k}{\kappa(\state)}]
= \lim_{k\to\infty} [\expect{\state \sim \posterior^k}{\score(\mean{\posterior^k}, \state)}
\end{align*}
where the second equality holds because 
$\lim_{k\to\infty} \expect{\state \sim \posterior^k}{\kappa(\state)}
= \expect{\state \sim \posterior}{\kappa(\state)}$
and 
$\lim_{k\to\infty} \mean{\posterior^k} = \mean{\posterior}$.
Combining the equalities, we have 
\begin{align*}
&\expect{\state \sim \posterior}{\score(\mean{\posterior}, \state)}
= \lim_{k\to\infty} \expect{\state \sim \posterior^k}{\score(\mean{\posterior}, \state)} 
\leq \lim_{k\to\infty} \expect{\state \sim \posterior^k}{\score(\mean{\posterior^k}, \state)}\\
&= \lim_{k\to\infty} \expect{\state \sim \posterior^k}{\score(\mean{\posterior^k}, \state)}
= \lim_{k\to\infty} \expect{\state \sim \posterior}{\score(\mean{\posterior^k}, \state)}
\leq \expect{\state \sim \posterior}{\score(\mean{\posterior}, \state)}
\end{align*}
where the inequalities holds by the properness of the scoring rule. 
Therefore, 
all inequalities must be equalities, 
and hence 
\begin{align*}
&\expect{\state \sim \posterior}{\score(\mean{\posterior}, \state)}
= \lim_{k\to\infty} \expect{\state \sim \posterior^k}{\score(\mean{\posterior^k}, \state)}\\
&= \lim_{k\to\infty} \expect{\state \sim \posterior^k}{\util(\mean{\posterior^k}) + \kappa(\state)}
= \util(\mean{\posterior}) + \expect{\state \sim \posterior}{\kappa(\state)}. 
\end{align*}
where the last equality hold since function $\util$ is continuous. 

Finally, given any bounded, continuous and convex function 
$\util$ with bounded subgradients
and any bounded function $\kappa$, 
the corresponding canonical scoring rule is proper, bounded, 
and the expected score coincides.  
\end{proof}

\begin{proof}[Proof of \cref{thm:bounded_simple}]
If a proper scoring rule 
$\score$ is induced by function $\util$ 
and bounded by $\scorebound$ in space $\statespace$, 
by \Cref{thm:multid_character},
there exists function $\kappa : \statespace \to\reals$ such that
for any report 
$\report \in \interiorReport$ 
and any state $\state \in \statespace$,
$$
    \score(\report, \state)= \util(\report) +\sg(\report) \cdot (\state-\report)+\kappa(\state)
$$
where $\sg(\report) \in \subgradient \util(\report)$ is a subgradient of $\util$. 
Moreover, 
the score $\score(\report,\state) \in [0,\scorebound]$ for any report and state $\report \in \reportspace,\state \in \statespace$. 
Thus, it holds that for any report and state $\report \in \interiorReport,\state \in \statespace$
$$
\score(\state,\state) - \score(\report,\state) = \util(\state) - \util(\report) - \sg(\report)(\state - \report) \leq \scorebound.
$$
For any report $\reportspace \in \boundaryReport$, 
there exists a sequence of reports $\report_i$ 
such that $\{\report_k\}$ converges to $\report$
and $\sg(\report) = \lim_{k\to\infty} \sg(\report_k)$ is a subgradient at report $\report$. 
Thus, it holds that for any report $\report \in \boundaryReport$
and state $\state \in \statespace$,
$$
\score(\state,\state) - \score(\report,\state) = \util(\state) - \util(\report) - \lim_{k\to\infty} \sg(\report_k)(\state - \report) \leq \scorebound.
$$
Therefore, the canonical scoring rule defined by $\util$ with the same function $\kappa$ is proper and bounded in $[0, B]$. 
\end{proof}

\section{Single-Dimensional Scoring Rules}
\label{apx:single}

\subsection{Proof of Lemma~\ref{lem:bounded for single}}
\label{sub:bound_single}
\begin{proof}[Proof of \cref{lem:bounded for single}]
By \Cref{lem:canonical-bound}, it is sufficient to only consider convex
function~$\util$ such that there exists a set of subgradients
$\sg(\report)$ satisfying constraints that for any $\report, \state\in [0,1]$, i.e., 
$$
\util(\state)-\util(\report)
-\sg(\report) \cdot (\state-\report) \leq 1.
$$
By convexity of utility $\util$ and the monotonicity of
subgradients $\sg$ on report space $\reportspace = [0,1]$, it is straightforward to observe that the left-hand side of the boundedness constraint is maximized at $\state \in \{0,1\}$ with $\report = 1-\state$ (see \Cref{f:single}(a)).  
\end{proof}

\subsection{Proof of Theorem~\ref{thm:1d_opt}}
\label{apx:thm:1d_opt}

\begin{proof}[Proof of Theorem~\ref{thm:1d_opt}]
Consider any feasible solution $\util(\report)$ of
\Cref{eq:1d_program_u}. We construct a V-shaped utility function
$\tilde{\util}(\report)$ as
\begin{align*}
    \tilde{\util}(\report) = 
    \begin{cases}
    -\frac{\util(0)}{\priorMean}(\report-\priorMean)& \text{for $\report \leq \priorMean$,}\\
    \frac{\util(1)}{1-\priorMean}(\report-\priorMean)& \text{for $\report \geq \priorMean$.}
    \end{cases}
\end{align*}
The construction of $\tilde{\util}$ is illustrated in \Cref{fig:opt
  single}.  It is easy to see that $\tilde{\util}$ is convex,
$\tilde{\util}(\priorMean) = 0$ and $\tilde{\util}(\report) \geq
\util(\report)$ for any $\report \in [0,1]$.  Therefore, the objective
value for function $\tilde{\util}$ is higher than objective value for
function $\util$.  Moreover, we have $\tilde{\util}(0) = u(0)$,
$\tilde{\util}(1) = u(1)$, $\tilde{\util}'(0) \geq \sg(0)$ and
$\tilde{\util}'(1) \leq \sg(1)$, which implies $\tilde{\util}$ is also
a feasible solution to \Cref{eq:1d_program_u}.  Thus, an optimal
solution is V-shaped.

Next we focus on finding the optimal V-shaped function $\tilde{\util}$
for \Cref{eq:1d_program_u}.  Let $a = -\sfrac{\util(0)}{\priorMean} =
\tilde{\util}'(0)$ and $b= \sfrac{\util(1)}{(1-\priorMean)} =
\tilde{\util}'(1)$.  Since function $\tilde{\util}$ satisfies the
constraints in \Cref{eq:1d_program_u}, we get
\begin{align*}
b(1-\priorMean) = \tilde{\util}(1) \leq 1+ \tilde{\util}(0) + \tilde{\util}'(0) = 1-a\cdot \priorMean+a,\\
b(1-\priorMean) = \tilde{\util}(1) \geq \tilde{\util}'(1) + \tilde{\util}(0) -1 = b-a\cdot \priorMean-1,
\end{align*}
which implies $b \leq a + \sfrac{1}{(1-\priorMean)}$ and $b \leq a +
\sfrac{1}{\priorMean}$. If $b < a + \sfrac{1}{\max\{\priorMean,
  1-\priorMean\}}$, then we can either increase $b$ or decrease $a$ to
get a better feasible V-shaped utility function. Suppose we fix
parameter $a$, the objective value is pointwise maximized for any
report $\report$ when $b = a + \sfrac{1}{\max\{\priorMean,
  1-\priorMean\}}$.

Next we fix the optimal choice for parameter $b$.  Note that the
objective value given any parameter~$a$ is
\begin{align}
  \notag
\int_0^1 \util(\report) 
\dd \marginalReport(\report) 
&= \int_{0}^{\priorMean} a(\report - \priorMean) 
\dd \marginalReport(\report) 
+ \int_{\priorMean}^{1} \left(a+\frac{1}{\OPTdenom}\right)(\report - \priorMean) 
\dd \marginalReport(\report) \\
\label{eq:sd-opt}
&= \frac{1}{\OPTdenom} \int_{\priorMean}^{1} (\report - \priorMean) 
\dd \marginalReport(\report),
\end{align}
which invariant of parameter $a$.  Therefore, any V-shaped utility
function with parameters satisfying $b=a+\sfrac{1}{\max\{\priorMean,
1-\priorMean\}}$ is optimal and obtains objective value given by \Cref{eq:sd-opt}.
\end{proof}

\subsection{Center Prior Mean}
An important special case for our subsequent analyses is when the mean
of the posteriors is in the center of the report space, i.e.,
$\priorMean = \sfrac 1 2$ for report space $[0,1]$. 
In this case, by \cref{thm:1d_opt},
an optimal utility function $\util$ is V-shaped at $\sfrac 1 2$ with
$\util(0) = \util(1) = \frac{1}{2}$.

\begin{corollary}
  \label{c:symmetric-single-dimensional}
For any distribution $\marginalReport$ over the posterior means with
expectation $\priorMean = \sfrac 1 2$, one of the optimal solution
of \Cref{eq:1d_program_u} is symmetric 
and V-shaped at $\sfrac 1 2$ with $\util(0) =
\util(1) = \sfrac{ 1} {2}$. 
\end{corollary}


\section{Multi-dimensional Scoring Rules}
\label{app:section4}

\subsection{Proof of Proposition~\ref{thm:betting mean approx}}
\label{apx:thm:betting mean approx}
\begin{proof}[Proof of \cref{thm:betting mean approx}]
We first show that, 
for any proper scoring rule for eliciting the mean with utility function $\hat{\util}$, there exists a utility function $\util$ which has the same objective value and satisfies (1) $\util(\priorMean)=\frac{1}{2}$; 
and (2) $\util\in [\frac{1}{2}, \frac{3}{2}]$. 
This is equivalent to showing that there exists a utility function~$\util$ with the same objective value and satisfies (1) $\util(\priorMean)=0$; and (2) $\util\in [0, 1]$.

By convexity, there exists a hyperplane $(\sg(\priorMean), b_{\priorMean})$ that lower bounds the convex function $\hat{\util}$ and passes through $\hat{\util}(\priorMean)$, i.e.\ $\sg(\priorMean)\cdot\priorMean+b_{\priorMean}=\hat{\util}(\priorMean)$, 
and $\sg(\priorMean)\cdot\report+b_{\priorMean}\leq\hat{\util}(\report)$ for any report~$\report$. Consider the new function $\util(\report)=\hat{\util}(\report)-(\sg(\priorMean)\cdot\report+b_{\priorMean})$. 
By the linearity of the construction, the new utility function $\util$ has the same objective as the utility function $\hat{u}$. This utility function $\util$ is non-negative and its value at the prior mean $\util(\priorMean)$ is $0$. 

Now we only need to show this constructed $\util$ is bounded by $[0, 1]$. For any state $\state$, consider the subgradient $\sg$ at its symmetric point $\report'=2\priorMean-\state$. Since the tangent hyperplane of the utility function $\util$ at point $r'$ is below the function $\util$, we have the value of this tangent hyperplane at the prior mean $\util(\report')+\sg(\report')\cdot(\priorMean-\report')\leq 0$. Since $\util(\report')\geq 0$, we have $\util(\report')+\sg(\report')\cdot(\report-\report')\leq 0$. By the boundedness constraint $\util(\state)-\util(\report') - \sg(\report')\cdot(\state - \report') \leq 1$, we know $\util(\state)\leq 1$. Since the report space $\reportspace$ is the convex hull of the state space $\statespace$, the convex utility function is bounded by $[0, 1]$ on the report space.

By boundedness constraint, $\util(\state)-\util(\report) - \sg(\report)\cdot(\state - \report) \leq 1$. These implies $\util(\report) + \sg(\report)\cdot(\state - \report) \in [-\frac{1}{2}, \frac{3}{2}]$ for all $\state, \report$. By re-scaling it to satisfy $\util(\report) + \sg(\report)\cdot(\state - \report) \in [0, 1]$ for all $\state, \report$, we obtain a bounded scoring rule that is a $2$-approximation.
\end{proof}

\subsection{Proof of Lemma~\ref{lem:symm_nd_bounded}}\label{sec:appendix-proof-symm_nd_bounded}

\begin{proof}
  The following geometry of the utility function is easy verify.
  First, convexity of report space $\reportspace$ implies convexity of $\util$.
  Second, consider the $\numasm+1$ dimensional space $\reportspace
  \times [-\sfrac 1 2,\sfrac 1 2]$, where the $\numasm+1$st dimension represents the
  utility $\util$.  The utility function defines a truncated convex
  cone with vertex equal to $(\priorMean,0)$ and base at height
  $\sfrac 1 2$ with cross section $\reportspace$.  Consider the point reflection, henceforth, the
  reflected cone, of this convex cone around its vertex
  $(\priorMean,0)$.  By basic properties of cones and their point
  reflections, this reflected cone has the same supporting hyperplanes
  as the original cone.  By the symmetry assumption of $\reportspace$
  around $\priorMean$, the reflected cone is equal to the mirror
  reflection of the original cone with respect to the $\util = 0$
  plane.  Consequently, the base of the reflected cone at $\util =
  -\sfrac 1 2$ has cross section equal to $\reportspace$.

  We now argue that the utility function satisfies the boundeness
  constraint, restated for convenience (with report $\report
  \in \reportspace$ and state $\state \in \statespace$):
  $$
  \util(\state) - \util(\report) - \subgradient \util(\report) \cdot(\state-\report)  \leq 1.
  $$
  By definition of the V-shaped utility, we know that the first term
  is at most $\sfrac 1 2$.  The second and third terms, together, can
  be viewed as subtracting the evaluation, at state $\state$, of the
  supporting hyperplane of $\util$ at $\report$.  The highest point in
  the reflected cone for any $\state \in \reportspace$ is
  $-\util(\state)$ and this point lower bounds the value of $\state$
  in any of the reflected cones supporting hyperplanes (which are the
  same as the original cones supporting hyperplanes).  By definition,
  the reflected cone satisfies $-\util(\state) \geq -\sfrac 1 2$ for
  $\state \in \reportspace$.  We conclude, as desired, that the
  difference between the first term and the second and third terms is
  at most 1.
  \end{proof}
  

\subsection{Proof of Theorem~\ref{thm:opt_symm_nd}}\label{sec:appendix-proof-opt_symm_nd}

\begin{proof}
  Consider relaxing the optimization problem on the general space
  solve it independently on lines through the center.  Specifically,
  consider the conditional distribution of $\marginalReport$ on the
  line segment through the center $\priorMean$ and the boundary points
  $\report$ and $2\priorMean - \report$ on $\boundaryReport$.  Center
  symmetry implys symmetry on this line segment.  By
  \Cref{c:symmetric-single-dimensional}, the solution to this
  single-dimensional problem is symmetric V-shaped, i.e., with
  $\util(\report) = \util(2\priorMean-\report) = \sfrac 1 2$ and
  $\util(\priorMean) = 1/2$.

  The solutions on all lines through the center $\priorMean$ coincide
  at $\priorMean$ with $\util(\priorMean) = 0$.  They can be combined,
  and the resulting utility function $\util$ is a symmetric V-shaped
  function (\Cref{def:symmetric v-shaped scoring rule}).
  \Cref{lem:symm_nd_bounded} implies that $\util$ is convex and
  bounded and, thus feasible for the original program.  Since it
  optimizes a relaxation of the original program, it is also optimal
  for the original program.
\end{proof}

\subsection{Proof of Proposition~\ref{lem:v shape is max over separate}}
\label{apx:lem_vshape}
\begin{proof}
First, it is easy to verify that the single dimensional scoring rules $\sdscore_i$ are proper and bounded in $[0, 1]$. 
For each dimension $i$, the utility function for each single dimensional scoring rule $\sdscore_i$ is V-shaped with
\begin{align*}
\sdutil_i(\report_i) &= 
\begin{cases}
-\frac{1}{b_i-a_i} (\report_i-\Pmeani) & \report_i \leq \Pmeani\\
\frac{1}{b_i-a_i} (\report_i-\Pmeani) & \report_i \geq \Pmeani
\end{cases}, \qquad\text{and}&
\sdkappa_i(\state_i) &= \sfrac{1}{2}. 
\end{align*}
By~\Cref{def:max over separate}, the \mos scoring rule $\score$ is 
$\score(\report,\state) = \sdscore_i(\report_i,\state_i)$ where $i \in \argmax_j \sdutil_j(\report_j)$, 
and hence the utility function for \mos scoring rule~$\score$ 
can be computed as
$\util(\report) = \max_{i \in [n]} \sdutil_i(\report_i)$, 
which coincides with the symmetric V-shaped function~$\util$. 
\end{proof}

\subsection{Properties of Choose-and-Report Scoring Rules}\label{sec:appendix-proof-proper and bounded for mos}
\begin{lemma}\label{lem:proper and bounded for mos}
The \car scoring rule $\score$ defined by proper and bounded
single-dimensional scoring rules $(\sdscore_1, \dots, \sdscore_{\numasm})$
is itself proper and bounded.
\end{lemma}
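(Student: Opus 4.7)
The plan is to verify the two conditions of the lemma separately; both are quite direct given the one-dimensional hypotheses.

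For boundedness, observe that by Definition~\ref{def:choose and report}, the output of the \car scoring rule on any report $(i, \report_i)$ and state $\state$ is literally one of the values $\sdscore_i(\report_i, \state_i)$. Since each single-dimensional $\sdscore_i$ is bounded in $[0, \scorebound]$, the \car score inherits the same bound pointwise, so no further work is required for this part.

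For properness, I would fix an arbitrary posterior $\posterior$ with marginal posteriors $\posterior_1, \dots, \posterior_\numasm$ and compute the agent's expected score from a deviation report $(j, \report_j)$. Since $\sdscore_j$ depends on the state only through coordinate $j$, the expected score factors through the marginal:
\begin{align*}
\expect{\state \sim \posterior}{\sdscore_j(\report_j, \state_j)} = \expect{\state_j \sim \posterior_j}{\sdscore_j(\report_j, \state_j)}.
\end{align*}
By the assumed properness of $\sdscore_j$ for eliciting the single-dimensional mean, this quantity is maximized over $\report_j$ at $\report_j = \mean{\posterior_j}$. Hence, conditional on choosing dimension $j$, the agent cannot gain by misreporting the mean.

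It then remains to argue that the agent's optimal joint strategy $(i,\report_i)$ reports the true marginal mean for whichever dimension is chosen. This is immediate: for any strategy $(j, \report_j)$, the above calculation shows it is weakly dominated by $(j, \mean{\posterior_j})$, and hence the overall best response is $(i^*, \mean{\posterior_{i^*}})$ with $i^* \in \argmax_j \expect{\state_j \sim \posterior_j}{\sdscore_j(\mean{\posterior_j}, \state_j)}$, which is a valid report in the mechanism. Since truthful reporting of the posterior mean in the chosen dimension is weakly optimal, the \car scoring rule is proper.

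There is no real obstacle here; the only subtlety worth flagging is that ``proper'' for a \car rule means properness of the mean report conditional on the dimension chosen (the dimension choice is itself an optimization by the agent, not a truthful elicitation), and this matches the structure of Definition~\ref{def:proper} applied to the mean component of the report.
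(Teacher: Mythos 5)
Your proposal is correct and follows essentially the same argument as the paper's proof: boundedness is immediate since the \car score is pointwise one of the bounded single-dimensional scores, and properness follows by factoring the expected score through the marginal posterior, applying single-dimensional properness within the chosen dimension, and then selecting the argmax dimension. The only difference is presentational --- you split the within-dimension domination and the dimension choice into two steps, while the paper combines them into a single chain of inequalities.
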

\begin{proof}
Given posterior distribution $\posterior$, let $i$ be the dimension
that maximizes the agent's expected utility under separate scoring
rules $\sdscore_1$, \dots, $\sdscore_{\numasm}$, i.e., $i =
\argmax_{j} \expect{\state_j \sim
  \posterior_j}{\sdscore_j(\mean{\posterior_j}, \state_j)}$, and let
$\report_i = \mean{\posterior_i}$ be the mean of the posterior on
dimension~$i$.  For report~$\report = (i,\report_i)$ and any other
report~$\report' = (i', \report'_i)$, we have
\begin{align*}
\expect{\state \sim \posterior}{\score(\report, \state)} 
&=
\expect{\state_i \sim \posterior_i}{\sdscore_i(\report_i, \state_i)}
\geq \expect{\state_{i'} \sim
  \posterior_{i'}}{\sdscore_{i'}(\mean{\posterior_{i'}}, \state_{i'})}\\
&\geq \expect{\state_{i'} \sim
  \posterior_{i'}}{\sdscore_{i'}(\report'_{i'}, \state_{i'})} =
\expect{\state \sim \posterior}{\score(\report', \state)}.
\end{align*}
The first and last equality hold by the definition of \car proper
scoring rules, and the first inequality holds by the definition of
dimension $i$.  The second inequality holds since each single
dimensional scoring rule is proper.  Thus the \car scoring rule
$\score$ is proper.  Moreover, if each single dimensional proper
scoring rule $\sdscore_i$ is bounded, it is easy to verify that the
\car scoring rule $\score$ is also bounded.
\end{proof}



\subsection{Proof of Proposition~\ref{thm:score separately multiplicative}}\label{sec:appendix-proof-score separately multiplicative}

\begin{proof}
  We first argue the upper bound that scoring separately in rectangular report and state spaces guarantees an
  $O(\numasm)$ approximation. 
  By \Cref{thm:approx_max_over_separate}, 
  there exists proper and bounded single-dimensional proper scoring rules $(\score_1, \dots, \score_{\numasm})$
  such that the induced \mos $\score$ is an 8-approximation to the optimal scoring rule. 
  Let $\hat{\score}$ be the separate scoring rule induced by single-dimensional proper scoring rules 
  $(\frac{1}{\numasm}\score_1, \dots, \frac{1}{\numasm}\score_{\numasm})$. 
  It is easy to verify that scoring rule $\hat{\score}$ is bounded, with objective value at least $\frac{1}{\numasm}$ fraction of that for scoring rule $\score$. 
  Thus, separate scoring rule $\hat{\score}$ is an
  $O(\numasm)$ approximation to the optimal scoring rule. 

  We now give an example of a symmetric distribution over posteriors
  over the space $\reportspace = \statespace = [0,1]^{\numasm}$ such that the approximation is $\Omega(\numasm)$.
  Consider the i.i.d.\ distribution over posterior means $\marginalReport$ with
  marginal distribution $\marginalReport_i$ dimension $i$ defined by
  \begin{align*}
\report_i =  
\begin{cases}
  1          & \text{w.p.\ $\sfrac 1 {2\numasm}$},\\
  \sfrac 1 2 & \text{w.p.\ $1- \sfrac 1 \numasm$},\\
  0          & \text{w.p.\ $\sfrac 1 {2\numasm}$}.
\end{cases}
\end{align*}\sloppy
  The prior mean for each dimension is $\sfrac{1}{2}$ and by
  \Cref{c:symmetric-single-dimensional}, the optimal scoring rule for
  each dimension $i$ has V-shaped utility function $\sdutili$ with $\sdutili(0) = \sdutili(1) = \sfrac 1 2$
  and $\sdutili(\sfrac 1 2) = 0$.  Thus, the expected objective value for
  the optimal scoring rule of dimension $i$ is $\sfrac 1 2 \,
  \prob{\report_i \sim \marginalReport_i}{\report_i \in \{0,1\}} =
  \sfrac 1 {2\numasm}$.  Any average of optimal separate scoring rules, thus, has
  objective value $\sfrac 1 {2\numasm}$.

  Now consider the \mos scoring rule which has a (multi-dimensional)
  symmetric V-shaped utility function $\util$ and is optimal (see
  \Cref{def:symmetric v-shaped scoring rule} and
  \Cref{thm:opt_symm_nd}).  The objective value is $\expect{\report
    \sim \marginalReport}{\util(\report)}$.  Importantly
  $\util(\report) = 0$ if $\report = (\sfrac 1 2,\ldots,\sfrac 1 2)$
  and, otherwise, $\util(\report) = \sfrac 1 2$.  Thus, \begin{align*}
    \OPT(\marginalReport) &= \sfrac 1 2 \, \prob{\report \sim
      \marginalReport}{\report \neq (\sfrac 1 2,\ldots,\sfrac 1
      2)}\\ &= \sfrac 1 2 \, (1 - (1 - \sfrac 1 {\numasm})^{\numasm}) \geq \sfrac 1 2
    \,(1-\sfrac 1 e).
    \end{align*}
  Thus, the approximation ratio of optimal separate scoring to optimal
  scoring is at least $\sfrac{e\, \numasm}{e-1}$ (and this bound is
  tight in the limit of $\numasm$).
\end{proof}

\subsection{Max-ove-separate vs.\ Separate Scoring Rules}
\label{apx:thm:mos better}
\begin{proposition}\label{thm:mos better}
For any distribution $\marginalReport$ over posterior means
and for any separate scoring rule
with single dimensional scoring rules $(\score_1, \dots, \score_{\numasm})$ such that 
$\score(\report, \state)
= \sum_i\score_i(\report_i, \state_i)$, 
there exists a \mos scoring rule with objective value weakly higher than $\score$.
\end{proposition}

Intuitively, by taking the max instead of taking the weighted average of single-dimensional scoring rules, 
the principal obtains at least the maximum over the marginal objective values from all dimensions instead of their average, 
and hence the objective value weakly improves. 
\begin{proof}[Proof of \cref{thm:mos better}]
Given any separate scoring rule
with corresponding single dimensional scoring rules $(\score_1, \dots, \score_{\numasm})$, 
for any dimension $i$, let 
\begin{align*}
\underline{s}_i = \min_{\report_i,\state_i} \score_i(\report_i,\state_i)
\text{ and }
\bar{s}_i = \min_{\report_i,\state_i} \score_i(\report_i,\state_i).
\end{align*}
Let $\hat{\score}_i(\report_i,\state_i) \triangleq \frac{1}{\bar{s}_i - \underline{s}_i} (\score_i(\report_i,\state_i)- \underline{s}_i)$.
It is easy to verify that $\hat{\score}_i$ is bounded in $[0,1]$.
By the boundedness constraint of $\score$, we have 
$\sum_i \underline{s}_i\geq 0$ and $\sum_i \bar{s}_i\leq 1$,
which implies $\sum_i (\bar{s}_i-\underline{s}_i)\leq 1$.
Therefore, the separate scoring rule 
$\hat{\score}(\report,\state) = \sum_i(\bar{s}_i-\underline{s}_i)\cdot \hat{\score}_i(\report_i,\state_i)$ 
is also bounded in $[0,1]$ and has the same objective value as $\score$. 

For any dimensional $i$, let $\marginalReport_i$ be the marginal distribution over mean on dimension $i$, 
and let $\objfunc(\hat{\score}_i,\marginalReport_i)$ be the objective value of scoring rule $\hat{\score}_i$ when the marginal distribution is $\marginalReport_i$. 
Let $i^* = \argmax_i \objfunc(\hat{\score}_i,\marginalReport_i)$.
It is easy to verify that 
\begin{align*}
\objfunc(\hat{\score},\marginalReport) = \sum_i(\bar{s}_i-\underline{s}_i)\cdot\objfunc(\hat{\score}_i,\marginalReport_i)
\leq \max_i \objfunc(\hat{\score}_i,\marginalReport_i)
= \objfunc(\hat{\score}_{i^*},\marginalReport_{i^*}). 
\end{align*}
Consider the \mos scoring rule $\Tilde{\score}$ with single-dimensional scoring rules $(\Tilde{\score}_1, \dots, \Tilde{\score}_n)$
where $\Tilde{\score}_{i^*}= \hat{\score}_{i^*}$
and $\Tilde{\score}_i \equiv 0$ for any $i\neq i^*$.
The objective value of \mos scoring rule $\Tilde{\score}$ coincides with the objective value of $\Tilde{\score}_{i^*}$
since the agent always chooses dimension $i^*$ to be scored. 
Therefore, 
\begin{equation*}
\objfunc(\Tilde{\score},\marginalReport) = \objfunc(\Tilde{\score}_{i^*},\marginalReport_{i^*})
= \objfunc(\hat{\score}_{i^*},\marginalReport_{i^*})
\geq \objfunc(\hat{\score},\marginalReport)
= \objfunc(\score,\marginalReport)
\end{equation*}
and the \mos scoring rule has weakly higher objective value. \end{proof}


\subsection{Proofs of Lemma~\ref{lem:opt-separate}-Lemma~\ref{lem:symmetry-in-ext}}
\label{app:opt-separate}


\begin{proof}[Proof of \Cref{lem:opt-separate}]
This result follows because the extended distribution is symmetric
on the extended state space, thus, its optimal scoring rule is \mos
(\Cref{cor:opt_symm_nd}).  This scoring rule can be applied to the
original space where it is still max-over-separate. The optimal \mos scoring
rule for the original space is no worse.
\end{proof}


\begin{proof}[Proof of \Cref{lem:extopt-to-orig}]
  Let $\extutil$ be the optimal utility function corresponding to
  $\OPT(\extMarginalReport, \scorebound, \extstatespace)$. Since the
  distribution $\extMarginalReport$ is center symmetric,
  by \Cref{thm:opt_symm_nd}, the utility function $\extutil$ is
  symmetric V-shaped. Thus, we have
\begin{align*}
  \OPT(\extMarginalReport, \scorebound, \extstatespace)
  &= \int_{\extreportspace} \extutil(\report) \dd \extMarginalReport(\report)\\
  &= \frac{1}{2}\int_{\reportspace} \extutil(\report)\dd\marginalReport(\report) + \frac{1}{2}\int_{\reportspace} \extutil(2\priorMean - \report)\dd\marginalReport(\report) \\
  &= \int_{\reportspace} \extutil(\report)\dd\marginalReport(\report) = \objfunc(\extutil,\marginalReport). \qedhere
\end{align*}
  \end{proof}


\begin{proof}[Proof of \Cref{lem:symmetry-in-ext}]
  Let $\extreputil$ be the optimal solution of
  \Cref{eq:program} with distribution $\marginalReport$ and
  state space $\extstatespace$, i.e., $\objfunc(\extreputil,
  \marginalReport) = \OPT(\marginalReport, \scorebound,
  \extstatespace)$.  On the other hand, utility function $\extreputil$
  may not be optimal for distribution $\extMarginalReport$, thus,
  $\OPT(\extMarginalReport, \scorebound,
  \extstatespace) \geq \objfunc(\extreputil,
  \extMarginalReport)$.  We have, 
  \begin{align*}
  \OPT(\extMarginalReport,\scorebound,\extstatespace)
  &\geq \objfunc(\extreputil,\extMarginalReport)
  =
\int_{\extreportspace} \extreputil(\report)\dd\extMarginalReport(\report)  \\
&= \frac{1}{2} \int_{\reportspace} \extutil(\report)\dd\marginalReport(\report) 
+ \frac{1}{2} \int_{\reportspace} \extutil(2\priorMean - \report)\dd\marginalReport(\report)\\
&\geq \frac{1}{2} \int_{\reportspace} \extutil(\report)\dd\marginalReport(\report) = \frac{1}{2} \OPT(\marginalReport, \scorebound, \extstatespace) 
  \end{align*}
  where the final inequality follows from convexity of $\extreputil$, $\int_{\reportspace} (2\priorMean - \report)\dd\marginalReport(\report) = \priorMean$, Jensen's Inequality, and $\extreputil(\priorMean) = 0$.
\end{proof}

\subsection{Proof of Lemma~\ref{lem:obj ratio}}

\label{sss:lem-obj-ratio}

The approach to proving \Cref{lem:obj ratio}, i.e.,
$\OPT(\marginalReport,\scorebound, \extstatespace) \geq \frac{1}{4}
\OPT(\marginalReport,\scorebound, \statespace)$, is as follows.  Let
$\util$ be the optimal utility corresponding to
$\OPT(\marginalReport,\scorebound, \statespace)$.  We construct
$\extutil$ that (a) exceeds $\util$ at all point $\report
\in \reportspace$ and (b) is feasible for
$\OPT(\marginalReport,4\scorebound, \extstatespace)$.  The utility
function $\extutil/4$, thus, has objective value at least $\frac{1}{4}
\OPT(\marginalReport,\scorebound, \statespace)$ and is feasible for
$\OPT(\marginalReport,\scorebound, \extstatespace)$.  The optimal
utility is only better.

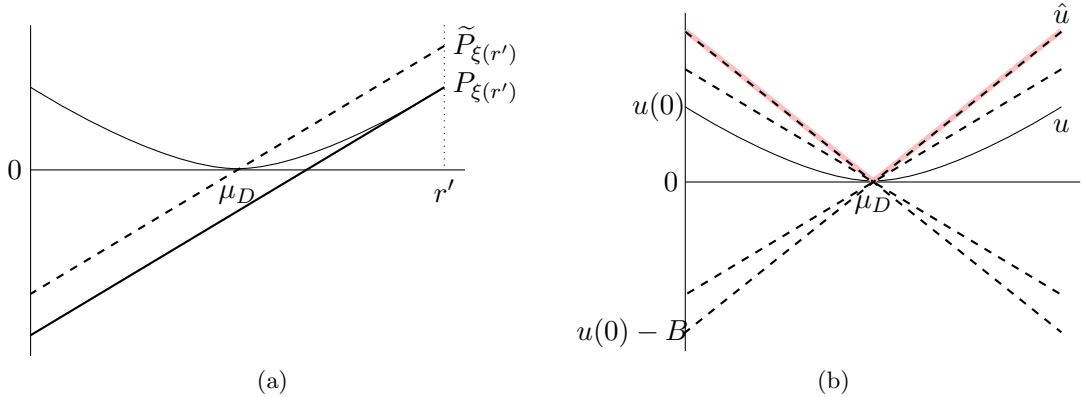
\begin{figure}[t]
\centering
\subfloat[]{
\begin{tikzpicture}[scale = 0.55]

\draw (0,0) -- (10.5, 0);
\draw (0, -4.5) -- (0, 3.5);

\draw plot [smooth, tension=0.7] coordinates {
(0, 2) (5,0.03) (10, 2)
};

\draw [thick] (10,2) -- (0,-4);
\draw [dashed,thick] (10,3) -- (0,-3);

\draw (-0.38, 0) node {$0$};
 \draw (5, -0.6) node {$\priorMean$};




\draw [dotted] (10,3.5) -- (10,0);
\draw (10, -0.5) node {$\report'$};

\draw (11, 2) node {$\plane_{\sg(\report')}$};
\draw (11, 3) node {$\widetilde{\plane}_{\sg(\report')}$};





\end{tikzpicture}
\label{fig:util_construct}
}
\subfloat[]{
\begin{tikzpicture}[scale = 0.5]

\draw (0,0) -- (10.5, 0);
\draw (0, -4.5) -- (0, 4.5);

\draw plot [smooth, tension=0.7] coordinates {
(0, 2) (5,0.03) (10, 2)
};

\draw [line width = 2pt, color=pink] plot [smooth, tension=0.01] coordinates {
(0, 4.03) (5,0.03) (10, 4.03)
};

\draw (-0.9, 2) node {$\util(0)$};
\draw (-1.7, -4) node {$\util(0) - \scorebound$};

\draw [dashed,thick] (10,3) -- (0,-3);

\draw [dashed,thick] (0,3) -- (10,-3);

\draw [dashed,thick] (0,4) -- (10,-4);
\draw [dashed,thick] (10,4) -- (0,-4);

\draw (10,1.5) node {$\util$};
\draw (10,4.5) node {$\hat{\util}$};


\draw (-0.38, 0) node {$0$};
 \draw (5, -0.6) node {$\priorMean$};










\end{tikzpicture}
\label{fig:util_new}
}
\caption{\label{f:util_new}
The figure on the left hand side illustrates a  hyperplane for report $\report'$ on the boundary of the report space, 
which is shifted from a tangent plane of $\util$ at the boundary $\report'$. 
The figure on the right hand side illustrates the extended utility function $\extutil$ 
that takes the supremum over all hyperplanes shifted from the feasible tangent planes to intersect with the
$(\priorMean,0)$ point.}
\end{figure}

The proof of the lemma introduces the following constructs.

\begin{itemize}
  \item The {\em extended utility function} $\extutil$ for program
    $\OPT(\marginalReport,4\scorebound, \extstatespace)$ given
    utility function $\util$ for the program
    $\OPT(\marginalReport,\scorebound, \statespace)$ is defined as
    follows.

    Feasibility of $\util$ for \Cref{eq:program} defines subgradients
    $\{\sg(\report) : \report \in \reportspace\}$ that satisfy the
    boundedness condition.  Let $\sgset_{\util}$ be the set of all
    subgradients of $\util$ that satisfy the boundedness constraint.
    Clearly the latter set contains the former set.  Define the
    extended utility function $\extutil$ as the convex function
    defined by the supremum of the supporting hyperplanes given by the
    subgradients $\sgset_{\util}$ shifted to intersect with the
    $(\priorMean,0)$ point.  See \Cref{f:util_new}.
    
    Convexity of $\util$ implies that its supporting hyperplane at
    $\report$ with subgradient $\sg(\report)$ is below
    $\util(\priorMean) = 0$ at $\priorMean$.  Thus, relative to the
    supporting hyperplanes of $\util$ these supporting hyperplanes of
    $\extutil$ are shifted upwards.

    The extended utility function $\extutil$ is {\em convex-conical} as it is
    defined by supporting hyperplanes that all contain point
    $(\priorMean,0)$.

  \item The {\em extended state spaces} are $\statespace \subset
    \extsymstatespace \subset \extconvstatespace
    \subset \extstatespace$.  State space $\extsymstatespace$ is the
    union of the original state space and its point reflection about
    $\priorMean$ as $\extsymstatespace = \statespace \cup
    \{2\priorMean - \state : \state \in \statespace\}$, state space
    $\extconvstatespace$ is the convex hull of $\extsymstatespace$,
    and state space $\extstatespace$ (as previously defined) is the
    extended rectangular state space containing $\extconvstatespace$.
\end{itemize}
\Cref{lem:obj ratio}, i.e., $\OPT(\marginalReport,4\scorebound, \extstatespace) \geq
  \OPT(\marginalReport,\scorebound, \statespace)$, follows by combining the following lemmas.

\begin{lemma}\label{lem:util function improve}
  \label{lem:first-of-boundedness}
For any feasible solution $\util$ for \Cref{eq:program}, the extended
utility function $\extutil$ is at least $\util$, i.e.,
$\extutil(\report) \geq \util(\report)$ for any report
$\report\in\reportspace$.
\end{lemma}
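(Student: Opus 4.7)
The plan is to make the definition of $\extutil$ concrete and then verify the inequality pointwise using convexity of $\util$ together with the normalization $\util(\priorMean)=0$. First, fix any subgradient $\sg \in \sgset_{\util}$. The supporting hyperplane of $\util$ with slope $\sg$ at some reference point $\report'$ is the affine function $\report \mapsto \util(\report') + \sg\cdot(\report-\report')$. Translating this plane vertically so that it passes through $(\priorMean, 0)$ kills the constant term and yields the linear function $\tilde{\plane}_{\sg}(\report) = \sg\cdot(\report-\priorMean)$, which no longer depends on the reference point $\report'$. Therefore the extended utility function admits the clean formula
\begin{align*}
\extutil(\report) \;=\; \sup_{\sg\,\in\,\sgset_{\util}} \sg\cdot(\report-\priorMean).
\end{align*}

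Next I would exhibit a single subgradient achieving a value at least $\util(\report)$ in the supremum above. Because $\util$ is a feasible solution of \cref{eq:program}, it comes equipped with subgradients in $\sgset_{\util}$ at every report; pick one such $\sg(\report)\in\partial\util(\report)\cap\sgset_{\util}$. Convexity of $\util$ together with $\util(\priorMean)=0$ gives
\begin{align*}
0 \;=\; \util(\priorMean) \;\geq\; \util(\report) + \sg(\report)\cdot(\priorMean-\report),
\end{align*}
which rearranges to $\sg(\report)\cdot(\report-\priorMean) \geq \util(\report)$. Combining this inequality with the formula for $\extutil$ yields
\begin{align*}
\extutil(\report) \;\geq\; \sg(\report)\cdot(\report-\priorMean) \;\geq\; \util(\report),
\end{align*}
which is exactly what needs to be shown.

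There is essentially no hard step here: the argument is just the observation that shifting a supporting hyperplane upward so it passes through the vertex $(\priorMean,0)$ can only raise the hyperplane everywhere, and that the original supporting hyperplane at $\report$ already reaches $\util(\report)$ at that point. The only thing to be mildly careful about is to use a subgradient that belongs to $\sgset_{\util}$ (not some arbitrary subgradient of $\util$), so that the candidate plane actually participates in the supremum defining $\extutil$; this is automatic because feasibility of $\util$ supplies such a subgradient at every report.
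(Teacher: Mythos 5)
Your proposal is correct and follows essentially the same route as the paper: the paper's (very terse) proof likewise observes that each supporting hyperplane, when shifted to pass through $(\priorMean,0)$, moves weakly upward because convexity and $\util(\priorMean)=0$ force the original hyperplane to lie at or below zero at $\priorMean$, and the supporting hyperplane at $\report$ already attains $\util(\report)$ there. You have simply made explicit the formula $\extutil(\report)=\sup_{\sg\in\sgset_{\util}}\sg\cdot(\report-\priorMean)$ and the existence of a feasible subgradient at each report, which the paper leaves implicit in the construction.
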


\begin{lemma}
  \label{lem:expansion_bounded_R}
  For any feasible solution $\util$ for \Cref{eq:program} with score
  bound $\scorebound$ and state space $\statespace$, the extended
  utility function $\extutil$ is a feasible solution of
  \Cref{eq:program} with score bound $2\scorebound$ and state space
  $\statespace$.
\end{lemma}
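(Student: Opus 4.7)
The plan is to verify the three feasibility conditions of \Cref{eq:program} for $\extutil$ with score bound $2\scorebound$ and state space $\statespace$. Continuity, convexity, and $\extutil(\priorMean)=0$ fall out of the construction: $\extutil$ is the pointwise supremum of the affine functions $\tilde h_{\sg}(\report)=\sg\cdot(\report-\priorMean)$ indexed by $\sg\in\sgset_{\util}$, hence convex (and continuous on $\relint(\reportspace)$), and each $\tilde h_{\sg}$ vanishes at $\priorMean$.

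For the boundedness inequality, at each $\report\in\reportspace$ I would choose $\extsg(\report)\in\sgset_{\util}$ to be a maximizer of the defining supremum for $\extutil(\report)$; this is a subgradient of $\extutil$ at $\report$, and the identity $\extutil(\report)=\extsg(\report)\cdot(\report-\priorMean)$ lets one rewrite the boundedness LHS as
\[
    \extutil(\state)-\extutil(\report)-\extsg(\report)\cdot(\state-\report)=\sup_{\sg\in\sgset_{\util}}\!\bigl[\tilde h_{\sg}(\state)-\tilde h_{\extsg(\report)}(\state)\bigr].
\]
It therefore suffices to show $\tilde h_{\sg}(\state)-\tilde h_{\sg'}(\state)\le 2\scorebound$ uniformly over $\sg,\sg'\in\sgset_{\util}$ and $\state\in\statespace$.

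This is the main step. Letting $h_{\sg}$ denote the original (unshifted) supporting hyperplane of $\util$ with slope $\sg$, so that $\tilde h_{\sg}=h_{\sg}-h_{\sg}(\priorMean)$, my plan is to decompose
\[
    \tilde h_{\sg}(\state)-\tilde h_{\sg'}(\state)=\bigl[h_{\sg}(\state)-h_{\sg'}(\state)\bigr]+\bigl[h_{\sg'}(\priorMean)-h_{\sg}(\priorMean)\bigr],
\]
and control each bracket by $\scorebound$. For the first bracket, both $h_{\sg}(\state)$ and $h_{\sg'}(\state)$ lie in $[\util(\state)-\scorebound,\,\util(\state)]$: the upper bound because $h_{\sg}$ supports the convex $\util$ from below, and the lower bound by the boundedness of $\util$ at the state $\state\in\statespace$. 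For the second bracket, both values lie in $[-\scorebound,0]$; the upper bound $h_{\sg}(\priorMean)\le \util(\priorMean)=0$ is again the supporting-hyperplane property at $\priorMean$.

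The main obstacle is the lower bound $h_{\sg}(\priorMean)\ge -\scorebound$: the boundedness constraint in \Cref{eq:program} is stated only at states, and $\priorMean$ need not be a state, so a direct invocation fails. I plan to transfer it via convex averaging: writing $\priorMean=\sum_i\alpha_i\state^{(i)}$ as a convex combination of states in $\statespace$ (using $\reportspace=\conv(\statespace)$), one obtains $h_{\sg}(\priorMean)=\sum_i\alpha_i h_{\sg}(\state^{(i)})\ge \sum_i\alpha_i(\util(\state^{(i)})-\scorebound)\ge \util(\priorMean)-\scorebound=-\scorebound$, where the last step uses convexity of $\util$. Combining the two brackets then gives the desired $2\scorebound$ bound, and the degradation of the score bound is exactly a factor of two.
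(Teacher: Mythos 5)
Your argument follows the paper's route: view $\extutil$ as the supremum of the supporting hyperplanes of $\util$ with slopes in $\sgset_{\util}$, re-anchored at $(\priorMean,0)$; bound the upward shift of each hyperplane by $\scorebound$; and combine this with the original boundedness constraint to get $2\scorebound$. Your two-bracket decomposition is exactly the paper's accounting ($\extutil(\state)\leq\util(\state)+\scorebound$ together with ``the shifted hyperplanes are no lower than the originals''), just written pairwise and uniformly over $\sg,\sg'\in\sgset_{\util}$. Your convex-averaging step for $h_{\sg}(\priorMean)\geq-\scorebound$ is correct and is in fact more careful than the paper, which invokes the boundedness constraint ``at state $\priorMean$'' (harmless in the intended application, where $\statespace$ is a rectangle and hence contains $\priorMean$, but your version also covers non-convex $\statespace$).

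The one step you assert without justification is the existence of a maximizer $\extsg(\report)\in\sgset_{\util}$ of the defining supremum at each report, and hence of a subgradient of $\extutil$ at $\report$ of this special form: a supremum of infinitely many affine functions need not be attained. This is precisely what the paper spends \Cref{lem:bounded subgradient is close} on --- it shows that $\sgset_{\util}$ is closed (and, via \Cref{lem:bounded functions}, that the relevant subgradients are bounded), so the supremum is attained on one of the shifted hyperplanes and your identity for the boundedness left-hand side applies to an actual subgradient of $\extutil$. Your proof can absorb this either by citing that closedness lemma or by a short limiting argument: take $\sg^k$ with $\tilde h_{\sg^k}(\report)\to\extutil(\report)$, extract a convergent subsequence using boundedness of the slopes, and pass your uniform $2\scorebound$ bound to the limit slope, which is then a subgradient of $\extutil$ at $\report$. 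As written, though, the selection of $\extsg(\report)$ is the gap to close; the same boundedness of slopes is also what upgrades your ``continuous on $\relint(\reportspace)$'' to continuity on all of $\reportspace$, which is what \Cref{eq:program} requires.
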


\begin{lemma}
  \label{lem:extsymstatespace}
Any convex-conical utility function $\extutil$ that is a feasible
solution of \Cref{eq:program} with score bound $2\scorebound$ and
state space $\statespace$ is a feasible solution to \Cref{eq:program}
with bound $2\scorebound$ and state space~$\extsymstatespace$.
\end{lemma}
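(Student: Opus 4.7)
The plan is to exploit the convex-conical structure of $\extutil$ to reduce the boundedness constraint to a form that is manifestly invariant under reflection about $\priorMean$, and then observe that $\extsymstatespace$ is precisely the closure of $\statespace$ under this reflection. The first key step is a simplification: because $\extutil$ is convex-conical with apex $\priorMean$, every supporting hyperplane of $\extutil$ passes through $(\priorMean,0)$, so any $\sg(\report)\in\subgradient\extutil(\report)$ satisfies $\extutil(\report) = \sg(\report)\cdot(\report-\priorMean)$. Substituting this identity into the boundedness condition $\extutil(\state)-\extutil(\report)-\sg(\report)\cdot(\state-\report) \leq 2\scorebound$ eliminates the dependence on $\report$, yielding the equivalent constraint
\[
\extutil(\state) - \sg(\report)\cdot(\state-\priorMean) \leq 2\scorebound.
\]

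Next, I would write $\extutil$ in its dual form $\extutil(y) = \sup_{\xi\in\Xi}\,\xi\cdot(y-\priorMean)$, where $\Xi = \subgradient\extutil(\priorMean)$ is the subdifferential at the apex. Because $\extutil$ is convex-conical, every subgradient of $\extutil$ at any point lies in $\Xi$. The simplified constraint above can then be restated as: for every $\xi_1,\xi_2\in\Xi$, with $\xi_1$ attaining the sup defining $\extutil(\state)$ and $\xi_2$ the chosen subgradient $\sg(\report)$,
\[
(\xi_1-\xi_2)\cdot(\state-\priorMean)\leq 2\scorebound.
\]
Using the feasibility of $\extutil$ on $\statespace$ supplied by \Cref{lem:expansion_bounded_R}, together with the fact that $\priorMean\in\reportspace$ and every element of $\Xi$ is a subgradient of $\extutil$ at $\priorMean$, I would argue that this symmetric-in-$\xi$ form of the constraint indeed holds for all pairs $(\xi_1,\xi_2)\in\Xi\times\Xi$ and all $\state\in\statespace$.

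The reflection step is then automatic. For $\state' = 2\priorMean-\state$ with $\state\in\statespace$, we have $\state'-\priorMean = -(\state-\priorMean)$, so $(\xi_1-\xi_2)\cdot(\state'-\priorMean) = (\xi_2-\xi_1)\cdot(\state-\priorMean)$; swapping $\xi_1\leftrightarrow\xi_2$, which keeps both in $\Xi$, recovers the same bound. Thus the symmetric form of the boundedness extends from $\statespace$ to $\extsymstatespace = \statespace\cup\{2\priorMean-\state:\state\in\statespace\}$. Extending the report space from $\conv(\statespace)$ to $\conv(\extsymstatespace)$ does not introduce any subgradients of $\extutil$ outside $\Xi$, so we may define the subgradient function on the extended report space by any measurable selection from $\subgradient\extutil$, and the reformulated constraint continues to bound the original boundedness expression by $2\scorebound$. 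This establishes feasibility of $\extutil$ for \Cref{eq:program} with bound $2\scorebound$ and state space $\extsymstatespace$.

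The main technical obstacle is justifying the symmetric reformulation, namely that the original feasibility forces the constraint to hold for every pair of subgradients in $\Xi\times\Xi$ rather than only for the specific subgradient function chosen by the designer. I would resolve this by exploiting two complementary facts about the convex-conical structure: at $\report=\priorMean$ the subdifferential of $\extutil$ equals all of $\Xi$, so the constraint at that single report, together with the sup representation of $\extutil(\state)$, already forces the bound against every element of $\Xi$ on the $\xi_1$ side; and symmetrically, for each $\xi\in\Xi$ there is a report where $\xi$ is a valid subgradient, so up to an appropriate reselection of the subgradient function (which is permitted since feasibility requires only the existence of some valid $\sg$), the bound is also forced on the $\xi_2$ side. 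Making this reselection compatible with the construction of $\extutil$ in \Cref{lem:expansion_bounded_R} is the delicate part of the proof.
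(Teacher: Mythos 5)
Your overall strategy matches the paper's: for a convex-conical $\extutil$ with apex $(\priorMean,0)$ you reduce the boundedness expression to the comparison of the cone with its point reflection about the apex, i.e.\ to the reflection-invariant quantity $\extutil(\state)+\extutil(2\priorMean-\state)\le 2\scorebound$, and then extend from $\statespace$ to $\extsymstatespace$ by symmetry; this is exactly the paper's geometric argument. The genuine gap is at the step you yourself flag as delicate: deriving the pairwise bound $(\xi_1-\xi_2)\cdot(\state-\priorMean)\le 2\scorebound$ for \emph{all} $\xi_2\in\Xi$ from the hypothesis. Feasibility in \Cref{eq:program} asserts only that \emph{some} selection $\sg(\report)\in\subgradient\extutil(\report)$ satisfies the constraints; you cannot ``reselect'' the subgradient function and still invoke the hypothesis, because different subgradients at the same report define different supporting hyperplanes and hence different constraints. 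In particular, at $\report=\priorMean$ the hypothesis gives the bound only against the single chosen $\sg(\priorMean)$, not against every element of $\Xi=\subgradient\extutil(\priorMean)$, so the $\xi_2$ side of your symmetric reformulation is assumed rather than proved. Two auxiliary claims are also false at boundary reports: a subgradient of $\extutil$ at a boundary point of the report space need not define a hyperplane through the apex and need not lie in $\Xi$ (in one dimension with $\reportspace=[0,1]$, apex $\sfrac{1}{2}$ and $\extutil(\report)=\abs{\report-\sfrac{1}{2}}$, the slope $2$ is a valid subgradient at $\report=1$). This is harmless for constructing the new selection, but it undermines the claim that the chosen $\sg(\report)$ in the hypothesis always lies in $\Xi$.

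The missing step can be closed without any reselection by using reports on the ray from $\priorMean$ \emph{away} from $\state$: for $\state\in\statespace$ set $\report_t=\priorMean-t(\state-\priorMean)$ with $t>0$ small enough that $\report_t\in\relint(\reportspace)$ (available when $\priorMean\in\relint(\reportspace)$). Since $\extutil$ is positively homogeneous about the apex and radial perturbations of $\report_t$ in both directions stay in $\reportspace$, \emph{every} subgradient at $\report_t$ satisfies $\sg(\report_t)\cdot(\report_t-\priorMean)=\extutil(\report_t)=t\,\extutil(2\priorMean-\state)$, hence $\sg(\report_t)\cdot(\state-\priorMean)=-\extutil(2\priorMean-\state)$ no matter which subgradient the hypothesis happened to choose. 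The given constraint at the pair $(\report_t,\state)$ then yields $\extutil(\state)+\extutil(2\priorMean-\state)\le 2\scorebound$ for all $\state\in\statespace$, after which your reflection argument, together with a conical subgradient selection (elements of $\Xi$) on the extended report space $\conv(\extsymstatespace)$, establishes feasibility for $\extsymstatespace$. This ray argument is precisely the rigorous content behind the paper's statement that, for convex-conical utilities, the boundedness constraint is the gap between the cone and the reflected cone.
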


\begin{lemma}
  \label{lem:convHullBounded}
  Any convex-conical utility function $\extutil$ that is a feasible solution
  of \Cref{eq:program} with score bound $2\scorebound$ and
  state space $\extsymstatespace$ is a feasible solution to \Cref{eq:program} with bound
  $2\scorebound$ and state space $\extconvstatespace = \conv(\extsymstatespace)$.
\end{lemma}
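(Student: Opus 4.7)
The plan is to lift the boundedness constraint from the state space $\extsymstatespace$ to its convex hull $\extconvstatespace$ by a direct application of convexity of $\extutil$; interestingly, the convex-conical structure of $\extutil$ plays no role in this step. Note first that the report space in \Cref{eq:program} is $\conv(\statespace)$, so whether we take the state space to be $\extsymstatespace$ or $\extconvstatespace$, the report space is the same, namely $\extconvstatespace$. Hence the only thing to verify in passing from the former program to the latter is the boundedness inequality at the new (non-extremal) states.

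Fix any report $\report \in \extconvstatespace$, any subgradient $\sg(\report) \in \subgradient \extutil(\report)$, and any state $\state \in \extconvstatespace$. Writing $\state = \sum_i \alpha_i \state_i$ with $\state_i \in \extsymstatespace$, $\alpha_i \geq 0$, and $\sum_i \alpha_i = 1$, I would expand
\begin{align*}
\extutil(\state) - \extutil(\report) - \sg(\report) \cdot (\state - \report)
&\leq \sum_i \alpha_i \,\extutil(\state_i) - \sum_i \alpha_i \,\extutil(\report) - \sum_i \alpha_i \, \sg(\report) \cdot (\state_i - \report) \\
&= \sum_i \alpha_i \bigl[\extutil(\state_i) - \extutil(\report) - \sg(\report) \cdot (\state_i - \report)\bigr] \leq 2\scorebound,
\end{align*}
where the first inequality uses convexity of $\extutil$ applied to $\state = \sum_i \alpha_i \state_i$ together with $\sum_i \alpha_i = 1$ on the other two terms, and the final inequality uses that each bracketed quantity is bounded by $2\scorebound$ by the hypothesized feasibility with state space $\extsymstatespace$. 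This establishes the boundedness condition at every $\state \in \extconvstatespace$, completing the argument.

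There is no substantive obstacle here: the argument is a one-line Jensen-type computation. The only conceptual point worth flagging is that the upper bound transfers unchanged from extreme points to convex combinations because the right-hand side $2\scorebound$ is a constant (so the convex combination of constants is the same constant), and the linearity of $\sg(\report) \cdot (\cdot)$ together with $\sum_i \alpha_i = 1$ makes the two terms that do not depend on $\state$ pull through the convex combination cleanly.
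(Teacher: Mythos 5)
Your proof is correct, and it is somewhat more elementary than the paper's. The paper argues through the convex-conical structure: since every supporting hyperplane of $\extutil$ passes through $(\priorMean,0)$, the worst-case boundedness gap at a state $\state$ equals the difference between the cone and its point reflection (the reflected cone), a convex function of $\state$, and Jensen transfers the bound from $\extsymstatespace$ to $\conv(\extsymstatespace)$. You instead fix one report and one subgradient and apply Jensen directly to the constraint, which is convex in $\state$ for each fixed $(\report,\sg(\report))$; this makes the conical hypothesis visibly unnecessary for this step, and your preliminary observation that the report space $\conv(\statespace)$ is the same for both programs (so only the boundedness constraint needs re-verification) is exactly right. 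One small quantifier nit: feasibility on $\extsymstatespace$ only guarantees the $2\scorebound$ bound for the particular subgradient selection witnessing that feasibility, so you should run the argument with that selection (which suffices, since exhibiting one valid selection is all feasibility of the larger program requires) rather than for ``any'' subgradient. What the paper's phrasing buys is uniformity with the adjacent steps (\Cref{lem:extsymstatespace} and \Cref{lem:extension_bound_RH}), where the cone/reflected-cone geometry is genuinely load-bearing; what yours buys is a shorter argument under weaker hypotheses.
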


\begin{lemma}\label{lem:extension_bound_RH}
  \label{lem:last-of-boundedness}
  Any convex-conical utility function $\extutil$ that is a feasible solution
  of \Cref{eq:program} with score bound $2\scorebound$ and
  state space $\extconvstatespace$ is a feasible solution to \Cref{eq:program} with bound
  $4\scorebound$ and state space~$\extstatespace$.
\end{lemma}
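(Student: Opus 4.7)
The plan is to exploit two features of any convex-conical utility function $\extutil$ with apex $(\priorMean, 0)$. First, every supporting hyperplane passes through $(\priorMean, 0)$, so for any $\report$ and any subgradient $\sg(\report) \in \partial \extutil(\report)$ we have $\sg(\report) \cdot (\report - \priorMean) = \extutil(\report)$. This simplifies the boundedness slack for any pair $(\state, \report)$ to
\begin{align*}
\extutil(\state) - \extutil(\report) - \sg(\report) \cdot (\state - \report) = \extutil(\state) - \sg(\report) \cdot (\state - \priorMean).
\end{align*}
Second, $\extutil$ is positively homogeneous about $\priorMean$: $\extutil(\priorMean + t(\report - \priorMean)) = t\,\extutil(\report)$ for all $t \geq 0$. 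A short check shows that the pointwise ordering of two hyperplanes through $(\priorMean,0)$ at points of a ray from $\priorMean$ is preserved along the ray, so the same $\sg(\report)$ remains a valid subgradient at every $\priorMean + t(\report - \priorMean)$ with $t > 0$.

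The key geometric step is to prove $\tfrac{1}{2}(\extstatespace - \priorMean) \subseteq \extconvstatespace - \priorMean$. Assume without loss $\priorMeanState_i \geq (a_i + b_i)/2$ for all $i$, and write $c_i = \priorMeanState_i - a_i$ and $d_i = b_i - \priorMeanState_i \leq c_i$. Then $\extstatespace - \priorMean = \prod_i[-c_i, c_i]$ and, setting $K = \statespace - \priorMean = \prod_i [-c_i, d_i]$, we have $\extconvstatespace - \priorMean = \conv(K \cup (-K))$. Given any $y \in \prod_i[-c_i, c_i]$, I will pick coordinatewise $(z_1)_i \in [-c_i, d_i]$ and $(z_2)_i \in [-d_i, c_i]$ with $(z_1)_i + (z_2)_i = y_i$: for $y_i \geq 0$ take $(z_1)_i = \min(d_i, y_i)$ and $(z_2)_i = y_i - (z_1)_i \in [0, c_i]$, and symmetrically for $y_i < 0$. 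Then $z_1 \in K$, $z_2 \in -K$, and $y/2 = \tfrac{1}{2} z_1 + \tfrac{1}{2} z_2 \in \conv(K \cup (-K))$, giving the inclusion.

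With this in hand, fix any $\state, \report \in \extstatespace$ and set $\state' = \priorMean + (\state - \priorMean)/2$ and $\report' = \priorMean + (\report - \priorMean)/2$, so that $\state', \report' \in \extconvstatespace$ by the geometric step. Pick $\sg(\report)$ feasible at $\report$; by the common-subgradient-along-rays property, $\sg(\report') := \sg(\report)$ is also a valid subgradient at $\report'$. Combining the simplified slack with homogeneity,
\begin{align*}
\extutil(\state) - \sg(\report)\cdot(\state - \priorMean) &= 2\,\extutil(\state') - 2\,\sg(\report')\cdot(\state'-\priorMean) \\
&= 2\bigl[\extutil(\state') - \extutil(\report') - \sg(\report')\cdot(\state' - \report')\bigr].
\end{align*}
By the assumed $2\scorebound$-feasibility of $\extutil$ on $\extconvstatespace$, the bracketed quantity is at most $2\scorebound$, so the outer slack is at most $4\scorebound$, as required.

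The main obstacle is the geometric inclusion $\tfrac{1}{2}(\extstatespace - \priorMean) \subseteq \extconvstatespace - \priorMean$: this is where the factor of two in the score bound is paid, and it crucially uses the coordinate-by-coordinate decomposition made possible by rectangularity of $\extstatespace$ together with $d_i \leq c_i$. Once the inclusion is established, convex-conicality (homogeneity plus a common subgradient along each ray from $\priorMean$) collapses the outer slack to exactly twice the inner slack, and the rest is mechanical.
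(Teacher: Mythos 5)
Your proof is correct and takes essentially the same route as the paper's: the boundedness slack of a convex-conical utility function is linear along rays from the apex $(\priorMean,0)$, and $\extstatespace$ is contained in the factor-two scaling of $\extconvstatespace$ about $\priorMean$, which you verify explicitly coordinatewise where the paper appeals to ``simple geometry.'' One small repair: rather than picking an arbitrary subgradient at $\report\in\extstatespace$ and pushing it down to $\report'$, you should take the subgradient at $\report'\in\extconvstatespace$ that witnesses the $2\scorebound$-feasibility and transport it outward along the ray to $\report$ (valid because a supporting hyperplane through the apex supports the graph along the entire ray); as written, feasibility only guarantees that \emph{some} subgradient at $\report'$ meets the bound, not every one (e.g., at the apex or at kinks the choice matters), though this changes nothing else in your computation.
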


\begin{proof}[Proof of \Cref{lem:util function improve}]
  Since the supporting hyperplanes of $\extutil$ are shifted upwards
  relative to $\util$, we have $\extutil(\report) \geq \util(\report)$
  at all $\report \in \reportspace$.  Thus, $\extutil$ obtains at
  least the objective value of $\util$, i.e.,
  $\objfunc(\marginalReport,\extutil) \geq
  \objfunc(\marginalReport,\extutil)$.
\end{proof}

\begin{proof}[Proof of \Cref{lem:expansion_bounded_R}]
  First, the subgradients of $\extutil$ are a subset of the
  subgradients of $\util$ that satisfy the boundedness constraint.
  \Cref{lem:bounded subgradient is close}
  (stated and proved at the end of this subsection)
  shows that the set of subgradients
  $\sgset_{\util}$ of $\util$ that satisfy the boundedness constraint
  is closed.  As $\extutil$ is defined the supremum over these
  hyperplanes, closure of the set implies that the supremum at any
  report $\report \in \reportspace$ is attained on one of these
  hyperplanes.
  
  Now observe that in the construction of $\extutil$, the supporting
  hyperplanes of $\util$ are shifted up by at most $\scorebound$.  The
  boundedness constraint corresponding to state $\priorMean$ and the
  report $\report$ with subgradient $\sg(\report) \in
  \subgradient\util(\report)$ implies that the supporting hyperplane
  corresponding to $\sg(\report)$ at $\report$ has value at least
  $-\scorebound$ at $\priorMean$.  Thus, in the construction of the
  extended utility function $\extutil$, the hyperplane corresponding
  to $\sg(\report)$ is shifted up by at most $\scorebound$ and, at any
  state $\state \in \statespace$, $\extutil(\state) \leq \util(\state)
  + \scorebound$.

  Finlly, the boundedness constraint is the difference between the
  utility at a given state and the value of any supporting hyperplane of
  the utility evaluated at that state.  From $\util$ to $\extutil$ the
  former has increased by at most $\scorebound$ and the latter is no
  smaller; thus, $\extutil$ satisfies the boundedness constraint on
  state space $\statespace$ with bound $2\scorebound$.
\end{proof}

\begin{proof}[Proof of \Cref{lem:extsymstatespace}]
  The lemma follows by the geometries of the boundedness constraint
  and convex cones.  The boundedness constraint requires a bounded
  difference between the utility at any state (in the state space) and
  the value at that state on any supporting hyperplane of the utility
  function (corresponding to any report in the report space).  For convex-conical utility
  functions, the supporting hyperplanes are also supporting
  hyperplanes of the cone defined by the point reflection of the
  utility function around its vertex $(\priorMean,0)$, henceforth, the
  reflected cone.  Thus, the boundedness constraint for convex-conical
  utility function requires that the difference between the original
  cone and the reflected cone be bounded at all states in the state
  space.

  The original space $\statespace$ and the reflected state space
  $\{2\priorMean - \state : \state \in \statespace\}$ are symmetric
  with respect to the original cone and the reflected cone.  Thus, if
  states in the original state space are bounded, by comparing a state
  on the cone to the same state on the reflected cone; then states in
  the reflected state space are bounded by comparing its reflected
  state (in the original state space) on the reflected cone to its
  reflected state on the original cone.

  Thus, if a boundedness constraint holds on $\statespace$ it also
  holds on the reflected state space $\{2\priorMean - \state : \state
  \in \statespace\}$ and their union.
\end{proof}

\begin{proof}[Proof of \Cref{lem:convHullBounded}]
  Consider the cone and reflected cone defined in the proof of
  \Cref{lem:extsymstatespace} and the geometry of the boundedness
  constraint.  Notice that, by convexity of the cone defining the
  utility function $\extutil$ and concavity of the reflected cone, the
  convex combination of the bounds, i.e., the difference of values of
  states on these two cones, of any set of states is at least the bound
  of the convex combination of the states.  Hence, if the boundedness
  constraint holds on state space $\extsymstatespace$, then it holds on
  its convex hull $\extconvstatespace = \conv(\extsymstatespace)$.
\end{proof}

\begin{proof}[Proof of \Cref{lem:extension_bound_RH}]
  Consider any ray from $\priorMean$.  Since the utility $\extutil$ is
  a convex cone, the utility on this ray is a linear function of the
  distance from $\priorMean$.  The same holds for this ray evaluated
  on the point reflection of the utility at $\priorMean$.  The
  difference between these utilities is also linear.  Thus, by the
  geometry of the boundedness constraint for convex-conical utility
  functions, on any ray from $\priorMean$, the bound is linear.
  Considering the state space $\extconvstatespace$ and
  $\extstatespace$, if the former is scaled by a factor of two around
  $\priorMean$, then it contains the latter (by simple geometry, see
  \Cref{fig:convHull}).  Thus, if the convex-conical utility function
  $\extutil$ satisfies bound $2\scorebound$ on state space
  $\extconvstatespace$ it satisfies bound $4\scorebound$ on state
  space $\extstatespace$.
\end{proof}

\begin{lemma}\label{lem:bounded subgradient is close}
For any feasible solution $\util$ for \Cref{eq:program}, the set
$\sgset_\util$ of all subgradients of $\util$ satisfying the bounded
constraints is a closed set.
\end{lemma}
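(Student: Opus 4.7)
The plan is to prove closedness by a direct sequential-limit argument, viewing $\sgset_\util$ as a subset of $\reportspace \times \reals^{\numasm}$ consisting of pairs $(\report,\sg)$ with $\sg \in \partial \util(\report)$ and $\util(\state) - \util(\report) - \sg \cdot (\state - \report) \leq \scorebound$ for every $\state \in \statespace$. Given any convergent sequence $(\report_k,\sg_k) \to (\report_*,\sg_*)$ with $(\report_k,\sg_k) \in \sgset_\util$, the goal is to show $(\report_*,\sg_*) \in \sgset_\util$.

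The first step is to observe that $\reportspace = \conv(\statespace)$ is compact, so $\report_* \in \reportspace$ automatically. The second step is to invoke continuity of $\util$ on all of $\reportspace$ to pass $\util(\report_k) \to \util(\report_*)$; this continuity is provided by \Cref{lem:bounded functions} together with the extension-to-the-boundary argument that appears inside the proof of \Cref{thm:interim utility on the boundary}. With continuity in hand, I would take the limit in the subgradient inequality $\util(\report') \geq \util(\report_k) + \sg_k \cdot (\report' - \report_k)$, which holds for each fixed $\report' \in \reportspace$, to obtain $\util(\report') \geq \util(\report_*) + \sg_* \cdot (\report' - \report_*)$ and conclude $\sg_* \in \partial \util(\report_*)$. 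In parallel, I would take the limit in the boundedness inequality $\util(\state) - \util(\report_k) - \sg_k \cdot (\state - \report_k) \leq \scorebound$ for each fixed $\state \in \statespace$, yielding the same inequality at $(\report_*,\sg_*)$. Combining both limits shows $(\report_*,\sg_*) \in \sgset_\util$.

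The main obstacle is continuity of $\util$ on all of $\reportspace$: a general convex function can fail to be continuous on its boundary $\boundaryReport$, which would obstruct $\util(\report_k) \to \util(\report_*)$ when $\report_* \in \boundaryReport$. This is where the hypothesis that $\util$ arises from a bounded, $\mu$-differentiable, proper scoring rule is used: \Cref{lem:bounded functions} provides uniform boundedness of $\util$ on $\interiorReport$, and the construction in the proof of \Cref{thm:interim utility on the boundary} defines the boundary values of $\util$ precisely as limits along interior-approaching sequences, making $\util$ continuous on the entire report space. The rest is routine continuity of linear functions in $(\report,\sg)$, so passing to the limit in both the subgradient and boundedness inequalities completes the argument.
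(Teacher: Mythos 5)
Your proposal is correct and follows essentially the same route as the paper's proof: take a convergent sequence, use continuity of the bounded convex utility $\util$ on the closed report space, and pass to the limit in both the subgradient inequality and the boundedness inequality for each fixed $\report'$ and $\state$. The only cosmetic difference is that the paper starts from a convergent sequence of subgradient vectors and extracts a convergent subsequence of the associated reports via compactness of $\reportspace$, whereas you assume the report--subgradient pairs converge jointly; since $\reportspace$ is compact this is an immediate reduction.
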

\begin{proof}
By \Cref{thm:interim utility on the boundary}, any feasible solution $\util$ for \Cref{eq:program} is convex, bounded and continuous with bounded subgradients. 
For any convex, bounded and continuous function $\util$, let $\{\sg^k(\report^k)\}_{k=1}^{\infty} \subseteq \sgset_{\util}$ 
be a convergent sequence of subgradients in set $\sgset_{\util}$, 
where $\report^k$ is the report corresponds to the $k^{th}$ subgradient. 
Let $\sg^* = \lim_{k \to \infty}\sg^k(\report^k)$ be the limit of the subgradients. 
Since 
the report space is a closed and bounded space, 
there exists a subsequence of reports
$\{\report^{k_j}\}_{j=1}^{\infty} \subseteq \{\report_k\}_{k=1}^{\infty}$
such that $\{\report^{k_j}\}_{j=1}^{\infty}$ converges. Letting report $\report = \lim_{j \to \infty} \report^{k_j}$,
we have report $\report$ is in the report space,
i.e., $\report \in \reportspace$. 
Moreover, we have $\lim_{j \to \infty}\sg^{k_j}(\report^{k_j}) = \lim_{k \to \infty}\sg^k(\report^k) = \sg^*$.
Next we show that $\sg^*$ is a subgradient for some report $\report \in \reportspace$
such that the bounded constraints of the induced scoring rule are satisfied for any state $\state \in \statespace$,
i.e., $\sg^* \in \sgset_{\util, \report}$.

First for any state $\state$, we have
\begin{align*}
&\util(\report)+\sg^* \cdot(\state-\report)
= \lim_{j\to\infty} [\util(\report^{k_j})+\sg^* \cdot(\state-\report^{k_j})]\\
&= \lim_{j\to\infty} [\util(\report^{k_j})+\sg^{k_j}(\report^{k_j}) \cdot(\state-\report^{k_j})]
\leq \util(\state),
\end{align*}
where the first equality holds because function $\util$ and function $\sg^* \cdot \report$ are continuous and bounded in reports. 
The inequality holds because $\sg^{k_j}(\report^{k_j})$ is a subgradient for report $\report^{k_j}$.
Thus $\sg^*$ is subgradient for report $\report$. 
Next we show that the scoring rule induced by subgradient $\sg^*$ is bounded for report $\report$. 
For any state $\state$, we have
\begin{align*}
\util(\state)-\util(\report)-\sg^* \cdot(\state-\report)
&= \util(\state) - \lim_{j\to\infty} [\util(\report^{k_j})+\sg^{k_j}(\report^{k_j}) \cdot(\state-\report^{k_j})]\\
&\leq \util(\state) - (\util(\state)-\scorebound) = \scorebound,
\end{align*}
where the inequality holds because the subgradient $\sg^{k_j}(\report^{k_j})$ satisfies the bounded constraint for report $\report^{k_j}$ at state $\state$,
i.e., $\sg^{k_j}(\report^{k_j}) \in \sgset_{\util, \report^{k_j}}$
and $\util(\report^{k_j})+\sg^{k_j}(\report^{k_j}) \cdot(\state-\report^{k_j} )\geq \util(\state)-\scorebound$. 
Therefore, $\sg^* \in \sgset_{\util, \report} \subset \sgset_{\util}$,
which implies the set $\sgset_{\util}$ is a closed set. 
\end{proof}

\section{Elicition of Full Distribution}
\label{apx:report full}

\begin{proposition}\label{thm:approx betting}
Letting $\objfunc(c)$ be the objective value of the betting mechanism with parameter~$c$. 
We have that $\objfunc(c)$ is concave in $c$ and 
$\objfunc(\frac{1}{2})\geq\frac{1}{2}\cdot\max_c\objfunc(c)$.
\end{proposition}
\begin{proof}
For any $c_1, c_2$ and $c=\alpha c_1 + (1-\alpha)c_2$ for $\alpha\in [0,1]$, 
let $u_1, u_2$ and $u$ be the utility function of the agent in betting mechanism with parameter $c_1,c_2$ and $c$ respectively. 
By \cref{lem:frac knapsack}, we have 
\begin{align*}
u(\prior)=c=\alpha c_1 + (1-\alpha)c_2
= \alpha u_1(\prior) + (1-\alpha)u_2(\prior).
\end{align*}
Moreover, for any posterior $\posterior$, one feasible choice is to mix the score from betting mechanism with parameter $c_1$ and $c_2$ with probability $\alpha$, 
and hence 
\begin{align*}
u(\posterior)\geq \alpha u_1(\posterior) + (1-\alpha)u_2(\posterior).
\end{align*}
Combining the inequalities, we have 
\begin{align*}
\objfunc(c)&=\int_{\reportspace}\left[ \util(G)-\util(\statedist)\right]
\distoverposterior(G) \,\de G\\
&\geq \alpha\cdot \int_{\reportspace}\left[ \util_1(G)-\util_1(\statedist)\right]
\distoverposterior(G) \,\de G
+ (1-\alpha)\cdot 
\int_{\reportspace}\left[ \util_2(G)-\util_2(\statedist)\right]
\distoverposterior(G) \,\de G \\
&= \alpha\cdot \objfunc(c_1) + (1-\alpha)\cdot \objfunc(c_2),
\end{align*}
and hence the objective function is concave.

Next we show that $\objfunc(\frac{1}{2})$ is approximately optimal using the concavity of the objective value function. 
Let $c^* = \argmax_c \objfunc(c)$. 
If $c^* \leq \frac{1}{2}$, there exists $\alpha\in [\frac{1}{2},1]$ such that 
$\frac{1}{2}=\alpha c^* + (1-\alpha)$, 
and by concavity, we have that 
\begin{align*}
\objfunc\left(\frac{1}{2}\right) \geq \alpha \cdot \objfunc(c^*) + (1-\alpha)\objfunc(1)\geq \alpha \cdot \objfunc(c^*)
\geq \frac{1}{2} \objfunc(c^*).
\end{align*}
Similarly, if $c^* \geq \frac{1}{2}$, there exists $\alpha\in [\frac{1}{2},1]$ such that 
$\frac{1}{2}=\alpha c^*$, 
and by concavity, we have that 
\begin{equation*}
\objfunc\left(\frac{1}{2}\right) \geq \alpha \cdot \objfunc(c^*) + (1-\alpha)\objfunc(0)\geq \alpha \cdot \objfunc(c^*)
\geq \frac{1}{2} \objfunc(c^*). \qedhere
\end{equation*}
\end{proof}

\begin{proof}[Proof of \cref{thm:betting_full}]
Given any feasible scoring rule $\score$, we show that there exists a betting mechanism that achieves objective value weakly higher than $\score$. 

Let $\util$ be the utility function of scoring rule $S$. We consider the betting mechanism with parameter $c = \util(\statedist)$. 
For every posterior $\posterior \in \reportspace$, the agent can achieve at least the same expected utility $\util(\posterior)$ by choosing the score $s(\theta) = \score(\posterior,\theta)$ for each state $\theta$. 
Since the agent chooses the bets optimally, the utility function $\hat{\util}$ of the betting mechanism with parameter $c = \util(\statedist)$ satisfies $\hat{\util}(\posterior) \geq \util(\posterior)$ 
for any report $\posterior \in \reportspace$ and $\hat{\util}(\statedist) = \util(\statedist)$. 
Therefore, the objective value of betting mechanism is weakly higher, 
and the optimal scoring rule must be the revelation version
of the betting mechanism.
\end{proof}

\subsection{Proof of Theorem~\ref{thm:gap between full dist and mean}}\label{sec:appendix-proof-gap between full dist and mean}
\begin{proof}
Consider the following single dimensional problem with state space
$\statespace = \{0, \sfrac 1 2 - \epsilon, \sfrac 1 2 + \epsilon,
1\}$.  The distribution over posteriors is
\begin{enumerate}
\item pointmass distributions at state $0$ and $1$ with probability
  $\sfrac{\epsilon}{2}$ each.
\item pointmass distributions at state $\sfrac 1 2 -\epsilon$ and $\sfrac 1 2 + \epsilon$ with probability $\sfrac{(1-\epsilon)}{2}$ each.
\end{enumerate}
Thus, the prior mean is $\priorMean = \sfrac{1}{2}$ and by
\Cref{c:symmetric-single-dimensional} the optimal scoring rule for
reporting the mean is V-shaped with $\util(0) = \util(1) = \sfrac 1 2$
and $\util(\sfrac 1 2) = 0$.  Utility is linear above and below the
mean with magnitude of its slope equal to 1; thus, $\util(\sfrac 1 2
\pm \epsilon) = \epsilon$.  The expected utility under the above
distribution is
\begin{align*}
  \expect{\report \sim \marginalReport}{\util(\report)}
  &= \tfrac 1 2 \, \epsilon + \epsilon\, (1-\epsilon) \leq \epsilon,
\end{align*}
assuming $\epsilon \leq \sfrac 1 2$.

Consider the following mechanism for reporting the full distribution.
The designer combines the low states as $L = \{0,\sfrac 1 2 -
\epsilon\}$ and the high states as $H = \{\sfrac 1 2 + \epsilon,1\}$
and uses a scoring rule for the indicator variable that the state $\state$ is
high, i.e., the variable is 1 if $\state \in H$ and 0 if $\state \in L$.  Note
that for Bernoulli distributions, reporting the distribution is
equivalent to reporting the mean of the distribution.  The mean of the
posteriors of this indicator variable is $\priorMean = \sfrac 1 2$.
For the indicator on high states, the symmetric V-shaped utility
function of \Cref{c:symmetric-single-dimensional} is optimal. 
Its performance is
\begin{align*}
  \expect{\report \sim \marginalReport}{\util(\indicate{\report \in H})} &= \sfrac 1 2.
\end{align*}

Combining these two analyses, the approximation factor of the optimal
scoring rule for the mean is at least $\sfrac 2 \epsilon$.  As
$\epsilon$ approaches zero, the approximation ratio is unbounded.
\end{proof}

\newpage
\section*{Online Appendix for ``Optimization of Scoring Rules''}
\label{apx:additional}
\setcounter{page}{1}
\section{The Quadratic Scoring Rule and Prior-independent Approximation}
\label{sec:quadratic}

The previous section showed that the optimal single-dimensional
scoring rule depends on the distribution over posteriors and, more
specifically, on the mean of this distribution.  On the other hand,
standard scoring rules in theory and practice, like the quadradic
scoring rule, are prior-independent, i.e., they do not depend on the
principal's prior distribution (over posterior distributions of the
agent), cf.\ \citet*{DRY-15}.  This section focuses on the quadratic
scoring rule.  It gives the characterization in terms of utility of
the quadratic scoring rule for eliciting the mean of a
single-dimensional state.  It analizes the approximation factor of the
quadratic scoring rule with respect to the optimal scoring rule, and
shows that the performance of the former is quadratic in the
performance of the latter.  Specifically, the ratio of performances is
unbounded as the performance of the optimal scoring rule approaches
zero (and such a sequence of prior distributions exists).  Thus, we
conduct the prior-independent analysis on families of priors which
give the same performance of the optimal scoring rule (cf.\ the
``max/max ratio'' of \citealp*{BB-94}).  Within each such family, the
quadratic rule is approximately optimal among all prior-independent
scoring rules.

The following observations will be useful in our analysis of the
quadratic and other prior-independent scoring rules.  First, for
prior-independent analysis, the designer does not know the prior
mean~$\priorMean$ of the distribution.  Therefore, we consider
\Cref{eq:1d_program_u} equivalently with the agent's utility for
reporting the prior mean $\util(\priorMean)$ subtracted from the
objective and without the constraint $\util(\priorMean) = 0$. Second,
in the worst case it is sufficient to only consider posterior
distributions that are uniformly drawn as one of two deterministic
points.  This latter result is formalized in the following lemma.

\begin{lemma}\label{lem:2 point mass}
For any distribution $\marginalReport$ over posterior means, there
exists another distribution $\widetilde{\marginalReport}$ over
posterior means with 2 point masses that satisfies
$\OPT(\widetilde{\marginalReport}) = \OPT(\marginalReport)$ and for
any convex function $\util$, $\objfunc(\util,\widetilde{\marginalReport}) \leq \objfunc(\util,\marginalReport)$.
\end{lemma}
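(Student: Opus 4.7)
The plan is to construct $\widetilde{\marginalReport}$ by collapsing $\marginalReport$ on each side of $\priorMean$ down to its conditional mean. Specifically, I would set $\widetilde{\marginalReport}$ to place mass $q_- = \prob{\report \sim \marginalReport}{\report \leq \priorMean}$ at $r_- = \expect{\report \sim \marginalReport}{\report \given \report \leq \priorMean}$, and mass $q_+ = \prob{\report \sim \marginalReport}{\report > \priorMean}$ at $r_+ = \expect{\report \sim \marginalReport}{\report \given \report > \priorMean}$. (The degenerate case $q_+ = 0$ forces $\marginalReport$ to be a point mass at $\priorMean$, making $\OPT(\marginalReport)$ and $\objfunc(\util,\marginalReport)$ both zero, so the statement is vacuous.) By the law of total expectation, $q_- r_- + q_+ r_+ = \expect{\report \sim \marginalReport}{\report} = \priorMean$, so $\widetilde{\marginalReport}$ shares its prior mean with $\marginalReport$.

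To check that $\OPT$ is preserved, I would invoke \Cref{thm:1d_opt}, which states that $\OPT(\marginalReport) = \expect{\report \sim \marginalReport}{\maxzero{\report - \priorMean}}/\OPTdenom$; hence $\OPT$ depends on $\marginalReport$ only through $\priorMean$ and $\alpha \coloneqq \expect{\report \sim \marginalReport}{\maxzero{\report - \priorMean}}$. Under the two-point construction, the corresponding quantity is $q_+(r_+ - \priorMean) = \expect{\report \sim \marginalReport}{(\report - \priorMean)\indicate{\report > \priorMean}} = \alpha$, so the two optimal values coincide.

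For the remaining inequality, I would apply Jensen's inequality separately on each side of the split: for any convex $\util$, $\util(r_-) \leq \expect{\report \sim \marginalReport}{\util(\report) \given \report \leq \priorMean}$ and $\util(r_+) \leq \expect{\report \sim \marginalReport}{\util(\report) \given \report > \priorMean}$. Weighting by $q_-$ and $q_+$ and summing yields $\int_\reportspace \util(\report)\,\widetilde{\marginalReport}(\report)\,\de\report \leq \int_\reportspace \util(\report)\,\marginalReport(\report)\,\de\report$, and subtracting $\util(\priorMean)$ from both sides gives the desired $\objfunc(\util, \widetilde{\marginalReport}) \leq \objfunc(\util, \marginalReport)$. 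I do not anticipate any substantive obstacle; the only real insight is to identify $\priorMean$ as the splitting point, after which mean preservation, preservation of $\alpha$, and the convex-order domination all drop out of the law of total expectation and Jensen's inequality.
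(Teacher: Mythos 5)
Your construction is exactly the paper's: collapse the mass on each side of $\priorMean$ to its conditional mean, use \Cref{thm:1d_opt} to see that $\OPT$ depends only on $\priorMean$ and $\expect{\report\sim\marginalReport}{\maxzero{\report-\priorMean}}$ (both preserved), and apply Jensen's inequality on each side for the objective domination. Your write-up simply makes explicit the verifications the paper leaves as "easy to verify," plus the harmless degenerate case, so it is correct and essentially identical in approach.
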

\begin{proof}
For any distribution $\marginalReport$ with prior mean $\priorMean$, 
let $\widetilde{\marginalReport}$ be the distribution that has
\begin{itemize}
\item a point mass at $\expect{\marginalReport}{\report' | \report' < \priorMean}$
with probability $\prob{\marginalReport}{\report' < \priorMean}$;
\item a point mass at $\expect{\marginalReport}{\report' | \report' \geq \priorMean}$
with probability $\prob{\marginalReport}{\report' \geq \priorMean}$.
\end{itemize}
By \Cref{thm:1d_opt}, it is easy to verify that the optimal does not
change, i.e., $\OPT(\marginalReport) =
\OPT(\widetilde{\marginalReport})$, and for any convex $\util$, by
Jensen's Inequality, we have $\objfunc(\util,\widetilde{\marginalReport}) \leq \objfunc(\util,\marginalReport)$.
\end{proof}

The quadratic scoring rule that is the focus of this section is
defined as follows.

\begin{definition}\label{def:quadratic}
  The $[0,1]$-bounded {\em quadratic scoring rule} for eliciting the
  mean with state and report spaces $\statespace = \reportspace =
  [0,1]$ is $\quadra(\report,\state) = 1- (\state-\report)^2$.  For
  functions $\utilquad(\report) = \report^2$ and $\kappa_q(\state) =
  1-\state^2$ the quadratic scoring rule is $\quadra(\report,\state) =
  \utilquad(\report) +\utilquad'(\report) \cdot
  (\state-\report)+\kappa_q(\state).$
\end{definition}
\Cref{lem:2 point mass} enables the identification of the worst-case
performance the quadratic scoring rule.  
Moreover, note that it is easy to verify that 
the optimal objective value is at most
$\sfrac{1}{2}$, i.e., $\OPT(\marginalReport) \in (0,\sfrac{1}{2}]$.

\begin{lemma}\label{lem:maxmin quadratic}
Let $\distset$ be the set of distributions 
such that the objective value of the optimal scoring rule is $c\in (0,\sfrac 1 2]$, 
i.e., $\OPT(\marginalReport) = c$ for any $\marginalReport\in \distset$. 
We have that for utility function $\utilquad$ of quadratic scoring rule,  
\begin{equation*}
\min_{\marginalReport\in \distset} \objfunc(\utilquad,\marginalReport)
= c^2.
\end{equation*}
\end{lemma}
\begin{proof}
Suppose the distribution over report $\marginalReport(\report)$ has
two point masses, which is $a$ with probability $p$, and $b > a$ with
probability $1-p$.  Then, we have the mean of prior is $\priorMean =
pa+(1-p)b$ and $a<\priorMean < b$. Without loss of generality, we can
assume that $\priorMean \leq \frac{1}{2}$. By~\Cref{thm:1d_opt}, it
holds that
\begin{align}
  \label{eq:quadratic-opt}
c &= \OPT(\marginalReport) = \frac{1}{\max\{\priorMean,1-\priorMean\}} \cdot (1-p)(b-\priorMean)
= \frac{p(1-p)(b-a)}{\max\{\priorMean,1-\priorMean\}}.\\
\intertext{For quadratic scoring rule with utility function $\utilquad(\report) = \report^2$ (\Cref{def:quadratic}), we have} 
\label{eq:quadratic-obj}
\objfunc(\utilquad,\marginalReport) 
&= \expect{\report\sim\marginalReport}{\utilquad(\report)} - \utilquad(\priorMean)
= p(a^2-\priorMean^2)+(1-p)(b^2-\priorMean^2)
= p(1-p)(b-a)^2.\\
\intertext{Combining equations~\eqref{eq:quadratic-opt} and~\eqref{eq:quadratic-obj}, we have}
\notag
\objfunc(\utilquad,\marginalReport) &= (\max\{\priorMean,1-\priorMean\})^2
\cdot \frac{c^2}{p(1-p)}.
\end{align}
The worst case ratio is achieved when $\objfunc(\utilquad,\marginalReport)$
is minimized, i.e., $\priorMean = \frac{1}{2}$
and $p = \frac{1}{2}$, 
which gives $\min_{\marginalReport \in \distset}\objfunc(\utilquad,\marginalReport) = c^2.$
\end{proof}

As is evident from the proof of \Cref{lem:maxmin quadratic}, for any $c
\in (0,\sfrac 1 2]$ there is a non-trivial family of distributions
  $\distset$ for which $\OPT(\marginalReport) = \objval$.  
Since the worst-case
  performance of the quadratic scoring rule on $\distset$ is $\min_{\marginalReport
    \in \distset} \objfunc(\utilquad,\marginalReport) = \objval^2$, the
  prior-independent approximation factor of the quadratic scoring rule
  is unbounded.
  In fact, as we show next, this result is not a limitation of the
  quadratic scoring rule.  For the family of distributions $\distset$,
  any prior-independent scoring rule can at most guarantee a
  worst-case objective value of $O(\objval^2)$.  Thus, the quadratic
  rule is within a constant factor of the prior-independent optimal
  rule. We defer the proof of \Cref{thm:lower bound for pi} to \Cref{sec:appendix-proof-lower bound for pi}.

\begin{lemma}\label{thm:lower bound for pi}
Let $\distset$ be the set of distributions over posterior means such
that the objective value of the optimal scoring rule is $\objval\in
(0,\sfrac{1}{2}]$, i.e., $\OPT(\marginalReport) = \objval$ for any
  $\marginalReport\in \distset$.  For any convex and bounded utility
  function $\util$, we have
\begin{equation*}
  \min_{\marginalReport\in \distset} \objfunc(\util,\marginalReport)
  \leq \min(\tfrac 1 2,\tfrac{8\objval^2}{(1-4\objval)^2}) \leq 32\objval^2.
\end{equation*}
\end{lemma}

Combining \Cref{lem:maxmin quadratic} with \Cref{thm:lower bound for pi},
the quadratic scoring rule approximates any prior-independent scoring rule in terms of worst case payoff. 

\begin{theorem}\label{thm:pi}
For any constant $\objval \in (0, \sfrac{1}{2}]$, let $\distset$ be the set of distributions 
such that the objective value of the optimal scoring rule is $c$, i.e., $\OPT(\marginalReport) = c$ for any $\marginalReport\in \distset$. 
Let $\mathcal{U}$ be the set of convex and bounded utility functions $\util$. 
For quadratic utility function $\utilquad$,
we have
\begin{equation*}
\min_{\marginalReport\in \distset} \objfunc(\utilquad,\marginalReport)
\geq \frac{1}{32} \max_{\util \in {\mathcal{U}}}\ \min_{\marginalReport\in \distset} \objfunc(\util,\marginalReport).
\end{equation*}
\end{theorem}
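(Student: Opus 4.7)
The plan is to derive the result as a direct consequence of the two preceding theorems, \Cref{lem:maxmin quadratic} and \Cref{thm:lower bound for pi}. Since both theorems are stated for the same family $\distset$ and the quadratic utility function is a particular convex bounded utility, the approximation statement will follow from a two-line inequality chain.

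First I would invoke \Cref{lem:maxmin quadratic} to evaluate the left-hand side exactly: for the quadratic utility function $\utilquad$, we have
\[
\min_{\marginalReport\in \distset} \objfunc(\utilquad,\marginalReport) = \objval^2.
\]
This handles the ``performance of the quadratic rule'' side with equality rather than a bound.

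Next I would use \Cref{thm:lower bound for pi} on the right-hand side. That theorem states that for every convex and bounded utility function $\util \in \mathcal{U}$, the worst-case performance on $\distset$ is at most $32 \objval^2$. Taking the maximum over $\util \in \mathcal{U}$ preserves this bound, so
\[
\max_{\util \in \mathcal{U}}\ \min_{\marginalReport\in \distset} \objfunc(\util,\marginalReport) \leq 32\objval^2.
\]
Dividing by $32$ gives an upper bound of $\objval^2$ on the right-hand side of the claimed inequality.

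Combining the two displays, the left-hand side equals $\objval^2$ and the right-hand side is at most $\objval^2$, which establishes the theorem. There is no real obstacle here: all the work is carried by \Cref{lem:maxmin quadratic} and \Cref{thm:lower bound for pi}, and the only thing to check is that $\utilquad$ itself is a convex and bounded utility function (so that it lies in $\mathcal{U}$ and the comparison between achieved and best-achievable worst-case payoffs is meaningful), which is immediate from \Cref{def:quadratic}.
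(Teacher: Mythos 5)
Your proposal is correct and follows exactly the paper's own route: the paper obtains \Cref{thm:pi} by combining \Cref{lem:maxmin quadratic} (which gives $\min_{\marginalReport\in \distset} \objfunc(\utilquad,\marginalReport) = \objval^2$) with \Cref{thm:lower bound for pi} (which bounds every convex bounded $\util$ by $32\objval^2$ in the worst case). Nothing is missing; the observation that $\utilquad$ is itself convex and bounded is the only side condition, as you note.
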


Note that in \Cref{thm:pi}, the 
quadratic scoring rule does not exploit the extra information that 
$\OPT(\marginalReport) = c$
and still achieves a constant approximation to the optimal max-min scoring rule 
in worst case. 

Although the quadratic scoring rule is approximately max-min optimal, 
the approximation ratio between the quadratic scoring rule and the optimal scoring rule can still grow unboundedly as the optimal objective value $\OPT(\marginalReport)$ vanishes to zero.
In the following theorem, 
we will show that for any fixed distribution over posterior mean with variance $\sigma^2$, the performance of the quadratic scoring rule is an approximation of the optimal solution within a factor of the standard deviation~$\sigma$. 
That is, the quadratic scoring rule is approximately optimal when the distribution over posterior mean is sufficiently disperse.

\begin{proposition}\label{thm:quadratic-var-disperse}
For any $\sigma \in [0,1]$, any distribution over posterior mean $\marginalReport$ with variance $\sigma^2$, 
we~have 
\begin{align*}
\objfunc(\utilquad,\marginalReport) \geq \sigma \cdot \OPT(\marginalReport).
\end{align*}
\end{proposition}
\begin{proof}
By Theorem~\ref{thm:1d_opt}, there is an optimal utility function that is V-shaped at $\mu_f$ with parameters $\abs{a},\abs{b} \leq 1$.  Thus, we have
$$
\OPT(\marginalReport) = \int_{0}^{\priorMean} a(\report - \priorMean) 
\marginalReport(\report) \text{ d}\report
+ \int_{\priorMean}^{1} b(\report - \priorMean) 
\marginalReport(\report) \text{ d}\report
 \leq \expect{r \sim \marginalReport}{\abs{r-\priorMean}}.
$$
By Definition~\ref{def:quadratic}, the objective value of the quadratic scoring rule is 
$$
\objfunc(\utilquad,\marginalReport) = \expect{r\sim \marginalReport}{\utilquad(r) - \utilquad(\priorMean)} = \expect{r\sim \marginalReport}{(r-\priorMean)^2}.
$$
By Jensen's inequality, we have
$$
\expect{r \sim f}{\abs{r-\mu_f}} = \expect{r \sim f}{\sqrt{(r-\mu_f)^2}} \leq \sqrt{\expect{r \sim f}{(r-\mu_f)^2}} = \frac{\expect{r \sim f}{(r-\mu_f)^2}}{\sigma},
$$
where the last equality is due to $\expect{r \sim f}{(r-\mu_f)^2} = \sigma^2$.
\end{proof}

\subsection{Proof of Lemma~\ref{thm:lower bound for pi}}\label{sec:appendix-proof-lower bound for pi}
To simplify the proof of \Cref{thm:lower bound for pi}, we define the
benchmark $\OPTUB$ as an approximate upper-bound on $\OPT$:
\begin{align*}
  \OPTUB(\marginalReport) = 2\OPTdenom\,\OPT(\marginalReport) = 2 \expect{\report \sim
    \marginalReport}{\maxzero{\report - \priorMean}}.
\end{align*}
Notice that $\OPTdenom \in [\sfrac 1 2,1]$; thus, $\OPT(\priorMean)
\leq \OPTUB(\priorMean) \leq 2\,\OPT(\priorMean)$.  Thus,
approximation of benchmark $\OPTUB$ is equivalent to approximation of
$\OPT$ up to a factor of two.  \Cref{thm:lower bound for pi} is
obtained from \Cref{lem:lower bound for pi} and the bound of $\objval
\leq \ubc \leq 2\,\objval$.

\begin{lemma}\label{lem:lower bound for pi}
Let $\distsetUB$ be the set of distributions over posterior means such
that benchmark $\OPTUB$ is $\ubc\in(0,\sfrac{1}{2}]$.  For any convex and
  bounded utility function $\util$, we have
\begin{equation*}
\min_{\marginalReport\in \distset} \objfunc(\util,\marginalReport)
\leq \min(\tfrac 1 2,\tfrac{2\ubc^2}{(1-2\ubc)^2}) \leq 8\ubc^2.
\end{equation*}
\end{lemma}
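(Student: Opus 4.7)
My plan is to exhibit, for each convex bounded $\util$, a single two-point distribution lying in $\distsetUB$ on which $\objfunc(\util,\cdot)$ is small; this upper-bounds the minimum. The $\tfrac 1 2$ bound is essentially free: since $\objfunc(\util,\marginalReport)$ is invariant under adding a constant to $\util$, I may replace $\util$ by $\util - \util(\priorMean)$, which remains convex and still generates a bounded scoring rule, and hence is feasible for \Cref{eq:1d_program_u}. \Cref{cor:upper bound opt} then gives $\objfunc(\util,\marginalReport) \leq \tfrac 1 2$ for every $\marginalReport$, so also after minimizing over $\distsetUB$.

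For the main bound $\tfrac{2\ubc^2}{(1-2\ubc)^2}$, I first record a Jensen-gap inequality. For a two-point distribution on $\{a,b\}$ with weights $(p,1-p)$, the subgradient inequalities $\util(b)-\util(\priorMean) \leq \sg(b)\cdot p(b-a)$ and $\util(\priorMean)-\util(a) \geq \sg(a)\cdot(1-p)(b-a)$, together with the identity $(1-p)(b-\priorMean) = p(\priorMean-a) = p(1-p)(b-a)$, yield
\[
\objfunc(\util,\marginalReport) \;\leq\; p(1-p)(b-a)\bigl(\sg(b)-\sg(a)\bigr) \;=\; \tfrac{\ubc}{2}\bigl(\sg(b)-\sg(a)\bigr),
\]
where the last step substitutes $\OPTUB(\marginalReport) = 2p(1-p)(b-a) = \ubc$. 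The task is now to find a subinterval $[a,b]$ of width at least $2\ubc$ (so that the feasibility requirement $p(1-p)\leq\tfrac 1 4$ can be met) on which the subgradient increment is small.

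For $\ubc \leq \tfrac 1 4$, take $k = \lfloor 1/(2\ubc)\rfloor \geq 2$ and partition $[0,1]$ into $k$ equal subintervals of width $\tfrac 1 k$. By \Cref{lem:bounded for single}, the ex-post bound forces $\sg(1)-\sg(0) \leq 2$; choosing one-sided subgradients so that the $k$ increments telescope to $\sg(1)-\sg(0)$, pigeonhole supplies some subinterval $[a^*, a^*+\tfrac 1 k]$ with $\sg(a^*+\tfrac 1 k)-\sg(a^*) \leq \tfrac 2 k$. I place on it a two-point distribution with $p(1-p) = \tfrac{\ubc k}{2} \leq \tfrac 1 4$, which gives $\OPTUB = \ubc$ and
\[
\objfunc(\util,\marginalReport) \;\leq\; \tfrac{\ubc}{2}\cdot\tfrac{2}{k} \;=\; \tfrac{\ubc}{k} \;\leq\; \tfrac{2\ubc^2}{1-2\ubc} \;\leq\; \tfrac{2\ubc^2}{(1-2\ubc)^2},
\]
where the penultimate step uses $k \geq \tfrac{1}{2\ubc}-1 = \tfrac{1-2\ubc}{2\ubc}$ and the last uses $1-2\ubc\leq 1$. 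For $\ubc > \tfrac 1 4$, the ratio $\tfrac{2\ubc^2}{(1-2\ubc)^2}$ exceeds $\tfrac 1 2$, so the stated minimum equals $\tfrac 1 2$ and the first paragraph already handles it. The relaxation $\min(\tfrac 1 2, \tfrac{2\ubc^2}{(1-2\ubc)^2}) \leq 8\ubc^2$ is immediate from $(1-2\ubc)^2 \geq \tfrac 1 4$ on $(0,\tfrac 1 4]$ and $\tfrac 1 2 \leq 8\ubc^2$ on $(\tfrac 1 4,\tfrac 1 2]$.

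The main conceptual step is recognizing that ex-post boundedness caps the total subgradient variation $\sg(1)-\sg(0)$ of $\util$ at $2$; this finite ``curvature budget'', combined with pigeonhole over $\Theta(1/\ubc)$ subintervals of width $\Theta(\ubc)$, always exhibits a nearly affine window on which the Jensen gap of any convex $\util$ is $O(\ubc^2)$, matching the target order.
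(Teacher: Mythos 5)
Your proof is correct and takes essentially the same route as the paper's: both use the ex-post score bound (via \Cref{lem:bounded for single}) to cap the total subgradient increase at $2$, pigeonhole over roughly $1/(2\ubc)$ equal subintervals to find a nearly affine window, and place a two-point distribution with $\OPTUB=\ubc$ on that window so its Jensen gap is $O(\ubc^2)$, with the trivial $\sfrac{1}{2}$ bound covering large $\ubc$. The only difference is in bookkeeping for non-grid values of $\ubc$: the paper interpolates between partition sizes $\numasm$... (i.e.\ $d$ and $d+1$), while you fix $k=\lfloor 1/(2\ubc)\rfloor$ and tune the asymmetric weight $p$ so that $\OPTUB=\ubc$ exactly, which even yields the slightly stronger bound $2\ubc^2/(1-2\ubc)$ implying the stated one.
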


\newcommand{\numpart}{d}
\newcommand{\widthpart}{\delta}

\begin{proof}
  A convex and bounded utility function $\util$ has monotone
  derivative $\util'$ and, by \Cref{lem:bounded for single}, the
  amount this derivative increases on its $[0,1]$ domain is $\util'(1)
  - \util'(0)$ bounded by $2$. Consider any positive integer
  $\numpart$ and partition the $[0,1]$ domain of $\util$ 
  into $\numpart$ intervals of width $\sfrac 1 \numpart$.  By the
  pigeon hole principle, one part must contain at most the average
  increase of $\util'$, i.e., there exists interval $[a,b = a + \sfrac 1
    \numpart]$ with $\util'(b) - \util'(a) \leq \sfrac 2 \numpart$.

  \newcommand{\mrd}[1][\numpart]{\marginalReport_{#1}}
  \newcommand{\pmd}[1][\numpart]{\mean{\numpart}}

  Consider distribution $\mrd$ defined as the uniform
  distribution over deterministic points $a$ and $b$ with mean $\pmd = a + \sfrac 1 {2\numpart}$.  By the definition of benchmark $\OPTUB$:
  \begin{align*}
    \OPTUB(\mrd)
    &= 
        2\expect{\report \sim \mrd}{\maxzero{\report - \pmd}}
        = \tfrac 1 {2\numpart}.
  \end{align*}
Calculating the objective value of utility function $\util$, we have 
\begin{align*}
\objfunc(\util,\mrd) 
&= \frac{\util(a) + \util(b)}{2} - \util(\pmd)
\leq \frac{\util'(b)-\util'(a)}{2}\cdot \frac{b-a}{2}
 = \frac 1 {2\numpart^2},
\end{align*}
where the inequality follows from identifying an optimal utility $\util$
satisfying $\util'(b) - \util'(a) \leq \sfrac 2 \numpart$.  It is $\util'(\report) = \sfrac{-1} \numpart$ for $\report
\in [a,\pmd)$ and $\util'(\report) = \sfrac 1 \numpart$ for $\report \in
    (\pmd,b]$.
Combining the two bounds with $\OPTUB(\mrd) = \ubc$ we see that
$\objfunc(\util,\mrd) \leq 2\,\ubc^2$ for $\ubc \in \{ \sfrac 1
{2\numpart} : \numpart \in \{1,\ldots\}\}$.

To extend this bound to all $\ubc \in [0,\sfrac 1 2]$, observe that
the bound on $\objfunc(\util,\mrd)$ easily extends to
$\objfunc(\util,\mrd[\numpart'])$ for non-integral $\numpart' \geq
\numpart$, while the value of $\OPTUB(\mrd[\numpart'])$ holds as
calculated for non-integral $\numpart'$.  Thus, we can obtain bounds
for non-integral $\numpart'$ by combining bounds on
$\OPTUB(\mrd[\numpart + 1])$ and $\objfunc(\util,\mrd[\numpart])$.
Solving for the bound on $\objfunc(\util,\mrd[\numpart])$ in terms of
$\ubc = \OPTUB(\mrd[\numpart + 1])$: for any $\ubc \in (0,\sfrac 1 2]$
  there exists $\marginalReport \in \distsetUB$ with
  $\objfunc(\util,\marginalReport) \leq \min(\frac 1 2,\frac{2\ubc^2}{(1 - 2
    \ubc)^2}) \leq 8\, \ubc^2$.  The first inequality holds by substituting $d=\sfrac{1}{2\ubc}-1$ into the formula of $\objfunc(\util,\mrd)$,
  the second inequality uses $\objfunc(\util,\marginalReport) \leq
  \sfrac 1 2$ and notes that the bound of the first inequality is
  trivial until $\ubc \leq \sfrac 1 4$, and thereafter the denominator
  is lower bounded by $\sfrac 1 4$.
\end{proof}

\section{Robustness to Distributional Knowledge}
\label{sec:robust}
By \cref{thm:approx_max_over_separate}, 
the optimal \mos scoring rule is approximately optimal for multi-dimensional problems 
and is exactly optimal in the degenerated single-dimensional problems, 
and to implement such a scoring rule, 
it is sufficient to know the prior mean of the distribution. 
In this section, we show that 
we can even relax the assumption of exact knowledge of the prior mean, 
and show that the designer can approximately attain the performance of the optimal \mos scoring rule
by having an estimate of the prior mean.
To simplify the presentation, 
we will focus on the state space $\statespace=\bigtimes_{i=1}^{\numasm}[0,1]$
and score bound $\scorebound = 1$. 
The results can be directly extended to general rectangular state spaces
and any score bound $\scorebound > 0$. 
\begin{theorem}\label{thm:infty norm error}
For any $\epsilon>0$,
any distribution $\marginalReport$ with prior mean $\priorMeanState$ 
in state space $\statespace=\bigtimes_{i=1}^{\numasm}[0,1]$, 
for any $\mu$ such that $\inftynorm{\mu-\priorMeanState}\leq \epsilon$, 
the incentive for effort of the V-shaped scoring rule for $\mu$ is at least that of the V-shaped scoring rule for $\mu_D$ less $3\epsilon$.
\end{theorem}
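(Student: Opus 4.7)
The plan is to write down the V-shaped scoring rule centered at an arbitrary $\mu \in [0,1]^\numasm$ explicitly and then control the incentive difference via two Lipschitz estimates. Extending the state space in each coordinate to be symmetric about $\mu_i$, i.e., using the per-dimension interval $[\mu_i - r_i, \mu_i + r_i]$ with $r_i = \max(\mu_i, 1 - \mu_i) \geq \sfrac{1}{2}$, and applying \cref{lem:v shape is max over separate} to this extended space, the utility function of the V-shaped \mos scoring rule centered at $\mu$ is
\[
\util_\mu(\report) \;=\; \max_{i \in [\numasm]} \frac{|\report_i - \mu_i|}{2\max(\mu_i,\, 1 - \mu_i)}.
\]
By construction this scoring rule is proper and bounded in $[0,1]$ on $[0,1]^\numasm$, and $\util_\mu(\mu) = 0$. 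The goal is then to bound $\objfunc(\util_{\priorMeanState}, \marginalReport) - \objfunc(\util_\mu, \marginalReport)$ above by $3\epsilon$.

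Two Lipschitz estimates will drive the argument. First, because $1/(2\max(\mu_i, 1-\mu_i)) \leq 1$, each coordinate summand of $\util_\mu$ is $1$-Lipschitz in $\report_i$, so $\util_\mu$ is $1$-Lipschitz in $\report$ under $\inftynorm{\cdot}$. Second, and more importantly, for every fixed $x \in [0,1]$ the single-coordinate map $g(x,m) \coloneqq |x - m|/(2\max(m, 1-m))$ is $2$-Lipschitz in $m$ on $[0,1]$. I expect verifying this to be the main technical step. The plan is a four-way case analysis on the signs of $m - \sfrac{1}{2}$ and of $x - m$: in each smooth region a direct computation gives $|\partial g/\partial m|$ equal to either $x/(2m^2)$ or $(1-x)/(2(1-m)^2)$, both bounded by $2$ since $\min(m^2, (1-m)^2) \geq \sfrac{1}{4}$ on $[0,1]$. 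Continuity of $g$ at the kink $m=\sfrac{1}{2}$ then upgrades the piecewise bound to a global one. Combined with the elementary inequality $|\max_i a_i - \max_i b_i| \leq \max_i |a_i - b_i|$, this yields $|\util_\mu(\report) - \util_{\priorMeanState}(\report)| \leq 2 \inftynorm{\mu - \priorMeanState} \leq 2\epsilon$ for every $\report \in [0,1]^\numasm$.

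To assemble the pieces, I will decompose
\[
\objfunc(\util_\mu, \marginalReport) - \objfunc(\util_{\priorMeanState}, \marginalReport) \;=\; \expect{\report \sim \marginalReport}{\util_\mu(\report) - \util_{\priorMeanState}(\report)} \;-\; \bigl(\util_\mu(\priorMeanState) - \util_{\priorMeanState}(\priorMeanState)\bigr).
\]
The expectation is at least $-2\epsilon$ by the pointwise estimate above. For the boundary term, $\util_{\priorMeanState}(\priorMeanState) = 0$ by construction, and the $\ell_\infty$-Lipschitz-in-$\report$ property together with $\util_\mu(\mu) = 0$ and $\util_\mu \geq 0$ give $0 \leq \util_\mu(\priorMeanState) \leq \inftynorm{\priorMeanState - \mu} \leq \epsilon$. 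Summing the two contributions yields $\objfunc(\util_\mu, \marginalReport) \geq \objfunc(\util_{\priorMeanState}, \marginalReport) - 2\epsilon - \epsilon$, which is the theorem.
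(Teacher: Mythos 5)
Your proposal is correct and follows essentially the paper's own argument: the same explicit formula $\util_\mu(\report)=\max_i \abs{\report_i-\mu_i}/(2\max(\mu_i,1-\mu_i))$, the same decomposition of the incentive difference into the expected pointwise utility difference (bounded by $2\epsilon$ per dimension) plus the term $\util_\mu(\priorMeanState)\leq\epsilon$, with the only variation being that you certify the per-dimension $2\epsilon$ bound by a Lipschitz/derivative computation in $m$ where the paper uses a short chain of triangle-type inequalities. One phrasing slip to fix: the justification ``$\min(m^2,(1-m)^2)\geq \sfrac{1}{4}$ on $[0,1]$'' is false as stated; what your case analysis actually uses (and what is true) is that in each smooth region the active denominator is the larger of $m,1-m$, so the relevant squared term satisfies $\max(m,1-m)^2\geq\sfrac{1}{4}$, which gives the claimed bound of $2$ on $\abs{\partial g/\partial m}$.
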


\begin{proof}
Note that by definition, 
it is easy to verify that the utility function $\util_{\priorMeanState}$ satisfies
\begin{align*}
\util_{\priorMeanState}(\report) = \max_{i} \frac{1}{2\max\{\Pmeani, 1-\Pmeani\}} \abs{\report_i-\Pmeani}
\end{align*}
and hence 
\begin{align*}
\objfunc(\util_{\priorMeanState},\marginalReport)
= \expect{\report\sim\marginalReport}{\max_{i} \frac{1}{2\max\{\Pmeani, 1-\Pmeani\}} \abs{\report_i-\Pmeani}}.
\end{align*}
Moreover, we have 
\begin{align*}
&\objfunc(\util_{\mu},\marginalReport) - \objfunc(\util_{\priorMeanState},\marginalReport)\\
&= \expect{\report\sim\marginalReport}{\max_{i} \frac{\abs{\report_i-\mu_i}}{2\max\{\mu_i, 1-\mu_i\}} 
- \max_{i} \frac{\abs{\report_i-\Pmeani}}{2\max\{\Pmeani, 1-\Pmeani\}} } - \util_{\mu}(\priorMeanState)\\
&\geq -3\epsilon,
\end{align*}
which implies that the incentive for effort of the V-shaped scoring rule for $\mu$ is 
at least that of the V-shaped scoring rule for $\mu_D$ less $3\epsilon$,
and the theorem holds.
Note that the last inequality holds because
\begin{align*}
\util_{\mu}(\priorMeanState) = \max_{i} \frac{\abs{\Pmeani-\mu_i}}{2\max\{\mu_i, 1-\mu_i\}} 
\leq \max_{i} \abs{\Pmeani-\mu_i} \leq \epsilon
\end{align*}
and for any dimension $i\in[\numasm]$,
\begin{align*}
&\frac{1}{2\max\{\Pmeani, 1-\Pmeani\}} \abs{\report_i-\Pmeani} 
\leq \frac{1}{2\max\{\Pmeani, 1-\Pmeani\}} (\abs{\report_i-\mu_i} +\epsilon) \\
\leq\,& \frac{1}{2\max\{\Pmeani, 1-\Pmeani\}} \abs{\report_i-\mu_i} +\epsilon
\leq \frac{1}{2\max\{\mu_i, 1-\mu_i\}} \abs{\report_i-\mu_i} +2\epsilon.\qedhere
\end{align*}
\end{proof}

Note that in the following theorem we show that the prior mean can be estimated efficiently using samples. 
\begin{theorem}\label{thm:sample}
For any $\epsilon>0, \delta>0$,
any distribution $\marginalReport$ with prior mean $\priorMeanState$ 
in state space $\statespace=\bigtimes_{i=1}^{\numasm}[0,1]$, 
letting $\mu$ be the empirical mean with $\frac{1}{\epsilon^2}\cdot \log \frac{n}{\delta}$ samples, 
with probability at least $1-\delta$, 
we have $\inftynorm{\mu-\priorMeanState} \leq \epsilon$.
\end{theorem}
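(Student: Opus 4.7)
The plan is a direct application of Hoeffding's inequality combined with a union bound over coordinates. The key observation is that the state space $[0,1]^\numasm$ means each marginal of a sample is a random variable bounded in $[0,1]$, which is precisely the setting where Hoeffding's inequality gives a sharp concentration bound.

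First, I would fix any single dimension $i \in [\numasm]$. Let $\state^{(1)}, \dots, \state^{(m)}$ be the $m$ i.i.d.\ samples from $\statedist$, and let $\mu_i = \frac{1}{m} \sum_{j=1}^m \state_i^{(j)}$ denote the empirical mean in coordinate $i$. Since $\state_i^{(j)} \in [0,1]$ for all $j$, Hoeffding's inequality yields
\begin{align*}
\prob{}{\abs{\mu_i - \Pmeani} \geq \epsilon} \leq 2 \exp(-2 m \epsilon^2).
\end{align*}

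Next, I would apply a union bound over all $\numasm$ coordinates to control $\inftynorm{\mu - \priorMeanState} = \max_i \abs{\mu_i - \Pmeani}$:
\begin{align*}
\prob{}{\inftynorm{\mu - \priorMeanState} \geq \epsilon} \leq 2\numasm \exp(-2 m \epsilon^2).
\end{align*}

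Finally, plugging in $m = \frac{1}{\epsilon^2} \log \frac{\numasm}{\delta}$ gives $2\numasm \exp(-2 \log(\numasm/\delta)) = 2\numasm \cdot (\delta/\numasm)^2 = 2\delta^2/\numasm \leq \delta$ for $\delta \leq 1/2$ and $\numasm \geq 1$ (and the constant factors in the sample size can absorb the factor of $2$ if one prefers a cleaner bound). Hence with probability at least $1-\delta$, $\inftynorm{\mu - \priorMeanState} \leq \epsilon$, completing the proof. There is no real obstacle here; the only minor subtlety is matching the constant in the sample complexity expression, which can be handled either by tightening to $\frac{1}{2\epsilon^2} \log \frac{2\numasm}{\delta}$ samples or by noting that the stated bound is loose enough to swallow the constants.
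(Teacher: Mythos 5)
Your proposal is correct and follows essentially the same route as the paper: a per-coordinate Chernoff--Hoeffding bound followed by a union bound over the $\numasm$ dimensions, with the constant-factor slack in the sample size $\frac{1}{\epsilon^2}\log\frac{\numasm}{\delta}$ absorbing the factor of $2$. The only cosmetic difference is that the paper phrases the concentration step for independent (not necessarily identically distributed) samples sharing the same mean, which it later exploits in a remark, whereas you assume i.i.d.\ samples; this does not change the argument.
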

\begin{proof}
By Chernoff-Hoeffding inequality, we have that 
for any sequence of $k$ independent random variables $\{r_i\}_{i=1}^k$ bounded in $[0,1]$ with the same mean $m$, we have
\begin{align*}
\Pr\left[\left|\frac{1}{k}\sum_{i=1}^k r_i - m\right|\geq \epsilon \right] \leq 2\exp(-2n\epsilon^2).
\end{align*}
Thus, with $\frac{1}{\epsilon^2}\cdot \log \frac{n}{\delta}$ samples, by union bound, we have that with probability at least $1-\delta$,
$\inftynorm{\mu-\priorMeanState} \leq \epsilon$.
\end{proof}
\paragraph{Remark:} In the proof of \cref{thm:sample}, we do not require the samples are drawn from i.i.d.\ distributions. 
Instead we only impose the constraint of independence with the same mean. 
This is particularly helpful 
if our estimate of the prior mean is from historical reports from different agents
as the distribution of reports may vary from agent to agent
as their abilities for acquiring information vary. 
However, all these distributions have the same mean by Bayesian plausibility.

Note that in the case the estimated mean is far from the prior mean, which occurs with probability at most $\delta$,
the loss in incentive for effort is at most $1$.
Combining \Cref{thm:infty norm error,thm:sample}, 
by setting $\delta=\epsilon$, we have the following corollary.
\begin{corollary}\label{cor:sample}
For any $\epsilon>0$,
any distribution $\marginalReport$ with prior mean $\priorMeanState$ 
in state space $\statespace=\bigtimes_{i=1}^{\numasm}[0,1]$, 
letting $\mu$ be the empirical mean with $\frac{1}{\epsilon^2}\cdot \log \frac{n}{\epsilon}$ samples, 
the expected incentive for effort of the V-shaped scoring rule for $\mu$ is at least that of the V-shaped scoring rule for $\mu_D$ less $4\epsilon$.
\end{corollary}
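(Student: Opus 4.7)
The plan is a straightforward expectation decomposition over a good event (the empirical mean is close to the true prior mean) and a bad event (it is not). First I would invoke \Cref{thm:sample} with the confidence parameter set to $\delta = \epsilon$. This gives that, using $\frac{1}{\epsilon^2}\log\frac{n}{\epsilon}$ samples, the empirical mean $\mu$ satisfies $\inftynorm{\mu - \priorMeanState} \leq \epsilon$ with probability at least $1 - \epsilon$. Call this event $E$.

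Next, conditional on $E$, I would apply \Cref{thm:infty norm error} directly: since $\inftynorm{\mu - \priorMeanState} \leq \epsilon$, the incentive for effort of the V-shaped scoring rule centered at $\mu$ is at least that of the V-shaped scoring rule centered at $\priorMeanState$ minus $3\epsilon$. So on the event $E$, the loss from using the sample-based scoring rule is at most $3\epsilon$.

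On the complementary (bad) event, which occurs with probability at most $\epsilon$, I would bound the loss by the trivial worst case. Since every V-shaped scoring rule under consideration is bounded by $\scorebound = 1$ and the incentive for effort is a difference of expected scores in $[0,1]$, the loss can be at most $1$ in absolute value.

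Taking expectations, the expected loss in incentive for effort from using $\mu$ instead of $\priorMeanState$ is at most $(1-\epsilon)\cdot 3\epsilon + \epsilon \cdot 1 \leq 3\epsilon + \epsilon = 4\epsilon$, which is exactly the claimed bound. There is no real obstacle here: the work has already been done in \Cref{thm:infty norm error,thm:sample}, and the only step to justify carefully is the worst-case bound of $1$ on the bad event, which is immediate from the boundedness of the scoring rule in $[0,1]$.
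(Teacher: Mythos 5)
Your proposal is correct and is essentially identical to the paper's argument: the paper also combines \cref{thm:infty norm error} and \cref{thm:sample} with $\delta=\epsilon$, charging at most $3\epsilon$ on the event $\inftynorm{\mu-\priorMeanState}\leq\epsilon$ and at most $1$ on the complementary event of probability at most $\epsilon$, giving the $4\epsilon$ bound. Your only added care, justifying the worst-case loss of $1$ via the $[0,1]$ score bound, is exactly the observation the paper makes just before stating the corollary.
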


\section{Relating Eliciting the Full Distribution and Eliciting the Mean}
\label{sec:full dist}
In this section, we will show that, with respect to
optimization and approximation, the problems of eliciting the full
posterior distribution over a finite state space can be reduced to problems
of eliciting the mean of a multi-dimensional state space.

We use the characterization of the proper scoring rule for eliciting the full distribution shown in \Cref{thm:characterization of report full dist}. Similar to \Cref{thm:interim utility on the boundary}, 
if the scoring rule is bounded, then the utility function $\util$ in \Cref{thm:characterization of report full dist} is bounded and continuous. 
The proof of continuity is the same as \Cref{thm:interim utility on the boundary} and hence omitted here.

Note that there is no function $\kappa(\state)$ in the characterization of \Cref{thm:characterization of report full dist}. 
The reason is that here for any finite state space $\statespace$, 
any scoring rule 
$\score(\reportFull, \state)
= \util(\reportFull) +\sg(\reportFull) \cdot (\state-\reportFull)
+ \kappa(\state),$
there exists another convex function $\hat{\util}$
such that 
$\score(\reportFull, \state)
= \hat{\util}(\reportFull) +\sg(\reportFull) \cdot (\state-\reportFull),$
where $\sg(\reportFull) \in \subgradient \hat{\util}(\reportFull)$ is a subgradient of $\hat{\util}$.
The objective value for reporting the full distribution 
with distribution $\marginalReport$ and scoring rule $\score$ is
\begin{align*}
\objfunc(\util,\marginalReport)
=\expect{\posterior \sim \distoverposterior, \state \sim \posterior}{\score(\reportFull, \state) - \score(\statedist, \state)}
=\int_{\reportspace}\left[ \util(\reportFull)-\util(\statedist)\right]
\distoverposterior(\posterior) \ \textrm{d}\posterior.
\end{align*}

Thus the form of the objective function for reporting the full
distribution coincides with the objective function for reporting the
mean.  Moreover, it is easy to verify that the bounded constraint
coincides as well.  This result follows because distributions with
finite state space $\statespace$ can be viewed as
$|\statespace|$-dimensional perfectly negatively correlated
distributions with Bernoulli marginals.  One important
property of Bernoulli distributions is that reporting the full
distribution is equivalent to reporting the mean of the distribution.
Since reporting the full distribution and reporting the mean have the
same characterization in this case, by viewing the distribution as
$|\statespace|$-dimensional correlated distribution, we have the
following theorem.

\begin{proposition}\label{thm:reduce full to mean}
For any finite state space $\statespace$, report space $\reportspace = \conv(\statespace)$,
and any distribution $\marginalReport \in \reportspace$ over posteriors, 
scoring rule $\score$ is optimal for eliciting the full distribution 
if and only if 
it is optimal for eliciting the mean.
\end{proposition}

\section{Computing the Optimal Scoring Rule}
\label{sec:computation efficiency for mean}

We adopt an approach from \citet*{BCKW-15} and show that when the state
space and the support of the posterior means are finite, there exists
a polynomial time algorithm that solves the optimal scoring rule for
eliciting the marginal means of a posterior.
\begin{theorem}\label{thm:computing optimal score for mean}
Given any $n$-dimensional state space $\statespace$ with
$|\statespace|=d$ states and any distribution~$\marginalReport$ with support
size $m$ over posterior means, there exists an algorithm that computes
the optimal proper bounded scoring rule for eliciting the mean in time
polynomial in $n$, $m$, and $d$.
\end{theorem}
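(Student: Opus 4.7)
The plan is to recast \cref{eq:program} as a linear program (LP) of polynomial size, following the general approach of \citet{BCKW-15}. Although $\reportspace = \conv(\statespace)$ contains infinitely many points, the objective $\int_\reportspace \util(\report)\marginalReport(\report)\,\de\report - \util(\priorMean)$ only depends on $\util$ at the $m$ support points of $\marginalReport$ and at $\priorMean$, while the boundedness constraint only needs to be verified against the $d$ states in $\statespace$. Thus the essential content of the program can be captured by the values of $\util$ and its subgradients at a polynomial set of key points.

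Concretely, I would take the key points to be the union $\{x^{(k)}\}$ of the support of $\marginalReport$, the states of $\statespace$, and $\priorMean$, and introduce LP variables $u^{(k)} \in \reals$ and $\xi^{(k)} \in \reals^{\numasm}$ for the utility value and a subgradient at each key point. The LP then imposes: (i) pairwise convexity $u^{(k)} \geq u^{(\ell)} + \xi^{(\ell)} \cdot (x^{(k)} - x^{(\ell)})$ for every ordered pair $(k,\ell)$; (ii) boundedness $u^{(j)} - u^{(k)} - \xi^{(k)} \cdot (\theta^{(j)} - x^{(k)}) \leq \scorebound$ for each key point $x^{(k)}$ and each state $\theta^{(j)}$, where $u^{(j)}$ denotes the utility variable at $\theta^{(j)}$; and (iii) the normalization $u^{(k)} = 0$ at the key point $x^{(k)} = \priorMean$. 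The objective becomes the linear functional $\sum_i \marginalReport(r^{(i)})\, u^{(i)}$ over support points of $\marginalReport$. With $O((m+d)\numasm)$ variables and $O((m+d)^2)$ constraints, any polynomial-time LP algorithm (e.g., the ellipsoid method) solves this LP in time polynomial in $\numasm$, $m$, and $d$.

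The substantive step, and the main technical obstacle, is to show that LP feasibility coincides with feasibility of \cref{eq:program} at the same objective value. The forward direction is immediate from the definition of subgradient. For the converse, given an LP solution I would extend to $\reportspace$ via the pointwise maximum $\util(\report) = \max_k [u^{(k)} + \xi^{(k)} \cdot (\report - x^{(k)})]$ of the affine functions supported at the key points; this is automatically convex, and constraint (i) guarantees $\util(x^{(k)}) = u^{(k)}$ with $\xi^{(k)}$ a valid subgradient at each key point. At an arbitrary $\report$, I would set $\sg(\report)$ to be the gradient $\xi^{(k(\report))}$ of an active affine piece at $\report$. For any state $\theta^{(j)}$ the boundedness expression then equals $\util(\theta^{(j)}) - u^{(k(\report))} - \xi^{(k(\report))}\cdot(\theta^{(j)} - x^{(k(\report))})$; since constraint (i) forces $u^{(j)} \geq \util(\theta^{(j)})$, this quantity is dominated by $u^{(j)} - u^{(k(\report))} - \xi^{(k(\report))}\cdot(\theta^{(j)} - x^{(k(\report))}) \leq \scorebound$ by (ii). Finiteness of $\statespace$ means every state is covered, completing the extension and hence the theorem.
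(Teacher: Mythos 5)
Your proposal is correct and essentially identical to the paper's proof: the paper's LP (\cref{eq:LPmean}) uses allocation/payment variables $(\alloc_i,\pay_i)$, which are just a linear reparametrization of your $(u^{(k)},\xi^{(k)})$ via $\xi^{(k)}=\alloc_k$ and $u^{(k)}=\alloc_k\cdot x^{(k)}-\pay_k$ (through \Cref{lem:alloc and pay}), and its constraints are exactly your pairwise convexity, boundedness-against-states, and normalization constraints, with the same polynomial size. The paper's converse direction likewise extends an LP solution to all of $\reportspace$ by the pointwise maximum of the affine pieces, takes an active piece's gradient as the subgradient, and checks the bound at each of the $d$ states, so the two arguments coincide.
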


To prove this theorem, we introduce a proposition stating the
equivalence of Bayesian auction design and the design of proper
scoring rules.  With this equivalence result, we can solve \Cref{eq:program} with finite reports using a linear program with
$(n+1)(m+d+1)$ variables and a quadratic number of constraints. 

\begin{proposition}\label{lem:alloc and pay}
A function $\utility$ is the utility function 
of a $\mu$-differentiable $\scorebound$-bounded proper scoring rule for eliciting the mean 
on report space $\reportspace=\conv(\statespace)$ 
and $n$-dimensional state space $\statespace$ 
if and only if there exists allocation and payment functions $\alloc(\cdot)$ and $\pay(\cdot)$ satisfying
\begin{enumerate}
\item Bayesian incentive compatible: 
$\alloc(\report)\cdot\report-\pay(\report)
\geq \alloc(\report')\cdot\report-\pay(\report')$,
for any report $\report, \report'\in \reportspace$;
\item bounded utility difference: 
$\alloc(\state)\cdot\state-\pay(\state)
\leq \scorebound+\alloc(\report)\cdot\state-\pay(\report)$,
for any report $\report \in \reportspace$
and state $\state\in\statespace$;
\item induced utility is 
$\utility(\report)=\alloc(\report)\cdot\report-\pay(\report)$
for any $\report \in \reportspace$.
\end{enumerate}
Note that the bounded utility difference property means the utility loss for misreporting $\report$ with true state $\theta$ is at most $\scorebound$.
\end{proposition}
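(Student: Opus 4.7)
The plan is to translate between the language of canonical scoring rules and that of a mechanism with an allocation-payment pair via the identification of the subgradient with the allocation. The main bridge is that both the proper-scoring-rule constraint and the BIC constraint are precisely the subgradient inequality for a convex utility function; and both the score-boundedness constraint (through \Cref{lem:canonical-bound}) and the bounded-utility-difference constraint are precisely the inequality $\util(\state) - \util(\report) - \sg(\report)\cdot(\state - \report) \leq \scorebound$.

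For the forward direction, I would start with a $\mu$-differentiable $\scorebound$-bounded proper scoring rule. By \Cref{thm:equal program} it is without loss to assume the scoring rule is canonical, so $\score(\report,\state) = \util(\report) + \sg(\report)\cdot(\state-\report) + \kappa(\state)$ with $\util$ convex and $\sg(\report) \in \subgradient \util(\report)$. I define $\alloc(\report) := \sg(\report)$ and $\pay(\report) := \sg(\report)\cdot\report - \util(\report)$. Property (3) is then immediate. Property (1) rearranges to $\util(\report) \geq \util(\report') + \sg(\report')\cdot(\report-\report')$, which is exactly the subgradient inequality and therefore holds by convexity of $\util$. For property (2), using $\Theta \subseteq \reportspace$ so $\alloc(\state),\pay(\state)$ are defined, the inequality rearranges to $\util(\state) - \util(\report) - \sg(\report)\cdot(\state-\report) \leq \scorebound$, which is exactly \Cref{lem:canonical-bound} after $\kappa$ is chosen to minimize the bound.

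For the reverse direction, I start with $\alloc,\pay$ satisfying (1)--(3) and define $\util(\report) := \alloc(\report)\cdot\report - \pay(\report)$. The BIC condition (1) rewrites as $\util(\report) \geq \util(\report') + \alloc(\report')\cdot(\report-\report')$ for every $\report,\report' \in \reportspace$, which simultaneously establishes convexity of $\util$ and that $\alloc(\report')$ is a subgradient of $\util$ at $\report'$. I then set $\sg(\report) := \alloc(\report)$ and define the canonical scoring rule $\score(\report,\state) = \util(\report) + \sg(\report)\cdot(\state-\report) + \kappa(\state)$, choosing $\kappa$ as in \Cref{lem:canonical-bound} to minimize the resulting score bound. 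This rule is proper by \Cref{prop:canonical-proper}. Property (2) translates, after substituting the definition of $\util$, into $\util(\state) - \util(\report) - \sg(\report)\cdot(\state-\report) \leq \scorebound$, which is exactly the inequality that certifies $\scorebound$-boundedness of $\score$ via \Cref{lem:canonical-bound}.

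The main technical obstacle is simply ensuring that both directions handle the boundary of $\reportspace$ correctly: the subgradient $\sg$ must be chosen consistently on $\partial \reportspace$ (which is where the extension of \citet{AF-12} in \Cref{app:continuity} is used via \Cref{thm:equal program}), and in the reverse direction one must verify that $\alloc(\state)$ for $\state \in \Theta$ is a legitimate subgradient of $\util$ at $\state$ (even when $\state$ lies on $\partial \reportspace$), so that the constraint (2) can be interpreted as the canonical bound inequality evaluated at $\state$ as a report. Once this boundary issue is handled, the remaining steps are purely algebraic rearrangements.
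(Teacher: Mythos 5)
Your proposal is correct and follows essentially the same route as the paper's proof: identify the allocation with a subgradient of the induced convex utility and set $\pay(\report)=\sg(\report)\cdot\report-\util(\report)$, so that BIC becomes the subgradient inequality (the paper cites \citet{roc-85} where you argue it directly) and the bounded-utility-difference condition becomes exactly the boundedness inequality $\util(\state)-\util(\report)-\sg(\report)\cdot(\state-\report)\leq\scorebound$, with the boundary choice of subgradients supplied by \Cref{thm:bounded_simple} (which you access through \Cref{thm:equal program}). No substantive gap.
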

\begin{proof}
  For the ``if'' direction:
if the allocation $\alloc$ and payment $\pay$ satisfies the above conditions, 
by \citet*{roc-85} and the Bayesian incentive compatibility, 
the utility function $\util$ is continuous and convex,
and $\xi(\report) = \alloc(\report)$ is a feasible subgradient of the utility function.
By the bounded utility difference, 
we have that 
\begin{align*}
\util(\state)-\util(\report) - \sg(\report)\cdot(\state - \report) 
&= \alloc(\state)\cdot\state-\pay(\state) 
- \alloc(\report)\cdot\report+\pay(\report)
- \alloc(\report)\cdot(\state-\report)\\
&= \alloc(\state)\cdot\state-\pay(\state) 
- \alloc(\report)\cdot\state+\pay(\report)
\leq \scorebound,
\end{align*}
which implies utility function $\util$ corresponds to a 
$\mu$-differentiable $\scorebound$-bounded proper scoring rule.

For the ``only if'' direction: given a utility function $\utility$
of a $\mu$-differentiable bounded proper scoring rule for eliciting the mean, 
by \Cref{thm:bounded_simple}, there exists a set of subgradients $\xi(\report) \in \partial \util(\report)$
such that
$$
\util(\state)-\util(\report) - \sg(\report)\cdot(\state - \report)\leq \scorebound 
$$
for any report $\report\in \reportspace$ and state $\state\in \statespace$. 
Setting the allocation as $\alloc(\report)=\sg(\report)$, 
and the payment as 
$\pay(\report)=
\report\cdot \sg(\report) - \utility(\report)$, 
it is easy to verify that this allocation and payment satisfy all three conditions above. 
\end{proof}

\begin{proof}[Proof of \Cref{thm:computing optimal score for mean}]
  Denote the finite set of state space as
  $\statespace=\{\theta_j\}^{d}_{j=1}$, Let the support of
  distribution~$\marginalReport$ over posterior means be
  $\{\report_i\}^m_{i=1}$.  Denote the probability that posterior mean
  $\report_i$ happens as $\marginalReport_i$.  For simplicity, denote
  $\report_0=\priorMean$ as the mean of the prior and
  $\report_{m+j}=\state_j$ as the report for pointmass distribution on
  states for any $j\in[d]$.  \Cref{eq:program} is equivalent
  to the following program.
\begin{align}\label[scoringruleprogram]{eq:LPmean}
    \max_{\{\alloc_i, \pay_i\}_{i\in \{0,\dots, m+d\}}}\, &\sum_{i\in[m]}  (\alloc_i\cdot\report_i-\pay_i)\,\marginalReport_i\\
    \text{s.t.}\quad&
    \alloc_0\cdot\report_0-\pay_0 = 0, \nonumber\\
    &\alloc_i\cdot\report_i-\pay_i\geq \alloc_{i'}\cdot\report_i-\pay_{i'}, \qquad\quad\;\;\forall i, i'\in \{0,\dots, m+d\},\nonumber\\
    & (\alloc_i\cdot\report_i-\pay_i)-( \alloc_{i'}\cdot\report_i-\pay_{i'})\leq \scorebound,\, \forall i \in\{m+1,\dots, m+d\}, \nonumber\\ &\qquad\qquad\qquad\qquad\qquad\qquad\qquad\quad\forall i'\in \{0,\dots, m+d\}.\nonumber
\end{align}
Note that \Cref{eq:LPmean} is a linear program with number of variables and constraints polynomial in $n$, $m$, and~$d$; 
and hence there exists a polynomial time algorithm that optimally solves it. 
Next we will formally prove the equivalence of \Cref{eq:program} and \Cref{eq:LPmean}.

For one direction: For any utility function $\utility$ that is a
feasible solution to \Cref{eq:program}, by \Cref{lem:alloc and pay},
there exists corresponding allocation and payment functions $\alloc$
and $\pay$.  Let the variables in \Cref{eq:LPmean} be
$\alloc_i=\alloc(\report_i)$, $\pay_i=\pay(\report_i)$, for any $i\in
\{0,\dots, m+d\}$.  It is easy to verify that this is a feasible
solution to \Cref{eq:LPmean} with the same objective value.

For the other direction: For any feasible solution $\{\alloc_i, \pay_i\}_{i\in \{0,\dots, m+d\}}$ to \Cref{eq:LPmean}, define the utility function
\begin{equation*}
    \utility(\report)=\max_{i\in \{0,\dots, m+d\}} \alloc_i\cdot\report-\pay_i
\end{equation*}
for any report $\report \in \reportspace$. We show that this utility function $\util$ satisfies \Cref{eq:program} and has the same objective value.
Obviously, the utility function $\utility$ is continuous and convex.
For any $i\in \{0,\dots, m+d\}$,
the utility function
$\utility(\report_i)=\alloc_i\cdot\report_i-\pay_i$
by the definition of Bayesian incentive compatibility, 
and hence the objective value of \Cref{eq:program} given by this utility $\utility$ equals the  objective value of \Cref{eq:LPmean}. 
Moreover, for any report $\report \in \reportspace$, 
letting $i'=\argmax_{i \in \{0,\cdots,m+d\}}\alloc_{i}\report-\pay_{i}$,
the allocation $\alloc_{i'}$ is a subgradient of the utility function $\utility(\report)$ at report~$\report$. Thus, we have for any state $\statej\in\statespace$
\begin{align*}
\utility(\statej)-\utility(\report)-\xi(\report)\cdot(\statej-\report)
&=(\alloc_{m+j}\cdot\statej-\pay_{m+j})-(\alloc_{i'}\cdot\report-\pay_{i'})-\alloc_{i'}\cdot(\statej-\report) \\
&=(\alloc_{m+j}\cdot\statej-\pay_{m+j})-(\alloc_{i'}\cdot \statej - \pay_{i'}) \leq \scorebound,
\end{align*}
where the last inequality holds by the bounded utility difference
property.  Therefore, utility function $\util$ is a feasible solution
to \Cref{eq:program}, which establishes the equivalence of two
programs.
\end{proof}

In \Cref{sec:full dist}, we gave a reduction from the problem of
optimal scoring rules for eliciting the full distribution over a
finite state space to the problem of optimal scoring rules for
eliciting the marginal means over a multi-dimensional state space.  This
reduction is based on representing the state space by an indicator
vector.  In this section, we observe that the optimal scoring
rule can be found in polynomial time when the distribution of
posteriors is given explicitly.  This result is a simple corollary of
\Cref{thm:reduce full to mean} and \Cref{thm:computing optimal score
  for mean}.  
  

\begin{corollary}\label{thm:optimal computation}
Given any finite state space $\statespace$ with $|\statespace|=d$ 
and any distribution~$\marginalReport$ with support size $m$ over posteriors, 
there exists an algorithm that computes the optimal proper bounded scoring rule for eliciting the full distribution in time polynomial in $m$ and $d$.
\end{corollary}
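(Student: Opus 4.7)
The plan is to chain together the two results cited before the statement: the reduction from eliciting the full distribution to eliciting a multi-dimensional mean (\Cref{thm:reduce full to mean}), and the polynomial-time algorithm for computing the optimal scoring rule for eliciting a mean (\Cref{thm:computing optimal score for mean}).

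First, I would invoke the reduction. Given the finite state space $\statespace$ with $|\statespace|=d$, represent each state by its indicator vector in $\{0,1\}^d \subset \reals^d$. Under this embedding, the original state space becomes a $d$-dimensional state space $\statespace'$ with exactly $d$ (extreme) states, and the report space is the probability simplex $\conv(\statespace')$. The distribution $\marginalReport$ over posterior distributions on $\statespace$ maps to a distribution over posterior means in this $d$-dimensional space, with the same support size $m$. By \Cref{thm:reduce full to mean}, the optimal proper bounded scoring rule for eliciting the full distribution on $\statespace$ corresponds to (and has the same objective value as) the optimal proper bounded scoring rule for eliciting the mean on this augmented $d$-dimensional instance.

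Second, I would apply \Cref{thm:computing optimal score for mean} to this augmented instance with dimension $n=d$, number of states $d$, and posterior-mean support size $m$. That theorem produces the optimal scoring rule in time polynomial in $n$, $d$, and $m$; since $n = d$ here, the overall running time is polynomial in $d$ and $m$. Translating back via the reduction recovers the optimal scoring rule for eliciting the full distribution.

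Since both ingredients are already proved earlier in the paper, there is no real obstacle; the only thing to double-check is that the indicator-vector embedding preserves both the boundedness constraint (the score bound $\scorebound$ transfers directly) and the distributional support size, so that the input sizes fed into \Cref{thm:computing optimal score for mean} remain $O(d)$ and $O(m)$, yielding the claimed polynomial dependence.
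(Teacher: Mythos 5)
Your proposal is correct and follows essentially the same route as the paper, which simply combines \Cref{thm:reduce full to mean} with \Cref{thm:computing optimal score for mean}; your extra remarks on the indicator-vector embedding and the resulting input sizes ($n=d$, $d$ states, support $m$) just make explicit what the paper leaves implicit.
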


\begin{proof}
  This result follows from combining \Cref{thm:reduce full to mean}
  (the reduction from full distribution reporting to reporting the
  mean) and \Cref{thm:computing optimal score for mean} (polynomial
  time computation of the optimal scoring rule for the mean).
\end{proof}

\section{Eliciting the Mean with an Expected Score Bound}
\label{apx:bound expect score}
In this section, we provide the optimal scoring rule for eliciting the single-dimensional mean under a boundedness constraint on the expected score. We consider the following optimization program: 

\begin{align}
\max_\score \qquad 
&\expect{\posterior \sim \distoverposterior, \state \sim \posterior}{\score(\mean{\posterior}, \state) - \score(\priorMean, \state)}\label[scoringruleprogram]{eq:program-expectedbound}\\
\text{s.t.}\qquad &\score \text{ is a proper scoring rule for eliciting the mean},\nonumber\\
&\score \text{ is non-negative
in space $\reportspace \times \statespace$},\nonumber\\
& \expect{\posterior \sim \distoverposterior, \state \sim \posterior}{\score(\mean{\posterior}, \state)} \text{is upper bounded by $B$.} \nonumber 
\end{align}

We consider this optimization program with restriction to canonical scoring rules. By Definition~\ref{def:canonical}, we write the single dimensional version of this optimization program as follows: 
\begin{align}
\max_\util \qquad &\int_{\reportspace} \util(\report)\marginalReport(\report)
\ \textrm{d}\report \label[scoringruleprogram]{eq:expect-budget-1d}\\
\text{s.t.} \qquad 
&\util\text{ is a continuous and convex function,}\nonumber\\ 
&\util(\priorMean) = 0,\nonumber\\
&\util(\report)
+ \util'(\report) \cdot (\state-\report) + \kappa(\state) \geq 0, 
\quad \forall \report \in [0,1], \state\in [0,1],\nonumber\\
&\expect{\posterior\sim\distoverposterior, \state\sim\posterior}{\util(r)+\kappa(\state)}\leq 1. \nonumber
\end{align}

\begin{theorem}\label{thm:1d-expected-score-bound}
The optimal solution for \Cref{eq:program-expectedbound} is V-shaped.
\end{theorem}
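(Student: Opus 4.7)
The plan is to mirror the proof of \cref{thm:1d_opt}, showing that for any feasible $\util$ the V-shape at $\priorMean$ through $(0, \util(0))$, $(\priorMean, 0)$, $(1, \util(1))$, after rescaling, weakly improves the objective. First, since $\util(\priorMean)=0$, the objective reduces to $\expect{\report \sim \marginalReport}{\util(\report)}$. For any fixed $\util$, the tightest $\kappa$ satisfying the non-negativity constraint in \cref{eq:expect-budget-1d} is
\begin{align*}
\kappa_{\util}(\state) = -\min_{\report \in [0,1]}\bigl[\util(\report) + \util'(\report)(\state - \report)\bigr];
\end{align*}
by convexity, the inner expression has derivative $\util''(\report)(\state - \report)$ in $\report$, so it is maximized at $\report = \state$ and attains its minimum at an endpoint, giving $\kappa_{\util}(\state) = \max(-\util(0) - \util'(0)\state,\ -\util(1) - \util'(1)(\state - 1))$.

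Next, given convex $\util$ with $\util(\priorMean)=0$, I would define the V-shape $\tilde{\util}$ at $\priorMean$ with left slope $a = -\util(0)/\priorMean$ and right slope $b = \util(1)/(1-\priorMean)$. Convexity of $\util$ implies $b \geq a$ (the subgradient at $\priorMean$ lies in $[a,b]$) and yields two dominance properties: (i) $\tilde{\util} \geq \util$ pointwise on $[0,1]$, since on each of $[0,\priorMean]$ and $[\priorMean,1]$ the V-shape is the chord of $\util$; and (ii) $\kappa_{\util} \geq \kappa_{\tilde{\util}}$ pointwise on $[0,1]$, which follows from the secant-slope bounds $\util'(0) \leq a$ and $\util'(1) \geq b$ plugged into the endpoint formula above.

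Using the mean-balance identities $\expect{\report \sim \marginalReport}{\maxzero{\report - \priorMean}} = \expect{\report \sim \marginalReport}{\maxzero{\priorMean - \report}} =: M_{\marginalReport}$ and the analogous $M_{\statedist}$ for $\statedist$, direct computation yields $\expect{\report \sim \marginalReport}{\tilde{\util}(\report)} = (b-a)\, M_{\marginalReport}$ and $\expect{\state \sim \statedist}{\kappa_{\tilde{\util}}(\state)} = (b-a)\, M_{\statedist}$. Combining the dominance claims with the expected-score constraint of \cref{eq:expect-budget-1d}, the objective of any feasible $\util$ satisfies both $\expect{\report \sim \marginalReport}{\util(\report)} \leq (b-a)\, M_{\marginalReport}$ and $\expect{\report \sim \marginalReport}{\util(\report)} \leq 1 - (b-a)\, M_{\statedist}$. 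Taking the minimum and maximizing the resulting bound over $b-a \geq 0$ gives $\expect{\report \sim \marginalReport}{\util(\report)} \leq M_{\marginalReport}/(M_{\marginalReport} + M_{\statedist})$; the V-shape with $b-a = 1/(M_{\marginalReport} + M_{\statedist})$ achieves this upper bound with its expected score equal to $1$, so this V-shape is optimal.

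The main obstacle is establishing $\kappa_{\util} \geq \kappa_{\tilde{\util}}$ cleanly: the key step is reducing the minimum over $\report$ to the two endpoints via convexity, and then invoking the tangent/chord-slope inequalities $\util'(0) \leq a$ and $\util'(1) \geq b$. The rest is linear bookkeeping, and as in \cref{thm:1d_opt} the particular split of $b-a$ between $a$ and $b$ is immaterial for feasibility and the objective.
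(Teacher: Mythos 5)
Your proposal is correct, but it follows a genuinely different route from the paper's. The paper argues by an objective-preserving exchange: it replaces a feasible $\util$ with the V-shape whose two slopes are chosen so that the conditional contributions $\int_0^{\priorMean}\util(x)\marginalReport(x)\,\de x$ and $\int_{\priorMean}^1\util(x)\marginalReport(x)\,\de x$ are preserved, and then shows -- via the same endpoint-tangent reduction of the non-negativity constraint plus the comparisons $\util(0)\geq\tilde{\util}(0)$, $\util'(0)\leq\tilde{\util}'(0)$, $\util(1)\geq\tilde{\util}(1)$, $\util'(1)\geq\tilde{\util}'(1)$ -- that the \emph{same} function $\kappa$ remains valid, so the expected-score budget is untouched and the V-shape is feasible with exactly the same objective. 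You instead use the chord V-shape through $(0,\util(0))$, $(\priorMean,0)$, $(1,\util(1))$, which is generally \emph{not} feasible (it raises the expected score), and convert your two dominance facts ($\tilde{\util}\geq\util$ pointwise, and $\kappa_{\util}\geq\kappa_{\tilde{\util}}$ pointwise, hence $\expect{\state\sim\statedist}{\kappa(\state)}\geq (b-a)M_{\statedist}$ for any valid $\kappa$) into two upper bounds on the objective of every feasible solution, parameterized by the slope gap $b-a$; optimizing the bound over $b-a$ and exhibiting the V-shape with $b-a=1/(M_{\marginalReport}+M_{\statedist})$, whose expected score is exactly $1$, proves the theorem. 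Both arguments share the same two technical ingredients (reduction of the minimal $\kappa$ to the endpoint tangents by convexity of $\util$, and the secant-slope inequalities $\util'(0)\leq a$, $\util'(1)\geq b$), but the logical architecture differs: the paper's is a feasibility-preserving replacement, yours is bound-and-achieve. What your version buys is a closed form for the optimum, $\OPT = M_{\marginalReport}/(M_{\marginalReport}+M_{\statedist})$ with optimal slope gap $1/(M_{\marginalReport}+M_{\statedist})$, which the paper's proof does not supply; the small prices are that you need the non-degenerate case $M_{\marginalReport}+M_{\statedist}>0$ (otherwise the objective is identically zero and the claim is trivial) and, as in the paper's own write-up, the endpoint reduction is stated for differentiable $\util$ and should be read with subgradients in the general case.
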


To prove \Cref{thm:1d-expected-score-bound}, we show there is a feasible V-shaped utility function that gives the same objective.

\begin{proof}
Consider any feasible solution $\util$ of \Cref{eq:program-expectedbound}. We construct a V-shaped utility function $\tilde{\util}$ as follows:
\begin{align*}
    \tilde{\util}(\report) = 
    \begin{cases}
    -\int_0^{\priorMean} \util(x)\marginalReport(x)\mathrm{d}x / \int_0^{\priorMean} x\marginalReport(x)\mathrm{d}x \cdot (\report-\priorMean)& \text{for $\report \leq \priorMean$,}\\
    \int_{\priorMean}^1 \util(x)\marginalReport(x)\mathrm{d}x / \int_{\priorMean}^1 x\marginalReport(x)\mathrm{d}x  \cdot (\report-\priorMean)& \text{for $\report \geq \priorMean$.}
    \end{cases}
\end{align*}
This V-shaped utility function $\tilde{\util}$ has the same objective value as the utility function $\util$. We then show that this V-shaped utility function $\tilde{\util}$ is a feasible solution of \Cref{eq:program-expectedbound}. 

It is easy to see that $\tilde{\util}$ is a continuous and convex function and $\tilde{\util}(\priorMean) = 0$. 
We now show that there exists a function $\kappa$ such that the scoring rule defined by $\tilde{\util}$ and $\kappa$ is bounded in expectation and non-negative in space $\reportspace \times \statespace$.  
Since $\util$ is a feasible solution, there exists a function $\kappa$ such that function $\util$ and $\kappa$ satisfies constraints in \Cref{eq:expect-budget-1d}. Thus, we have for any $\theta \in [0,1]$
$$
\kappa(\theta) \geq \max_{r \in [0,1]} \{-u(r)-u'(r)\cdot(\theta-r)\}. 
$$
Since function $\util$ is convex, we have for any $\theta \in [0,1]$
$$
\max_{r \in [0,1]} \{-u(r)-u'(r)\cdot(\theta-r)\} = \max \{-u(1)-u'(1)\cdot(\theta-1), -u(0)-u'(0)\cdot\theta\}.
$$

We then show that $\util(0) \geq \tilde{\util}(0)$ and $\util'(0) \leq \tilde{\util}'(0)$. Note that the V-shaped utility function satisfies $\int_0^{\priorMean} \util(r)f(r)\mathrm{d}r = \int_0^{\priorMean} \tilde{\util}(r)f(r)\mathrm{d}r$.
If $\util(0) < \tilde{\util}(0)$, then by the convexity of function $u$, we have for any $\report \in [0,\priorMean]$ 
$$
u(r) \leq (1-\report/\priorMean)\util(0) <  (1-\report/\priorMean)\tilde{\util}(0) = \tilde{\util}(r),
$$
which contradicts with $\int_0^{\priorMean} \util(r)f(r)\mathrm{d}r = \int_0^{\priorMean} \tilde{\util}(r)f(r)\mathrm{d}r$. If $\util'(0) > \tilde{\util}'(0)$, then by the convexity of function $u$, we have $\util(r) > \tilde{\util}(r)$ for any $\report \in [0,\priorMean]$, which also contradicts with $\int_0^{\priorMean} \util(r)f(r)\mathrm{d}r = \int_0^{\priorMean} \tilde{\util}(r)f(r)\mathrm{d}r$.

Similarly, we have $\util(1) \geq \tilde{\util}(1)$ and $\util'(1) \geq \tilde{\util}'(1)$. Thus, we have for $\theta \in [0,\priorMean]$
$$
-\tilde{\util}(1) - \tilde{\util}'(1)\cdot (\theta - 1) = -\tilde{\util}'(1) \cdot(\theta-\priorMean) \leq -u'(1)\cdot(\theta-\priorMean) \leq -u(1)-u'(1)\cdot(\theta-1),
$$
where the last inequality is due to the convexity of function $u$. 
Similarly, we have for $\theta \in [\priorMean,1]$
$$
-\tilde{\util}(0) - \tilde{\util}'(0)\cdot \theta = -\tilde{\util}'(0) \cdot(\theta-\priorMean) \leq -u'(0)\cdot(\theta-\priorMean) \leq -u(0)-u'(0)\cdot\theta.
$$
Combining these two inequalities, we derive that
$$
\max \{-\tilde{\util}(1)-\tilde{\util}'(1)\cdot(\theta-1), -\tilde{\util}(0)-\tilde{\util}'(0)\cdot\theta\} \leq \max \{-u(1)-u'(1)\cdot(\theta-1), -u(0)-u'(0)\cdot\theta\},
$$
which means the same function $\kappa$ also satisfies constraints of \Cref{eq:expect-budget-1d} for the V-shaped utility function $\tilde{\util}$. Therefore, this V-shaped function $\tilde{\util}$ is also a feasible solution, which completes the proof.
\end{proof}


\end{document}